\definecolor{red}{rgb}{1,0.2,0.2}
\definecolor{green}{rgb}{0.2,1,0.5}
\definecolor{blue}{rgb}{0,0,0}
\definecolor{lightblue}{rgb}{0.3,0.5,1}
\newcommand{\blue}[1]{\textcolor{blue}{#1}}
\newcommand{\diag}{\mathtt{diag}}
\newcommand{\0}{\mathbf{0}}
\newcommand{\1}{\mathbf{1}}
\newcommand{\cb}{\mathbf{c}}
\newcommand{\hh}{\mathbf{h}}
\newcommand{\bsigma}{\boldsymbol{\sigma}}
\newcommand{\uu}{\mathbf{u}}
\newcommand{\vv}{\mathbf{v}}
\newcommand{\x}{{\mathbf{x}}}
\newcommand{\RN}{\mathbb{R}^N}
\newcommand{\Rn}{\mathbb{R}^n}
\newcommand{\Rnn}{\mathbb{R}^{n\times n}}
\newcommand{\cc}{\mathbf{c}}
\newcommand{\gb}{\mathbf{g}} 
\newcommand{\hb}{\mathbf{h}}
\newcommand{\pb}{\mathbf{p}}
\newcommand{\bs}{\mathbf{s}}
\newcommand{\sbold}{\mathbf{s}}
\newcommand{\T}{^{\mathsf{T}}}
\newcommand{\nT}{^{-\mathsf{T}}}
\newcommand{\tb}{\mathbf{t}}
\newcommand{\ub}{\mathbf{u}}
\newcommand{\xx}{{\mathbf{x}}}
\newcommand{\yb}{\mathbf{y}}
\newcommand{\zb}{\mathbf{z}}
\newcommand{\nnb}{\nonumber}
\newtheorem{thm}{Theorem}%
\theoremstyle{remark}
\newtheorem{rem}{Remark}
\newcommand {\cA}{{\mathcal{A}}}
\newcommand {\cP}{{\mathcal{P}}}
\newcommand {\cS}{{\mathcal{S}}}
\newcommand {\bp} {{\bf p}}
\newcommand {\bx} {{\bf x}}
\newcommand {\by} {{\bf y}}
\newcommand {\blambda} {\boldsymbol{\lambda}}
\newcommand {\bmu} {\boldsymbol{\mu}}
\newcommand {\btheta} {\boldsymbol{\theta}}
\newcommand{\avg}{{\rm avg}}
\newcommand {\N} {{\mathbb{N}}}
\newcommand {\R} {\mathbb{R}}
\newcommand {\C} {{\rm I\kern-5pt C}}
\newtheorem{lemma}{Lemma}
\newcommand{\beqa}{\begin{eqnarray}}
\newcommand{\eeqa}{\end{eqnarray}}
\newcommand{\beqan}{\begin{eqnarray*}}
	\newcommand{\eeqan}{\end{eqnarray*}}
\newcommand{\beq}{\begin{equation}}
\newcommand{\eeq}{\end{equation}}
\newcommand{\bfl}{\begin{flushleft}}
	\newcommand{\efl}{\end{flushleft}}
\newcommand{\myb}{\hspace{-0.1in}}
\newcommand{\myeq}{& \hspace{-0.1in} = & \hspace{-0.1in}}
\newcommand{\lb}{\nonumber \\}
\newcommand{\myarr}{\begin{array}{lll}}
	\newcommand{\mygeq}{& \myb \geq & \myb}
	\newcommand{\bitem}{\begin{itemize}}
		\newcommand{\eitem}{\end{itemize}}
	\newcommand{\benum}{\begin{enumerate}}
		\newcommand{\eenum}{\end{enumerate}}
	\newcommand{\lnorm}{\left| \left|}
	\newcommand{\rnorm}{\right| \right|}
	\newcommand{\myskip}{\\ \vspace{-0.1in}}
\newtcolorbox{mymathbox}[1][]{colback=white, sharp corners, #1}
\begin{document}
	
\title{Optimal Cybersecurity Investments in 
Large Networks Using
SIS Model: Algorithm Design}

\author{Van Sy Mai, Richard J. La, Abdella Battou\thanks{V.-S. Mai and A. Battou
are with the National Institute of Standards and
Technology (NIST), Gaithersburg, MD 20899, USA. 
Email: \{vansy.mai, abdella.battou\}@nist.gov. 
R.J. La
is with NIST and the University of Maryland, 
College Park, MD 20742, USA.

Any mention of commercial products in this paper is for information only; it does not imply recommendation or endorsement by NIST.}}
	

%

\IEEEtitleabstractindextext{%
\begin{abstract}
We study the problem of minimizing the (time) 
average security costs in large networks/systems 
comprising many interdependent subsystems, 
where the state evolution is captured by a susceptible-infected-susceptible (SIS) model. 
The security costs reflect 
security investments, economic losses
and recovery costs from infections and failures 
following successful attacks. We show that the
resulting optimization problem is nonconvex
and propose a suite of algorithms -- two based on
a convex relaxation, and the other two for finding
a local minimizer, based on a reduced gradient 
method and sequential convex programming. 
Also, we provide a sufficient condition under which 
the convex relaxations are exact and, hence, their
solution coincides with that of 
the original problem. Numerical 
results are provided to validate our analytical
results and to demonstrate the effectiveness of
the proposed algorithms.
\end{abstract}

\begin{IEEEkeywords}
Cybersecurity investments; Optimization; 
SIS model
\end{IEEEkeywords}}

\maketitle

\IEEEdisplaynontitleabstractindextext

%
\IEEEpeerreviewmaketitle

\IEEEraisesectionheading{\section{Introduction}\label{sec:Introduction}}

%
%
%
%


\IEEEPARstart{T}{oday}, many modern engineered systems, including 
information and communication networks and 
power systems, comprise many interdependent 
systems. For uninterrupted delivery of their services, 
the comprising systems must work together 
and oftentimes support each other.
Unfortunately, this interdependence among
comprising systems also introduces a source of 
vulnerability in that
it is possible for a local failure or
infection of a system by malware to spread 
to other systems, potentially 
compromising the integrity of the
overall system. Analogously, in social networks,
contagious diseases often spread from infected 
individuals to other vulnerable individuals 
through contacts or physical proximity.

From this viewpoint, it is clear that the 
underlying networks that govern the interdependence
among systems have a large impact on dynamics of 
the spread of failures or malware infections. 
Similarly, the topology and contact frequencies 
among individuals in social networks
significantly influence the manner in which 
diseases spread in societies. 
Thus, any sound
investments in security of complex 
systems or the control of epidemics
should take into account the interdependence
in the systems and social contacts 
in order to maximize potential benefits
from the investments. 

In our model, attacks targeting the systems
arrive according to some (stochastic) process. 
Successful attacks on the systems can also spread
from infected systems to other systems via
aforementioned dependence among the systems. 
The system operator decides appropriate security 
investments to fend off the attacks, which
in turn determine their breach probability, 
i.e., the probability that they fall victim to 
attacks and become infected. 

Our goal is to minimize
the (time) average costs of a system 
operator managing a large system 
comprising many systems, such as large 
enterprise intranets. The overall 
costs in our model account for both 
security investments and recovery/repair 
costs ensuing infections or failures, 
which we call {\em infection costs}
in the paper.
To this end, we first consider a scenario 
where malicious actors launch {\em external} 
or {\em primary} attacks. When a primary
attack on a system is
successful, the infected system can spread 
it to other systems, which we call {\em 
secondary} attacks by the infected systems, 
to distinguish them from primary 
attacks. When primary attacks do not stop, 
it is in general not possible to achieve 
an infection-free state at steady state. 

In the second case, we assume that there 
are no primary attacks and examine the
steady state, starting with an initial 
state where some systems are infected. 
The goal of studying this scenario is to
get additional insights into scenarios
where the primary attacks occur
infrequently. It turns out that, 
even in the absence of primary attacks, 
infections may persist due to secondary
attacks and the system may not be able
to attain the infection-free steady 
state, and we can compute an upper bound
on the optimal value, which is also
tight under some condition, more easily. 

We formulate the problem of determining the 
optimal security investments that minimize 
the average costs as an optimization problem. 
Unfortunately, this optimization problem is
nonconvex and 
cannot be solved easily. In order to 
gauge the quality of a feasible solution, 
we obtain both a lower bound and an upper
bound on the optimal value of our problem. 
A lower bound can be acquired using one
of two different convex relaxations of the 
original problem we propose. For the 
convex relaxations, we also derive a sufficient 
condition under which the solution of the 
convex relaxation solves the original 
nonconvex optimization problem
(Lemma~\ref{corExactRelaxation}). 
An upper bound on the optimal value can 
be obtained using an algorithm that finds
a local minimizer. Here, we propose two
methods -- 
a reduced gradient method (RGM)
and sequential convex programming (SCP), both 
of which produce a local minimizer. 
Together, our approach offers 
a bound on the optimality gap. 

Numerical studies show that the computational
requirements for the proposed methods are 
light to modest even for large systems, except
for one method, which requires the calculation 
of an inverse matrix. They suggest
that, in almost all cases that we considered, the 
gap between the lower bound on the optimal 
value and the cost achieved by our solutions
is small; in fact, in most cases, 
the gap is less than
2-3 percent with the gap being less than 0.3
percent in many cases. 
In addition, when the infection 
costs are large, which are likely true in many 
practical scenarios, the sufficient condition 
for the convex relaxations to be exact
holds, and we obtain optimal points by solving
the convex relaxations. 
Finally, the RGM is 
computationally most efficient (with the
computational time being less than two
seconds in all considered cases and less
than 0.1 seconds in most cases) and 
the quality of solutions is on par with
that of other methods. This suggests that
the RGM may offer
a good practical solution for our problem.

\subsection{Related Literature}
    \label{subsec:Related}

\blue{Given the importance of cybersecurity,
robustness of complex systems, and control
of epidemics, there is already a large 
body of literature that examines how
to optimize the (security) investments in 
complex systems~\cite{Jiang11, LelargeBolot-NetEcon08}, 
the mitigation of disease or infection
spread~\cite{Cohen-PRL03, Gourdin-DRCN11, 
Nowzari-TCNS17}, 
feasibility and case studies of cyber 
insurance using pre-screening or differentiated
pricing based on the security investments of
the insured~\cite{Pal19, Khalili18}, and
designing good attack models and 
effective mitigating defense 
against attacks~\cite{Miehling18, Manadhata11, Sheyner03}.
In view of the volume of existing literature, here we 
summarize only a small set of studies most closely 
related to our study.}

In \cite{Khalili-NetCon19, La-ToN16, 
LelargeBolot-NetEcon08, Hota-TCNS18}, 
the authors adopted
a game theoretic formulation to
study the problem of security investments 
with distributed agents or autonomous 
systems that do not coordinate
their efforts. \blue{The problem 
we study in this paper is complementary, but 
is very different from
those studied in the aforementioned studies: 
in our setting, we assume that the 
system is managed by a {\em single}
operator and is interested in minimizing the
average (security) costs over time by 
determining (nearly) optimal security investments. 
Our study is applicable to, for example, the
problem of finding suitable security investments
in large enterprise intranets supporting 
common business processes or supervisory control 
and data acquisition systems comprising many
subsystems.}

In another line of research, which is most
closely related to our study, researchers
investigated optimal strategies using
vaccines/immunization (prevention) 
\cite{Cohen-PRL03,Preciado-CDC13}, antidotes 
or curing rates (recovery) \cite{Borgs-RSA10,  
mai2018distributed, Ottaviano-JCN18} 
or a combination of
both preventive and recovery measures
\cite{Nowzari-TCNS17, Preciado-TCNS14}. 
For example, \cite{Preciado-CDC13}
studies the problem of partial vaccination
via investments at each individual 
to reduce the infection rates, with the
aim of maximizing the exponential decay 
rate to control the spread of an epidemic. 
Similarly, \cite{mai2018distributed} 
examines the problem of determining the
optimal curing rates for distributed
agents under different formulations. 
In particular, the last formulation
of the problem \cite{mai2018distributed}, 
for which only partial result is obtained, 
is closely related to a special case of
our formulation studied in 
Section~\ref{sec:Case_lambda0}.

Key differences between existing studies, 
including those listed above, and ours
can be summarized as follows: 
first, unlike previous studies that
focus on either the expected costs from
single or cascading failures/infections 
\cite{LelargeBolot-NetEcon08, Kunreuther03, 
La-ToN16} or the exponential decay rate to 
the disease-free state as a key performance 
metric, we aim to minimize
the (time) average costs of a system 
operator, while accounting for both 
security investments and infection costs, 
with both primary and secondary attacks, 
by modeling time-varying states of systems
due to the transmissions of failures/infections.
In the presence of 
primary attacks, it is in general not 
possible to achieve the infection-free
steady state and, thus, the exponential
decay rate is no longer a suitable 
performance metric for our study. 
Second, unlike some studies that assume
that the expected costs/risks seen by
systems are convex functions of security 
investments
(e.g., \cite{Jiang11}), the expected risks 
are derived from the steady state 
equilibrium of a differential system that 
describes system states and depends on 
security investments. As it will be clear, 
the lack of a closed-form expression for the
steady state equilibrium complicates 
considerably the
analysis and algorithm design. 

Preliminary results of this paper were reported  
in \cite{mai2020optimal}. In this paper, we 
extend the findings of \cite{mai2020optimal}
in several significant directions. 
First, we offer an alternative convex
relaxation of the original problem. As
we demonstrate, this new relaxation 
technique based on exponential cones, is
more efficient and avoids the key issues
of the approach based on M-matrix theory 
presented in \cite{mai2020optimal}. 
Second, we present another computationally
efficient approach to finding a suboptimal 
solution using SCP together with 
our M-matrix theory-based
approach, which provides an upper 
bound on the optimal value of the original 
nonconvex optimization problem. We compare
this approach to one based on the RGM
and show that the quality 
of solutions from these two methods is 
comparable, but the RGM
holds a slight computational edge. 
Finally, we study a special case with no
primary attacks, which is related to the
epidemic control problem studied in 
\cite{Mieghem-TON09, mai2018distributed, 
Ottaviano-JCN18}. 
Although this can be viewed as 
a limit case of our problem formulation
as the rates of primary attacks go to zero, 
our approaches to obtaining upper and lower 
bounds of the optimal value require 
significant modifications for the reason
explained in Section~\ref{sec:Case_lambda0}. 
Moreover, we derive sufficient conditions for 
optimality, which can be verified relatively
easily.

The rest of the paper is organized as follows:
Section~\ref{sec:Prelim} explains the notation 
and terminology we adopt. Section~\ref{sec:Formulation}
describes the setup and the problem formulation, 
including the optimization problem. 
Section~\ref{sec:LowerBound} discusses two
different convex relaxations of the original problem, followed by two methods
for finding local minimizers in Section
\ref{sec:UpperBound}. We discuss a special case 
with no primary attacks in Section
\ref{sec:Case_lambda0}. Numerical results are 
provided in Section~\ref{sec:Numerical}, \blue{
followed by a discussion on how our formulation and
results can be extended 
in Section~\ref{sec:Discussion}.} We 
conclude in Section~\ref{sec:Conclusion}.

\section{Preliminaries}
	    \label{sec:Prelim}
		    
\subsection{Notation and Terminology}
    \label{subsec:Notation}

Let $\mathbb{R}$ and $\mathbb{R}_+$ denote the 
set of real numbers and nonnegative real 
numbers, respectively. Given a set $\mathbb{A}$, 
we denote the closure, interior, and boundary of
$\mathbb{A}$ by cl$(\mathbb{A})$, int$(\mathbb{A})$, 
and $\partial \mathbb{A}$, respectively. 

For a matrix $A=[a_{i,j}]$, let $a_{i,j}$ denote 
its $(i,j)$ element, $A\T$ its transpose, 
$\rho(A)$ its spectral radius, 
and $\underline{\sigma}(A)$ and $\bar{\sigma}(A)$ the 
smallest and largest real parts of its eigenvalues. 
For two matrices $A$ and $B$, we write $A \!\ge\! B$ if 
$A\!-\!B$ is a nonnegative matrix. 
%
We use boldface letters to denote vectors, e.g., 
$\x \!=\! [x_1,..., x_n]\T$ and 
$\1 \!=\! [1,...,1]\T$. 
For any two vectors $\xx$ and $\yb$ of the same 
dimension, $\xx \circ \yb$ and $\frac{\xx}{\yb}$ are 
their element-wise product and division, respectively. 
For $\xx \in \mathbb{R}^n$, $\diag(\xx) \in  
\mathbb{R}^{n \times n}$ denotes the diagonal matrix 
with diagonal elements $x_1,\ldots,x_n$.

A directed graph $\mathcal{G} \!=\! (\mathcal{V}, 
\mathcal{E})$ consists of a set of nodes $\mathcal{V}$, 
and a set of directed edges $\mathcal{E}  \subseteq  
\mathcal{V} \times \mathcal{V}$. A directed path is a 
sequence of edges in the form $\big( (i_1, i_2), 
(i_2, i_3),..., (i_{k-1}, i_k) \big)$. The graph 
$\mathcal{G}$ is strongly connected if there is a 
directed path from each node to any other node.
\\ \vspace{-0.1in}

\subsection{M-Matrix Theory} 
	\label{subsec:M-matrix}

A matrix $A \in \R^{n \times n}$ is an 
M-matrix if it can be expressed in the form $A = 
sI - B$, where $B \in \Rnn_+$ and 
$s \ge \rho(B)$. The set of (nonsingular) $n\times n$ 
M-matrices is denoted by ($\mathbb{M}^{n\times n}_+$) 
$\mathbb{M}^{n\times n}$. Note that this definition 
implies that the off-diagonal elements of $A$ are 
nonpositive and the diagonal elements are nonnegative; 
any matrix satisfying these conditions is called a 
Z-matrix. 
We shall make use of the following results on the 
properties of a nonsingular M-matrix \cite{plemmons1977m}. 

\begin{lemma}\label{lem_M_Matrix}
Let $A\in\mathbb{R}^{n\times n}$ be a Z-matrix. Then, 
$A \in \mathbb{M}^{n\times n}_+$ 
if and only if one of the following conditions holds: 

\begin{itemize}

\item[(a)] $A + D$ is nonsingular for every diagonal $D\in
\Rnn_+$. 

\item[(b)] $A$ is inverse-positive, i.e., $\exists A^{-1} 
\in \R^{n \times n}_+$. 

\item[(c)] $A$ is monotone, i.e., $A\xx \ge 0 \Rightarrow 
\xx \ge 0, \forall \xx \in \Rn$.

\item[(d)] Every regular splitting of $A$ is convergent, 
i.e., if $A = M-N$ with $M^{-1}, N \in \R^{n \times n}_+$, 
then $\rho(M^{-1}N) <1$.

\item[(e)] $A$ is positive stable, i.e., 
$\underline{\sigma}(A) >0$. 

\item[(f)] $\exists \xx>0$ with $A\xx \ge 0$ such that
if $[A\xx]_{i_0} = 0$, then 
$\exists i_1,\!..., i_r$ with 
$[A\xx]_{i_r} \!\!>\! 0$ and 
$a_{i_k, i_{k\!+\!1}} \!\!\neq\! 0, \forall k\!\in\! [0,r\!-\!1]$.


\item[(g)] $\exists \xx>\0$ with $A\xx > \0$. 

\end{itemize}
\end{lemma}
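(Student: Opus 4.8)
The plan is to anchor every listed condition to the defining inequality $s > \rho(B)$, where I write $A = sI - B$; fixing any $s > \max_i a_{i,i}$ makes $B = sI - A$ nonnegative with positive diagonal, and, crucially, the directed graph induced by the off-diagonal nonzeros of $B$ coincides with the graph used in (f), since $b_{i,j} > 0 \iff a_{i,j} \neq 0$ for $i \neq j$. The equivalence with (e) is then immediate from the spectral mapping $\lambda \mapsto s - \lambda$: the eigenvalues of $A$ are $s-\mu$ for $\mu \in \sigma(B)$, and since $\mathrm{Re}(\mu) \le \rho(B)$ with $\rho(B)\in\sigma(B)$ by Perron--Frobenius, one gets $\underline{\sigma}(A) = s - \rho(B)$, so (e) holds iff $s > \rho(B)$ iff $A \in \mathbb{M}^{n\times n}_+$.

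Next I would run the core cycle Def $\Rightarrow$ (b) $\Rightarrow$ (c) $\Rightarrow$ (g) $\Rightarrow$ (f) $\Rightarrow$ Def. For Def $\Rightarrow$ (b), the Neumann series $A^{-1} = s^{-1}\sum_{k\ge 0}(B/s)^k$ converges (as $\rho(B/s)<1$) and is a sum of nonnegative matrices, so $A^{-1}\ge 0$. Monotonicity (c) follows from $\mathbf{x} = A^{-1}(A\mathbf{x})$, and (c) $\Rightarrow$ (g) follows by taking any $\mathbf{y}>0$ and $\mathbf{x}=A^{-1}\mathbf{y}$: since $A^{-1}\ge 0$ is nonsingular it has no zero row, so $\mathbf{x}>0$ while $A\mathbf{x}=\mathbf{y}>0$. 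The implication (g) $\Rightarrow$ (f) is trivial because the hypothesis of (f) is vacuous when $A\mathbf{x}>0$ has no zero component.

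The hard step, and the main obstacle, is (f) $\Rightarrow$ Def. Here I would argue by contradiction. I already know $\rho(B)\le s$ (from $B\mathbf{x}\le s\mathbf{x}$ and the standard bound $\rho(B)\le\max_i (B\mathbf{x})_i/x_i$ for $\mathbf{x}>0$), so suppose $\rho(B)=s$. Let $\mathbf{w}\ge 0$, $\mathbf{w}\neq 0$, be a nonnegative left Perron eigenvector, $\mathbf{w}^\top B = s\,\mathbf{w}^\top$ (this exists by Perron--Frobenius even when $B$ is reducible). Pairing with $\mathbf{x}$ gives $s\,\mathbf{w}^\top\mathbf{x} = \mathbf{w}^\top B\mathbf{x} \le s\,\mathbf{w}^\top\mathbf{x}$, with equality forcing $w_i=0$ at every index where $(A\mathbf{x})_i>0$; hence $\mathrm{supp}(\mathbf{w})\subseteq Z := \{i:(A\mathbf{x})_i=0\}$. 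Picking $i_0\in\mathrm{supp}(\mathbf{w})\subseteq Z$ and the path $i_0,\ldots,i_r$ guaranteed by (f), the column-$i_{k+1}$ identity $s\,w_{i_{k+1}} = \sum_i w_i b_{i,i_{k+1}} \ge w_{i_k} b_{i_k,i_{k+1}}$ with $b_{i_k,i_{k+1}}>0$ propagates positivity of $\mathbf{w}$ along the path, so $w_{i_r}>0$; but $(A\mathbf{x})_{i_r}>0$ contradicts $\mathrm{supp}(\mathbf{w})\subseteq Z$. Thus $\rho(B)<s$, which is Def. The delicate points I expect to spend the most care on are the reducible Perron--Frobenius existence of $\mathbf{w}$ and the bookkeeping that the support of $\mathbf{w}$ propagates along exactly the edges named in (f).

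Finally I would fold in (a) and (d) as direct equivalences with Def. For Def $\Rightarrow$ (a), writing $S := sI+D$ (positive diagonal, $S^{-1}\le s^{-1}I$) gives $A+D = S(I - S^{-1}B)$ with $0\le S^{-1}B \le s^{-1}B$, so $\rho(S^{-1}B)\le\rho(B)/s<1$ and $A+D$ is nonsingular; conversely, if $s\le\rho(B)$ then $D:=(\rho(B)-s)I\ge 0$ makes $A+D=\rho(B)I-B$ singular, so (a) $\Rightarrow$ Def. For (d), the canonical regular splitting $M=sI$, $N=B$ shows (d) $\Rightarrow \rho(B/s)<1 \Rightarrow$ Def, while Def $\Rightarrow$ (d) is Varga's comparison theorem: for any regular splitting $A=M-N$ one has $\rho(M^{-1}N)=\rho(A^{-1}N)/(1+\rho(A^{-1}N))<1$ once $A^{-1}\ge 0$, which is already established. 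Collecting these equivalences closes the loop and proves the lemma.
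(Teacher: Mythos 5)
The paper does not actually prove this lemma: it is quoted as a known catalogue of characterizations of nonsingular M-matrices and attributed to Plemmons' 1977 survey, so there is no in-paper argument to compare against. Your proposal is a correct, essentially self-contained reconstruction of the standard equivalences. The skeleton --- Def $\Leftrightarrow$ (e) via the spectral shift $\lambda \mapsto s-\lambda$, the cycle Def $\Rightarrow$ (b) $\Rightarrow$ (c) $\Rightarrow$ (g) $\Rightarrow$ (f) $\Rightarrow$ Def, and the two side equivalences with (a) and (d) --- covers all seven items, and the one genuinely delicate link, (f) $\Rightarrow$ Def, is handled correctly: a nonnegative left Perron vector $\mathbf{w}$ of $B$ exists even in the reducible case, the equality $\mathbf{w}^\top A\mathbf{x}=0$ does confine $\mathrm{supp}(\mathbf{w})$ to the zero set of $A\mathbf{x}$, and positivity of $\mathbf{w}$ propagates along the path supplied by (f) because $b_{i_k,i_{k+1}}=-a_{i_k,i_{k+1}}>0$ off the diagonal (and $b_{ii}=s-a_{ii}>0$ on it, with $s>0$), giving the contradiction. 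Two points worth tightening if you write this out in full, neither of which is a gap in correctness: (i) in (c) $\Rightarrow$ (g) you invoke $A^{-1}$ before it is available from (c) alone, so first record the one-line deduction that monotonicity forces $A$ nonsingular with $A^{-1}\ge 0$ (apply (c) to $\pm\mathbf{x}$ when $A\mathbf{x}=\mathbf{0}$, then to the columns of $A^{-1}$); (ii) Def $\Rightarrow$ (d) is the only implication that is not self-contained, since it imports Varga's regular-splitting comparison theorem $\rho(M^{-1}N)=\rho(A^{-1}N)/(1+\rho(A^{-1}N))$; that is a legitimate citation of a classical result, but you should either cite it explicitly or supply its short Perron--Frobenius proof. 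Given that the paper itself simply cites the lemma, your treatment is, if anything, more complete than the source.
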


The next result is a direct consequence of  
\cite[Thm.~2]{mai2018optimizing}.

\begin{lemma}\label{lemConvexityInvMMatrix}
Let $A \in \mathbb{M}^{n\times n}$ be irreducible. Then 
\begin{itemize}

\item[(i)] $\diag(\zb) + A \in \mathbb{M}^{n\times n}_+$ 
for every $\zb \in \Rn_+\setminus\{\0\}$.

\item[(ii)] $\big[\big( \diag(\zb) + A \big)^{-1}\big]_{i,j}$  
is a convex and decreasing function in $\zb \in 
\mathbb{R}^n_+$ for all $1 \le i,j \le n$.

\end{itemize}%
\end{lemma}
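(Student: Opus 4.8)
For part (i), the plan is to verify one of the equivalent nonsingular-M-matrix characterizations in Lemma~\ref{lem_M_Matrix} for $M(\zb) := \diag(\zb) + A$. Adding the nonnegative diagonal $\diag(\zb)$ leaves all off-diagonal entries of $A$ unchanged, so $M(\zb)$ is still a Z-matrix and the lemma applies. Writing $A = sI - B$ with $B \in \Rnn_+$, $s \ge \rho(B)$, irreducibility of $A$ makes $B$ irreducible (the off-diagonal supports coincide), so the Perron--Frobenius theorem yields a strictly positive eigenvector $\xx > \0$ with $B\xx = \rho(B)\xx$ and hence $A\xx = \mu\xx$, where $\mu := s - \rho(B) \ge 0$. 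Then $M(\zb)\xx = (\zb + \mu\1)\circ\xx \ge \0$, with $[M(\zb)\xx]_i > 0$ at every coordinate $i$ where $z_i > 0$ -- a nonempty set since $\zb \neq \0$. I would then invoke condition~(f): for any $i_0$ with $[M(\zb)\xx]_{i_0} = 0$, strong connectivity of the graph of $A$ (which has the same off-diagonal support as $M(\zb)$) provides a directed path from $i_0$ to a coordinate where $M(\zb)\xx$ is strictly positive. Hence $M(\zb) \in \mathbb{M}^{n\times n}_+$.

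For part (ii), set $P(\zb) := M(\zb)^{-1}$, which by part~(i) and Lemma~\ref{lem_M_Matrix}(b) is entrywise nonnegative on $\Rn_+\setminus\{\0\}$. Monotonicity is immediate: from $\partial M/\partial z_k = \e_k\e_k\T$ we get $\partial P/\partial z_k = -P\,\e_k\e_k\T\,P$, so $\partial [P]_{i,j}/\partial z_k = -[P]_{i,k}[P]_{k,j} \le 0$ for all $k$, i.e.\ each entry is nonincreasing in $\zb$.

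Convexity is the crux and the step I expect to be hardest. The direct Hessian computation gives $\partial^2 [P]_{i,j}/\partial z_k \partial z_l = [P]_{i,l}[P]_{l,k}[P]_{k,j} + [P]_{i,k}[P]_{k,l}[P]_{l,j}$, but the associated quadratic form $2\sum_{k,l} v_k [P]_{i,k}[P]_{k,l}[P]_{l,j} v_l$ couples row $i$ and column $j$ of $P$ and is not manifestly sign-definite, so this route stalls. Instead I would exploit positive stability of $M(\zb)$ (Lemma~\ref{lem_M_Matrix}(e)) to write $P(\zb) = \int_0^\infty e^{-M(\zb)t}\,dt$, and decompose $-M(\zb) = K - \diag(\zb)$ with $K := B - sI$ having nonnegative off-diagonal entries. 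A Dyson / path-sum expansion of the matrix exponential then represents $[e^{-M(\zb)t}]_{i,j}$ as a convergent sum over directed walks $i \to j$ of terms $w\,\exp(-\bangle{\boldsymbol\tau}{\zb})$, where the walk weight $w \ge 0$ and the accumulated occupation times $\boldsymbol\tau \ge \0$ are independent of $\zb$. Each factor $\exp(-\bangle{\boldsymbol\tau}{\zb})$ is an exponential of a linear function of $\zb$, hence convex, and is nonincreasing since $\boldsymbol\tau \ge \0$; nonnegative combinations preserve both properties, so $[e^{-M(\zb)t}]_{i,j}$ is convex and nonincreasing in $\zb$ for each fixed $t$. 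Integrating over $t \in [0,\infty)$ then yields that $[P(\zb)]_{i,j}$ is convex and decreasing on $\Rn_+\setminus\{\0\}$ (and on all of $\Rn_+$ when $A$ is nonsingular, extending by continuity where finite), completing the argument.
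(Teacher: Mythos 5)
Your proof is correct, but it is worth noting that the paper does not actually prove this lemma: it is stated as a direct consequence of a cited result (\cite[Thm.~2]{mai2018optimizing}), so your argument is a genuinely self-contained alternative. Part (i) is exactly the style of argument the paper uses elsewhere (e.g., when verifying that $\diag(\zb)-B$ is a nonsingular M-matrix in Section~IV-A): Perron--Frobenius gives $\xx>\0$ with $A\xx=\mu\xx$, $\mu\ge 0$, and Lemma~\ref{lem_M_Matrix}-(f) plus strong connectivity of the off-diagonal support finishes it; this is fine, and when $\mu>0$ condition (g) already suffices. For part (ii), your observation that the direct Hessian computation yields a bilinear form $2\sum_{k,l}v_k[P]_{i,k}[P]_{k,l}[P]_{l,j}v_l$ that is not manifestly nonnegative is accurate, and your workaround via $M^{-1}=\int_0^\infty e^{-Mt}\,dt$ together with a Feynman--Kac/Dyson path expansion is sound: each walk contributes a nonnegative weight times $e^{-\langle\boldsymbol{\tau},\zb\rangle}$ with occupation times $\boldsymbol{\tau}\ge\0$, which is jointly convex and nonincreasing in $\zb$, and these properties survive the sum over walks, the integral over sojourn-time simplices (which you compress slightly -- for a fixed walk the occupation-time vector ranges over a simplex, so the representation is a sum plus an integral, not just a sum), and the integral over $t$. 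The more common route to the same conclusion is the discrete Neumann series: writing $\diag(\zb)+A=\diag(\w)-N$ with $N\ge 0$ off-diagonal and $\w=\zb+\mathrm{diag}(A)$, each walk term of $\sum_{m\ge 0}(\diag(\w)^{-1}N)^m\diag(\w)^{-1}$ is a nonnegative multiple of $\prod_v w_v^{-m_v}$, which is log-convex (hence convex) and decreasing in $\w$; this avoids invoking positive stability and the semigroup representation, but both arguments deliver the same conclusion on the set where the inverse exists (at $\zb=\0$ with $A$ singular the entry should be read as $+\infty$, as you note). One last small point: strict decrease, as opposed to nonincrease, follows from your derivative formula once you observe that $M^{-1}$ is entrywise strictly positive for an irreducible nonsingular M-matrix, a fact the paper records in a footnote in Appendix~D.
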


\section{Model and Formulation}
    \label{sec:Formulation}
		
Consider a large system consisting of $N$ 
systems, and denote the set of comprising 
systems by ${\cal A} := \{1, 2, \ldots, N\}$. 
The security of the systems is interdependent 
in that the failure or infection 
of a system can cause that of other 
systems.\footnote{Throughout the 
paper, we use the words `failure' and 
`infection' interchangeably, in order to 
indicate \mbox{that a system fell victim to an
attack.}}  
As stated before, we study the problem 
of determining security investments for 
hardening each system in order to defend	
the systems against attacks 
in large systems, in which the
comprising systems depend on each other 
for their function. The goal of the system 
operator is to minimize the average 
aggregate costs for all systems (per unit time), 
which account for both security investments 
and any economic losses from 	
failures/infections of systems. 
\\ \vspace{-0.1in}

\subsection{Setup}
	\label{subsec:Setup}    
		
We assume that each system experiences 
primary attacks from malicious actors. 
Primary attacks on system $i \in {\cal A}$ 
occur in accordance with a Poisson 
process with rate $\lambda_i \in \R_+$.
When a system experiences 
an attack, it suffers an infection 
and subsequent economic losses 
with some probability, 
called {\em breach probability}. 

This breach probability depends on the 
security investment on the 
system: let $s_i \in 
\R_+$ be the security investment on 
system $i$ (e.g., investments in monitoring 
and diagnostic tools). The breach probability of
system $i$ is determined by some function 
$q_i: \R_+ \to (0, 1]$. In other words, 
when the operator invests $s_i$ on system
$i$, its breach probability is equal 
to $q_i(s_i)$. We assume that $q_i$ is 
decreasing, strictly convex and continuously
differentiable for all $i \in {\cal A}$.
It has been shown~\cite{Bary-WEIS12} that, 
under some conditions, 
the breach probability is decreasing and 
log-convex.

When system $i$ falls victim to an attack
and becomes infected, the operator incurs 
costs $c^r_i$ per unit time for recovery 
(e.g., inspection and repair of servers). 
Recovery times are modeled
using independent and identically distributed
(i.i.d.) exponential random variables with
parameter $\delta_i > 0$. Besides 
recovery costs, the infection of system 
$i$ may cause economic 
losses if, for example, some servers in 
system $i$ have to be taken 
offline for inspection and repair and are
inaccessible during the period to other 
systems that depend on the servers. 
To model this, we assume that the
infection of system $i$ introduces economic
losses of $c^e_i$ per unit time. 

Besides primary attacks, systems also 
experience secondary attacks from other
infected
systems. For example, this can model the 
spread of virus/malware or 
failures in complex systems.
The rate at which the infection of system $i$
causes that of another system $j$ is denoted
by $\beta_{i,j} \in \R_+$. When $\beta_{i, j}
> 0$, we say that system $i$ supports system
$j$ or system $j$ depends on 
system $i$. Let $B = [b_{i,j} : i, j
\in \mathcal{A}]$ be an $N \times N$
matrix that describes the infection rates
among systems, 
where the element $b_{i,j}$ is equal to 
$\beta_{j,i}$. We adopt the convention
$\beta_{i,i} = 0$ for all $i \in {\cal A}$. 

Define a directed graph ${\cal G}
= ({\cal A}, {\cal E})$, where a directed
edge from system $i$ to system $j$, denoted
by $(i, j)$, 
belongs to the
edge set ${\cal E}$ if and only if 
$\beta_{i,j} > 0$. 
We assume that matrix $B$ is 
\underline{irreducible}. 
Note that this is equivalent to 
assuming that the graph ${\cal G}$
is strongly connected.

\subsection{Model}
\label{subsec:Model}
   
We adopt the well-known
susceptible-infected-susceptible (SIS) model 
to capture the evolution of system
state. Let $p_i(t)$ be the probability that 
system $i$ is at the `infected' state (I) 
at time $t \in \R_+$. 
We approximate the dynamics of ${\bf p}(t)
:= (p_i(t) : i \in {\cal A})$, 
$t \in \R_+$, using the following (Markov)
differential equations, which are derived in 
\cite{Mieghem-TON09} and are based on mean
field approximation. This model is 
also similar to those employed in 
\cite{Gourdin-DRCN11, mai2018distributed,
Nowzari-TCNS17, Ottaviano-JCN18, 
Preciado-CDC13, Preciado-TCNS14}: 
for fixed security investments, 
${\bf s} = (s_i : \ i \in {\cal A}) \in 
\R_+^N$, 
\beqa 
\dot{p}_i(t)
= (1 - p_i(t)) q_i(s_i) \Big(
\lambda_i + \sum_{j \in {\cal A}} 
\beta_{j, i} p_j(t) \Big) 
- \delta_i p_i(t).
\label{eq:pdot}
\eeqa

In practice, the breach probability 
$q_i$ can be a complicated
function of the security investment. 
Here, in order to make progress, we assume 
that the breach probability functions 
can be approximated (in the regime of
interest) using a function
of the form $q_i(s) = (1 + \kappa_i
s)^{-1}$ for all $i \in {\cal A}$. 
The parameter $\kappa_i > 0$ models
how quickly the breach probability decreases
with security investment for system $i$. 
The assumed function 
satisfies log-convexity shown in 
\cite{Bary-WEIS12}. 

Define $\alpha_i := \kappa_i \delta_i$, 
$i \in \mathcal{A}$, and $\boldsymbol{\alpha}
:= (\alpha_i : \ i \in \mathcal{A})$. The 
following theorem tells us that, for a fixed
security investment vector ${\bf s} 
:= (s_i : \ i \in {\cal A}) \in \R_+^N$, 
there is a unique equilibrium of the 
differential system described by \eqref{eq:pdot}. 
 
\begin{thm} \label{thmEquilibrium}
Suppose $\boldsymbol{\lambda} \gneq \0$, 
$\boldsymbol{\delta} > \0$ and $\sbold \ge \0$ 
are fixed. If the network is strongly connected, 
i.e., $B$ is irreducible, there exists a 
unique equilibrium $\pb^* \in (0,1)^N$ of
\eqref{eq:pdot}. Moreover, starting with 
any $\pb_0$ satisfying $ \pb^* \!\leq\! 
\pb_0 \!\leq\! \1$, 
the iteration
\begin{align}
\pb_{k+1}  
= \frac{\boldsymbol{\lambda} + B\pb_k}
    {\boldsymbol{\lambda} + B\pb_k 
    + \boldsymbol{\alpha}\circ \bs 
    + \boldsymbol{\delta}}, \ k \in \N, 
\label{eqFixedPointEquilibrium}
\end{align}
converges linearly to $\pb^*$ with some rate 
$\rho_0 < 1-\min_{i \in {\cal A}} p_i^*$.
\end{thm}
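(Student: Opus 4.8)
The plan is to reduce the equilibrium condition to a fixed-point equation $\pb=T(\pb)$ and then exploit two structural properties of $T$ — monotonicity and componentwise concavity — to obtain existence, uniqueness, and the convergence rate in turn. First I would set $\dot{\pb}=\0$ in \eqref{eq:pdot}; substituting $q_i(s_i)=(1+\kappa_i s_i)^{-1}$ and $\alpha_i=\kappa_i\delta_i$ and multiplying the stationarity relation by $1+\kappa_i s_i$ yields, for each $i$, $p_i=r_i(\pb)/(r_i(\pb)+d_i)$ with $r_i(\pb):=(\boldsymbol{\lambda}+B\pb)_i$ and $d_i:=\delta_i+\alpha_i s_i>0$; in vector form this is exactly the map $T$ in \eqref{eqFixedPointEquilibrium}. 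The two facts I would record are: (i) each $T_i(\pb)=f_i(r_i(\pb))$ is the composition of the increasing concave scalar map $f_i(r)=r/(r+d_i)$ with the affine map $r_i$ whose gradient is the $i$th row of the nonnegative matrix $B$, so $T$ is order-preserving and each $T_i$ is concave on $\R_+^N$; and (ii) $0\le T_i<1$, so $T$ maps $[\0,\1]$ into $[\0,\1)$.

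For existence and the range $\pb^*\in(0,1)^N$, I would run the monotone iteration from $\0$ and from $\1$. Since $T(\1)\le\1$ and $T$ is order-preserving, the orbit from $\1$ is nonincreasing while that from $\0$ is nondecreasing; both are bounded and hence converge to fixed points $\underline{\pb}\le\bar{\pb}$. Positivity is where irreducibility enters: $T(\0)=\boldsymbol{\lambda}/(\boldsymbol{\lambda}+\boldsymbol{\alpha}\circ\bs+\boldsymbol{\delta})\gneq\0$ because $\boldsymbol{\lambda}\gneq\0$, and the fixed-point relation propagates strict positivity along edges — if $\underline p_j>0$ and $b_{i,j}>0$, then $r_i(\underline{\pb})>0$, so $\underline p_i>0$ — which by strong connectivity (irreducibility of $B$) forces $\underline{\pb}>\0$; the bound $T_i<1$ gives $\bar{\pb}<\1$.

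Uniqueness is the main obstacle, and I would settle it with a strict subhomogeneity argument. A short calculation shows that for $\theta>1$ and any $\pb$ with $r_i(\pb)>0$ one has $T_i(\theta\pb)<\theta T_i(\pb)$; indeed, after clearing denominators, $\theta T_i(\pb)-T_i(\theta\pb)$ equals $(\theta-1)$ times a manifestly nonnegative expression that is strictly positive precisely when $r_i(\pb)>0$. Suppose $\underline{\pb}\neq\bar{\pb}$ and set $\theta:=\max_i \bar p_i/\underline p_i>1$, attained at some $i_0$. Then, using $\bar{\pb}\le\theta\underline{\pb}$, monotonicity, and strict subhomogeneity, $\bar p_{i_0}=T_{i_0}(\bar{\pb})\le T_{i_0}(\theta\underline{\pb})<\theta T_{i_0}(\underline{\pb})=\theta\underline p_{i_0}=\bar p_{i_0}$, a contradiction. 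Hence $\underline{\pb}=\bar{\pb}=:\pb^*\in(0,1)^N$, and since any fixed point in $[\0,\1]$ is squeezed between the two monotone orbits, $\pb^*$ is the unique equilibrium.

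Finally, for convergence from $\pb_0\in[\pb^*,\1]$: order-preservation gives $\pb_k\ge\pb^*$ for all $k$, and the tangent inequality for the concave $T_i$ yields $T(\pb_k)-T(\pb^*)\le J(\pb_k-\pb^*)$, where $J$ is the Jacobian of $T$ at $\pb^*$ with entries $J_{i,j}=d_i b_{i,j}/(r_i^*+d_i)^2\ge 0$ and $r_i^*:=r_i(\pb^*)$. Iterating gives $\0\le\pb_k-\pb^*\le J^k(\pb_0-\pb^*)$, so the rate is governed by $\rho(J)$. Testing the irreducible nonnegative $J$ against the positive vector $\pb^*$ and using $r_i^*+d_i=d_i/(1-p_i^*)$ and $(B\pb^*)_i=r_i^*-\lambda_i$, I obtain $\rho(J)\le\max_i (J\pb^*)_i/p_i^*=\max_i(1-p_i^*)(1-\lambda_i/r_i^*)\le 1-\min_i p_i^*$. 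The delicate final point is strictness: equality throughout would force $\pb^*$ to be a Perron eigenvector of $J$ with the minimizing coordinate satisfying $\lambda_i=0$, which, combining the eigenvector identities with $(B\pb^*)_i>0$, forces $\boldsymbol{\lambda}=\0$ — excluded by $\boldsymbol{\lambda}\gneq\0$. Hence $\rho_0:=\rho(J)<1-\min_i p_i^*$, giving linear convergence at the claimed rate.
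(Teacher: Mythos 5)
Your proposal is correct, and for the convergence-rate part it lands on essentially the same argument as the paper: the fixed-point map you call $T$ is exactly the paper's $H$ in \eqref{eqSteadyState4}, and your Jacobian $J$ with entries $b_{i,j}(1-p_i^*)^2/d_i$ is identical to the paper's comparison matrix $\diag(\1-\pb^*)\Xi B$; the paper reaches it by directly bounding $H(\pb)-H(\pb')$ for $\pb\ge\pb'\ge\pb^*$, while you reach it via the tangent inequality for the concave components of $T$ — two routes to the same matrix, with the spectral bound in both cases obtained by testing against the positive vector $\pb^*$. The genuine differences are elsewhere. For existence and uniqueness the paper simply defers to a reference and to Yates' standard-interference-function framework (it verifies monotonicity and $cH(\pb)>H(c\pb)$ for $c>1$ and cites the framework for convergence), whereas you give a self-contained proof: monotone orbits from $\0$ and $\1$ for existence, irreducibility to propagate strict positivity, and an explicit strict-subhomogeneity computation ($\theta T_i(\pb)-T_i(\theta\pb)=(\theta-1)(r_i(\pb)r_i(\theta\pb)+d_i\lambda_i)/\big((r_i(\pb)+d_i)(r_i(\theta\pb)+d_i)\big)$) driving the $\theta=\max_i\bar p_i/\underline p_i$ contradiction — this is precisely the content of the cited framework, made explicit. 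You also prove something slightly sharper: the paper's displayed chain only yields $\rho_0\le 1-\min_i p_i^*$, while your Collatz--Wielandt equality analysis (equality forces $\pb^*$ to be the Perron vector of $J$ with $(1-p_i^*)(1-\lambda_i/r_i^*)$ constant, hence $\boldsymbol{\lambda}=\0$) actually delivers the strict inequality $\rho_0<1-\min_i p_i^*$ claimed in the theorem statement. The cost of your route is length; the benefit is that it is self-contained and closes the strictness gap.
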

\begin{proof}
Please see Appendix~\ref{appen:Theorem1}. 
\end{proof}

Note that the unique equilibrium of the
differential system given by \eqref{eq:pdot}
specifies the probability that each system will 
be infected at steady state. For this reason, we take 
the average cost of the system, denoted by 
$C_{\avg}(\sbold)$, to be 
\beqa
\hspace{-0.2in}
C_{\avg}(\sbold)
& \myb := & \myb w(\sbold)
+ \textstyle \sum_{i \in {\cal A}} c_i
p_i^*(\sbold)
= w(\sbold) + \cb\T \pb^*(\sbold), 
    \label{eq:Cavg}
\eeqa
where $c_i := c^r_i + c^e_i$, ${\bf c} = 
(c_i : i \in {\cal A})$, and $w(\sbold)$
quantifies the security investment costs
(per unit time), 
e.g., \ $w(\sbold) = \sum_{i \in 
{\cal A}} s_i$. 
We assume that $w$ is continuous, (weakly) 
convex and strictly increasing, and 
refer to $\cb$ simply as the infection 
costs (instead of infection costs per
unit time). 

A major difficulty in minimizing the average cost in 
\eqref{eq:Cavg} as an objective function is that the 
equilibrium $\pb^*(\bs)$ does not have a closed-form 
expression. \blue{As
a result, we cannot simply substitute
a closed-form expression for the equilibrium
$\pb^*(\bs)$ in \eqref{eq:Cavg}
and minimize the average cost with $\bs$ 
as the optimization variables.}
For this reason, we formulate the problem 
of determining optimal security investments that 
minimize the average cost $C_{\avg}({\bf s})$
as follows:
\begin{tcolorbox}[colback=white]
\vspace{-0.21in}
\begin{subequations}
\begin{align}
\hspace{-0.68in}
(\mathrm{P})\qquad \quad  \min_{\bs \ge \0, \pb\ge \0}&\quad  
f(\bs, \pb) := w(\bs) + \cb\T\pb 
	\label{eqObjectiveFun}\\
\mathsf{s.t.}&\quad \gb (\bs, \pb) = \0 
	\label{eqSteadyState}
	\vspace{-0.65in}
\end{align}
\end{subequations}
\vspace{-0.30in}
\end{tcolorbox}
\noindent where $\gb (\bs, \pb) = (g_i(\bs, \pb) : 
 i \in \mathcal{A})$, and 
\begin{eqnarray*}
g_i(\bs, \pb) 
\myeq (1-p_i) \Big(\lambda_i + \sum_{j \in \mathcal{A}} 
	\beta_{j, i} p_j \Big) 
    - (\alpha_i s_i + \delta_{i})p_i, \ i \in {\cal A}.
\end{eqnarray*}
Recall that, for given $\sbold \in 
\mathbb{R}^N_+$, only the unique 
equilibrium $\pb^* \in (0,1)^N$ in Theorem 
\ref{thmEquilibrium} satisfies the 
constraint in \eqref{eqSteadyState}. 
\blue{Clearly, the solution to problem (P) will also 
shed light
on which systems are more critical from the security 
perspective and, hence, should be protected.}
\blue{In the problem (P), we do not explicitly model any 
total budget constraint on security investments
for simplicity of exposition.
However, we will revisit the issue of
constraints on security investments, such as 
a total budget constraint, and discuss how it 
affects our main results in Section
\ref{sec:ConstraintDiscussion}.}

This problem (P) is nonconvex due to the nonconvexity 
of the equality constraint functions in 
\eqref{eqSteadyState}. In particular, $g_i$ contains 
both quadratic or bilinear terms $p_i p_j$ and $p_i 
s_i$. In the following sections, we develop four
complementary algorithms for finding good-quality
solutions to the nonconvex problem:
the first two approaches are based on a convex 
relaxation using different techniques, and provide 
a lower bound on the optimal value of the problem 
(P). The last two are designed to find a local 
minimizer of the problem (P), hence provide 
an upper bound on the optimal value, and are based 
on the RGM and SCP.

\section{Lower Bounds via Convex Relaxations} 
    \label{sec:LowerBound}

In this section, we discuss how we can relax
the original problem (P) and construct two
different convex formulations, which can be
used to obtain (a) a lower bound on the optimal
value and (b) a feasible solution to 
(P) using optimal points of the relaxed problems. 
Furthermore, we provide a sufficient condition for 
the relaxed problems to be exact, i.e., their 
optimal point is also an optimal point of 
the nonconvex problem (P). 
The first approach is based on M-matrix theory
and the preliminary results were reported in 
\cite{mai2020optimal}. The second approach 
is designed to deal with some of computational
issues of the first approach. Moreover, as we
will show, the optimal point of the first
approach can be computed from that of the
second approach.

\subsection{Convex Relaxation: M-Matrix Theory}
    \label{subsec:LowerBound}

Given $\boldsymbol{\lambda} \gneq \0$ and irreducible $B$, 
Theorem~\ref{thmEquilibrium} states that the unique 
equilibrium of \eqref{eq:pdot} which satisfies 
\eqref{eqSteadyState} is strictly positive. Hence,  
we can rewrite the constraints in \eqref{eqSteadyState}
as
\begin{align}
(\pb^{-1} - \1)\circ (\boldsymbol{\lambda} + B\pb) 
= \boldsymbol{\alpha}\circ \bs + \boldsymbol{\delta}, 
    \label{eqSteadyState2}
\end{align}
where $\pb^{-1} = (p_i^{-1} : \ i \in {\cal A})$.
By introducing a new variable 
\begin{align}
\zb := \pb^{-1}\circ(\boldsymbol{\lambda} + B\pb), 
    \label{eq_zp}
\end{align}
the constraint in \eqref{eqSteadyState2} can be rewritten as
\begin{equation}
\zb 
= \boldsymbol{\alpha}\circ \bs + \boldsymbol{\delta} 
	+ \boldsymbol{\lambda} + B\pb.
    \label{eqLinear_zsp}
\end{equation}
Note that \eqref{eqLinear_zsp}
is affine in $\zb$, $\bs$ and $\pb$, and the nonconvexity
in the equality constraint functions (mentioned
at the end of the previous section) is now captured by 
$\zb$, 
which from \eqref{eq_zp} can be expressed as
\begin{align}
(\diag(\zb) - B)\pb = \boldsymbol{\lambda}
\gneq \0. \label{eq_zp2}
\end{align}

\blue{We can show that the matrix $(\diag(\zb) - B)$ is a 
nonsingular M-matrix and, hence, $\pb = (\diag(\zb) - B)^{-1}\boldsymbol{\lambda}$ as follows: from (8), 
since $\boldsymbol{\lambda} \gneq \0$, we have
$\lambda_{i^*} > 0$ for some $i^*$. Since matrix $B$ is 
assumed irreducible, for any $j$ such that $\lambda_j = 
0$, we can find a finite sequence $(i_0 = j, i_1, i_2, 
\ldots, i_r = i^*)$ such that $[(\diag(\zb) - B) \pb]_{i^*} = 
\lambda_{i^*} > 0$ and $(\diag(\zb) - B)_{i_k, i_{k+1}} 
= - B_{i_k, i_{k+1}} \neq 0$ for all $k \in \{0, 1, 
\ldots, r-1\}$. Because $\pb > 0$, Lemma 1-(f) tells us 
that this is equivalent to matrix $(\diag(\zb) - B)$ 
being a nonsingular M-matrix. }
As a result, the original problem (P) can be reformulated as
\begin{align}
\hspace{-0.45in}
(\mathrm{P2})\qquad \quad  
\min_{\bs, \pb, \zb}&
	\quad f(\bs, \pb) \nnb \\
\mathsf{s.t.}&\quad \pb 
	= (\diag(\zb) - B)^{-1} \boldsymbol{\lambda} \nnb \\
&\quad \zb = \boldsymbol{\alpha}\circ \bs 
	+ \boldsymbol{\delta} + \boldsymbol{\lambda} 
		+ B\pb \nnb\\
&\quad \bs \in \mathbb{R}^N_+,\quad \pb \in \R^N_+,
	\quad \zb \in \Omega, \nnb
\end{align}
\noindent where 
\begin{equation}
\Omega := \big\{\zb \in \R^N_+~|~ \diag(\zb) 
	- B \in \mathbb{M}^{N \times N}_+\big\}.
    \label{eq:Omega}
\end{equation}

We can show that the set $\Omega$ in 
\eqref{eq:Omega} is convex. This is proved in 
Appendix~\ref{Appendix_CvxOmega}. 
Also, it follows from Lemma
\ref{lemConvexityInvMMatrix} 
that for any $1 \le i,j \le N$, the element 
$\big[\big( \diag(\zb) -B \big)^{-1}\big]_{i,j}$
is convex and (element-wise) decreasing in 
$\zb\in\Omega$. 
For these reasons, we obtain 
the following convex relaxation of (P2).
\begin{tcolorbox}[colback=white]
\vspace{-0.22in}
\begin{subequations}
\beqa
\hspace{-0.48in}
(\mathrm{P_{R1}})\qquad  
\min_{\bs, \pb, \zb}
	&&  f(\bs, \pb) \lb
\mathsf{s.t.}&&\pb \ge  
	(\diag(\zb) - B)^{-1}\boldsymbol{\lambda} 
	\label{eq_pz_relax}\\
&& \zb = \boldsymbol{\alpha} \circ \bs 
	+ \boldsymbol{\delta} 
	+ \boldsymbol{\lambda} + B\pb
	\label{eq_z_relax}\\
&& \bs \in \R_+^N,\quad \pb \le \1,
	\quad \zb \in \Omega.
	\nonumber
\eeqa
\end{subequations}
\vspace{-0.30in}
\end{tcolorbox}
\noindent 
This convex relaxation can be solved by numerical 
convex solvers to 
provide a lower bound on the optimal value of (P). 
Also, as shown in the following theorem, 
its optimal point also leads to a feasible 
point for problem (P).

\begin{thm}\label{thmConvexRelaxSolution}
Let $\bx^*_{\rm{R}} := (\bs^*_{\rm{R}}, \pb^*_{\rm{R}}, 
\zb^*_{\rm{R1}})$ denote an optimal point of 
$(\rm{P_{R1}})$ 
and $f^*$ the optimal value of $(\rm{P})$. 
Then,  we have
$$
f(\bs^*_{\rm{R}}, \pb^*_{\rm{R}}) 
\le f^* \le f(\tilde{\bs}(\bx^*_{\rm{R}}), 
\tilde{\pb}(\bx^*_{\rm{R}})), 
$$
where $(\tilde{\bs}(\bx^*_{\rm{R}}), 
\tilde{\pb}(\bx^*_{\rm{R}}))$ 
is a feasible point of problem $(\rm{P})$ given by
\begin{align*}
\tilde{\pb}(\bx^*_{\rm{R}}) 
&= (\diag(\zb^*_{\rm{R}}) - B)^{-1}\boldsymbol{\lambda}
\ \mbox{ and } \\
\tilde{\bs}(\bx^*_{\rm{R}}) 
&= \bs^*_{\rm{R}} 
+  \diag(\boldsymbol{\alpha}^{-1})B(\pb^*_{\rm{R}} - 
\tilde{\pb}(\bx^*_{\rm{R}})).
\end{align*}
\end{thm}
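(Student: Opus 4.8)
The plan is to establish the two inequalities separately. The left inequality $f(\bs^*_{\rm R}, \pb^*_{\rm R}) \le f^*$ is a pure relaxation argument, whereas the right inequality $f^* \le f(\tilde{\bs}(\bx^*_{\rm R}), \tilde{\pb}(\bx^*_{\rm R}))$ follows once I show that the exhibited point $(\tilde{\bs}, \tilde{\pb})$ is feasible for $(\mathrm{P})$: since $f^*$ is the minimum of $f$ over all feasible points, it cannot exceed the value attained at any particular feasible point.

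For the lower bound I would first recall the equivalence of $(\mathrm{P})$ and $(\mathrm{P2})$ established in the text: every feasible $(\bs,\pb)$ of $(\mathrm{P})$ has $\pb$ equal to the unique equilibrium $\pb^*(\bs) \in (0,1)^N$ of Theorem~\ref{thmEquilibrium}, and introducing $\zb$ via \eqref{eq_zp} turns $\gb=\0$ into the two constraints of $(\mathrm{P2})$, with $\zb \in \Omega$ guaranteed by the M-matrix argument preceding $(\mathrm{P2})$. Passing from $(\mathrm{P2})$ to $(\mathrm{P_{R1}})$ only relaxes the equality $\pb = (\diag(\zb)-B)^{-1}\boldsymbol{\lambda}$ to the inequality \eqref{eq_pz_relax}, while the extra bound $\pb \le \1$ is automatically satisfied because $\pb^*(\bs) \in (0,1)^N$. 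Hence each feasible point of $(\mathrm{P2})$ (equivalently $(\mathrm{P})$) extends, with the very same $\zb$ and the same objective value, to a feasible point of $(\mathrm{P_{R1}})$. Minimizing the same objective over a larger feasible set can only decrease the value, so the optimal value $f(\bs^*_{\rm R}, \pb^*_{\rm R})$ of $(\mathrm{P_{R1}})$ is at most $f^*$.

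For the upper bound the work is to verify $\tilde{\bs} \ge \0$, $\tilde{\pb} \ge \0$, and $\gb(\tilde{\bs}, \tilde{\pb}) = \0$. Nonnegativity of $\tilde{\pb}$ is immediate from $\zb^*_{\rm R} \in \Omega$: then $(\diag(\zb^*_{\rm R})-B)$ is a nonsingular M-matrix, hence inverse-positive by Lemma~\ref{lem_M_Matrix}(b), so $\tilde{\pb} = (\diag(\zb^*_{\rm R})-B)^{-1}\boldsymbol{\lambda} \ge \0$. The relaxed constraint \eqref{eq_pz_relax} gives $\pb^*_{\rm R} \ge \tilde{\pb}$, i.e. $\pb^*_{\rm R} - \tilde{\pb} \ge \0$, and combined with $\boldsymbol{\alpha}^{-1} > \0$, $B \ge 0$ and $\bs^*_{\rm R} \ge \0$ this yields $\tilde{\bs} \ge \bs^*_{\rm R} \ge \0$. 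The crux is the equality constraint. Here I would set $\tilde{\zb} := \boldsymbol{\alpha}\circ\tilde{\bs} + \boldsymbol{\delta} + \boldsymbol{\lambda} + B\tilde{\pb}$ and show $\tilde{\zb} = \zb^*_{\rm R}$: substituting the definition of $\tilde{\bs}$ and using $\boldsymbol{\alpha}\circ(\diag(\boldsymbol{\alpha}^{-1})\vv) = \vv$, the correction term contributes exactly $B(\pb^*_{\rm R} - \tilde{\pb})$, cancelling the discrepancy between $B\tilde{\pb}$ and $B\pb^*_{\rm R}$ and reducing $\tilde{\zb}$ to $\boldsymbol{\alpha}\circ\bs^*_{\rm R} + \boldsymbol{\delta} + \boldsymbol{\lambda} + B\pb^*_{\rm R}$, which equals $\zb^*_{\rm R}$ by \eqref{eq_z_relax}. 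Finally, the definition of $\tilde{\pb}$ gives $(\diag(\zb^*_{\rm R})-B)\tilde{\pb} = \boldsymbol{\lambda}$, i.e. $\tilde{\zb}\circ\tilde{\pb} = \boldsymbol{\lambda} + B\tilde{\pb}$; rearranging this componentwise with $\tilde{\zb} = \boldsymbol{\alpha}\circ\tilde{\bs} + \boldsymbol{\delta} + \boldsymbol{\lambda} + B\tilde{\pb}$ recovers precisely $g_i(\tilde{\bs}, \tilde{\pb}) = 0$ for each $i$.

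The main obstacle — indeed the only nonroutine point — is recognizing and verifying that the correction term $\diag(\boldsymbol{\alpha}^{-1})B(\pb^*_{\rm R} - \tilde{\pb})$ in the definition of $\tilde{\bs}$ is exactly what is needed so that, after forcing $\pb$ down to the equilibrium value $\tilde{\pb} \le \pb^*_{\rm R}$, the affine relation \eqref{eq_z_relax} still holds with the \emph{same} $\zb^*_{\rm R}$; once $\tilde{\zb} = \zb^*_{\rm R}$ is in hand, the nonlinear equilibrium identity $g_i = 0$ falls out of $(\diag(\zb^*_{\rm R})-B)\tilde{\pb} = \boldsymbol{\lambda}$. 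Everything else is bookkeeping supported by the M-matrix facts of Lemma~\ref{lem_M_Matrix} and the convexity of $\Omega$ already established.
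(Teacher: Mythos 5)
Your proposal is correct and follows essentially the same route as the paper's (much terser) proof: the lower bound is the standard relaxation argument via the equivalence of $(\mathrm{P})$ and $(\mathrm{P2})$, and the upper bound comes from checking that $(\tilde{\bs},\tilde{\pb},\zb^*_{\rm R})$ still satisfies \eqref{eq_z_relax} with the same $\zb^*_{\rm R}$ and meets \eqref{eq_pz_relax} with equality, hence is feasible for $(\mathrm{P})$. Your explicit computation showing $\tilde{\zb}=\zb^*_{\rm R}$ and the recovery of $g_i(\tilde{\bs},\tilde{\pb})=0$ from $(\diag(\zb^*_{\rm R})-B)\tilde{\pb}=\boldsymbol{\lambda}$ is exactly the detail the paper leaves implicit.
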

\begin{proof}
The first inequality is obvious because 
$(\rm{P_{R1}})$ is 
a convex relaxation of $({\rm P})$. 
For the second inequality, note that 
$(\tilde{\bs}(\bx^*_{\rm{R}}), \tilde{\pb}
(\bx^*_{\rm{R}}), \zb^*_{\rm{R}})$ is a feasible 
point for $(\rm{P_{R1}})$. Also, it satisfies 
\eqref{eq_pz_relax} with equality. Thus, it is a 
feasible point for problem $(\rm{P})$, proving 
the second inequality. 
\end{proof}

Clearly,  $\bx_{{\rm R}}^*$ solves $(\rm{P})$ 
if the inequality 
constraints in \eqref{eq_pz_relax} are all active at 
$\bx_{{\rm R}}^*$, 
which means $f(\bs^*_{\rm{R}}, \pb^*_{\rm{R}}) 
= f(\tilde{\bs}(\bx^*_{\rm{R}}), 
\tilde{\pb}(\bx^*_{\rm{R}}))$. Based on this, we 
can provide a following sufficient condition for  
convex relaxation $(\rm{P_{R1}})$ to be exact.

\begin{lemma}\label{corExactRelaxation}
The above 
convex relaxation $(\rm{P_{R1}})$ is exact if 
\begin{align} 
B\T \diag(\boldsymbol{\alpha}^{-1}) \nabla w(\bs) 
    \le  \cb  \ \mbox{ for all } \bs\ge \0. 
    \label{eqExactRelaxationCond2}
			\end{align}
\end{lemma}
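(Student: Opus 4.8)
The plan is to show that condition \eqref{eqExactRelaxationCond2} forces the optimal value of the relaxation $(\mathrm{P_{R1}})$ to coincide with $f^*$, by way of the feasible point of $(\mathrm{P})$ already constructed in Theorem~\ref{thmConvexRelaxSolution}. Recall from that theorem that, given an optimal point $\bx^*_{\rm R} = (\bs^*_{\rm R}, \pb^*_{\rm R}, \zb^*_{\rm R})$ of $(\mathrm{P_{R1}})$, the pair $(\tilde{\bs}(\bx^*_{\rm R}), \tilde{\pb}(\bx^*_{\rm R}))$ with $\tilde{\pb} = (\diag(\zb^*_{\rm R}) - B)^{-1}\boldsymbol{\lambda}$ and $\tilde{\bs} = \bs^*_{\rm R} + \diag(\boldsymbol{\alpha}^{-1}) B(\pb^*_{\rm R} - \tilde{\pb})$ is feasible for $(\mathrm{P})$, and moreover $(\tilde{\bs}, \tilde{\pb}, \zb^*_{\rm R})$ is feasible for $(\mathrm{P_{R1}})$. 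Constraint \eqref{eq_pz_relax} evaluated at $\bx^*_{\rm R}$ gives $\pb^*_{\rm R} \ge \tilde{\pb}$, so both $\pb^*_{\rm R} - \tilde{\pb}$ and $\tilde{\bs} - \bs^*_{\rm R} = \diag(\boldsymbol{\alpha}^{-1}) B(\pb^*_{\rm R} - \tilde{\pb})$ are nonnegative vectors.

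The goal, as observed in the remark preceding the lemma, is to establish $f(\bs^*_{\rm R}, \pb^*_{\rm R}) = f(\tilde{\bs}, \tilde{\pb})$; combined with the sandwich $f(\bs^*_{\rm R}, \pb^*_{\rm R}) \le f^* \le f(\tilde{\bs}, \tilde{\pb})$ of Theorem~\ref{thmConvexRelaxSolution}, this collapses all three quantities and yields exactness, with $(\tilde{\bs},\tilde{\pb})$ an optimal point of $(\mathrm{P})$. First I would use optimality of $\bx^*_{\rm R}$ for $(\mathrm{P_{R1}})$: since $(\tilde{\bs}, \tilde{\pb}, \zb^*_{\rm R})$ is feasible, $f(\bs^*_{\rm R}, \pb^*_{\rm R}) \le f(\tilde{\bs}, \tilde{\pb})$, which upon cancelling the common $\cb\T\tilde{\pb}$ terms rearranges to the lower estimate
\begin{align*}
w(\tilde{\bs}) - w(\bs^*_{\rm R}) \ge \cb\T(\pb^*_{\rm R} - \tilde{\pb}).
\end{align*}

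The matching upper estimate is where \eqref{eqExactRelaxationCond2} enters. Using the supporting-hyperplane inequality for the convex function $w$ at the point $\tilde{\bs}$, together with the identity for $\tilde{\bs} - \bs^*_{\rm R}$ above, I would write
\begin{align*}
w(\tilde{\bs}) - w(\bs^*_{\rm R})
&\le \nabla w(\tilde{\bs})\T \diag(\boldsymbol{\alpha}^{-1}) B (\pb^*_{\rm R} - \tilde{\pb}) \\
&= \big(B\T \diag(\boldsymbol{\alpha}^{-1}) \nabla w(\tilde{\bs})\big)\T (\pb^*_{\rm R} - \tilde{\pb}).
\end{align*}
Evaluating \eqref{eqExactRelaxationCond2} at $\bs = \tilde{\bs} \ge \0$ gives $B\T \diag(\boldsymbol{\alpha}^{-1}) \nabla w(\tilde{\bs}) \le \cb$ componentwise, and since $\pb^*_{\rm R} - \tilde{\pb} \ge \0$ the right-hand side is at most $\cb\T(\pb^*_{\rm R} - \tilde{\pb})$. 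Chaining the two estimates forces $w(\tilde{\bs}) - w(\bs^*_{\rm R}) = \cb\T(\pb^*_{\rm R} - \tilde{\pb})$, hence $f(\tilde{\bs}, \tilde{\pb}) = f(\bs^*_{\rm R}, \pb^*_{\rm R})$, as desired. Since $\nabla w$ already appears in \eqref{eqExactRelaxationCond2}, I take $w$ differentiable; if $w$ is only subdifferentiable the same step goes through with any subgradient, or alternatively via the mean-value form $w(\tilde{\bs}) - w(\bs^*_{\rm R}) = \int_0^1 \nabla w(\bs^*_{\rm R} + t(\tilde{\bs} - \bs^*_{\rm R}))\T (\tilde{\bs} - \bs^*_{\rm R})\,dt$, where every point on the segment is $\ge \0$ and so satisfies \eqref{eqExactRelaxationCond2}.

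I expect the main subtlety to be orienting the two inequalities so that condition \eqref{eqExactRelaxationCond2} closes the gap exactly: the optimality of the relaxation supplies a lower bound on $w(\tilde{\bs}) - w(\bs^*_{\rm R})$, convexity of $w$ supplies an upper bound, and the content of \eqref{eqExactRelaxationCond2} is precisely that this upper bound is dominated by the lower one. The crucial bookkeeping is pairing the gradient evaluated at $\tilde{\bs}$ (rather than at $\bs^*_{\rm R}$) with the correct sign $\pb^*_{\rm R} - \tilde{\pb} \ge \0$; beyond that, everything reduces to quoting the feasibility and the sandwich already established in Theorem~\ref{thmConvexRelaxSolution}.
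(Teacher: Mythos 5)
Your argument is correct and follows essentially the same route as the paper's proof: both use the gradient inequality for the convex function $w$ at $\tilde{\bs}$, the identity $\tilde{\bs}-\bs^*_{\rm R}=\diag(\boldsymbol{\alpha}^{-1})B(\pb^*_{\rm R}-\tilde{\pb})$, the sign $\pb^*_{\rm R}\ge\tilde{\pb}$, and condition \eqref{eqExactRelaxationCond2} to show the gap $f(\tilde{\bs},\tilde{\pb})-f(\bs^*_{\rm R},\pb^*_{\rm R})$ is nonpositive, hence zero by Theorem~\ref{thmConvexRelaxSolution}. The only cosmetic difference is that you sandwich $w(\tilde{\bs})-w(\bs^*_{\rm R})$ between two bounds rather than bounding the gap directly; the content is identical.
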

\begin{proof}
Suppose $(\tilde{\bs}, \tilde{\pb})$ 
is the feasible point 
of $(\rm P)$ given in 
Theorem~\ref{thmConvexRelaxSolution}. 
Since $w$ is convex, we have 
$w(\tilde{\bs}) - w(\bs^*_{\rm{R}}) \le 
\nabla w(\tilde{\bs})\T (\tilde{\bs} - 
\bs^*_{\rm{R}}) = \nabla w(\tilde{\bs})\T 	
\diag(\boldsymbol{\alpha}^{-1})B(\pb^*_{\rm{R}} 
- \tilde{\pb})$. 
From this inequality, the gap $f(\tilde{\bs}, 
\tilde{\pb}) - f(\bs^*_{\rm{R}}, 
\pb^*_{\rm{R}}) \leq    
(\nabla w(\tilde{\bs})\T
\diag(\boldsymbol{\alpha}^{-1})B 
- \cb\T)(\pb^*_{\rm{R}} - \tilde{\pb})$. 
Under condition \eqref{eqExactRelaxationCond2}, 
together with $\pb^*_{\rm{R}} \ge 
\tilde{\pb}$, this 
gap is nonpositive. 
By Theorem~\ref{thmConvexRelaxSolution}, 
this gap must be zero. Thus, $(\rm{P_{R1}})$ is exact.
\end{proof}


\begin{rem}
(\emph{Sufficient condition for exact relaxation}) 
First, roughly speaking, the condition in 
\eqref{eqExactRelaxationCond2} means that 
when the infection costs ${\bf c}$ are 
sufficiently high, 
the convex relaxation $({\rm P_{R1}})$ is exact 
and we can find optimal security 
investments, i.e., a solution to $(\rm P)$, 
by solving $(\rm{P_{R1}})$ instead. 
\blue{The intuition behind this observation is the
following:
as ${\bf c}$ becomes larger, the second term in 
the objective function, namely ${\bf c}\T \pb$, becomes
more important and an optimal point tries to suppress
it by reducing $\pb$. However, since $\pb$
must satisfy the inequality in \eqref{eq_pz_relax}, 
it can only be reduced till the equality holds, 
which satisfies the constraint in problem (P2).}
Second, condition \eqref{eqExactRelaxationCond2} can 
be verified prior to solving the relaxed problem. This 
can be done easily if $w$ is a linear function or an 
upper bound on the gradient $\nabla w$ is known. 
Finally, even when the convex relaxation is not exact, 
$(\tilde{\bs}, \tilde{\pb})$ can still be used as a good initial
point for a local search algorithm, such as 
the RGM developed in Section~\ref{sec:UpperBound} below.
\end{rem}

\begin{rem}\label{remRelaxationIssues}
(\emph{Numerical issues of $(\rm{P_{R1}})$})  
Although $(\rm{P_{R1}})$ is a 
\blue{convex problem,} 
there are a few numerical challenges. 
First, the Jacobian of constraint 
functions in \eqref{eq_pz_relax}, which involves 
the derivative of inverse
matrix $(\diag(\zb) - B)^{-1}$, tends to 
be dense even when $B$ is sparse. 
Thus, off-the-shelf convex solvers may
not be suitable for large systems. 
		
Second, although the constraint set $\Omega$ 
for $\zb$ (defined in \eqref{eq:Omega}) is convex, 
\blue{it is not numerically easy to handle,} especially 
for large networks. This is because $\Omega$ 
is not closed and 
\blue{$(\rm{P_{R1}})$} 
becomes invalid outside $\Omega$. 
Thus, a numerical algorithm ought to stay inside
$\Omega$ and, for this reason, the nonsingularity 
of the M-matrix, diag$(\zb) - B$, should be 
ensured at every step. 
In general, it takes $O(N^3)$ to check if the matrix 
satisfies this condition \cite{pena2004stable}. 
The following approach can, however, alleviate 
the computational burden.  

\begin{itemize}
\item[s1] Starting at some $\zb_0 
\!\in\! \Omega$, solve $(\rm{P_{R1}})$ 
only with the constraint $\zb\in \R^N_+$. 
Then, check if the obtained solution 
$\bx^*_{{\rm R}}$ satisfies $\zb^*_{\rm{R}} 
\in \Omega$, 
If so, $\bx^*_{{\rm R}}$ solves $(\rm{P_{R1}}$). 
Otherwise, go to step s2. 

\item[s2] Choose a simpler subset 
$\tilde{\Omega} \subset \Omega$ and solve 
$(\rm{P_{R1}})$ subject to a stricter constraint
$\zb \in \tilde{\Omega}$. If $\zb^*_{\rm{R}}$
in $\bx^*_{{\rm R}}$ lies in 
int$(\tilde{\Omega})$, the solution is optimal for
$(\rm{P_{R1}})$; otherwise, construct a new 
$\tilde{\Omega}$ so that $\zb^*_{\rm{R}}$ 
belongs to the interior of new $\tilde{\Omega}$
and repeat.
Below, we propose 
an efficient way to choose the subset 
$\tilde{\Omega}$ that is more suitable for 
numerical algorithms.

\end{itemize}
\end{rem}

\subsubsection{Construction of Convex Subsets of 
$\Omega$}
\label{secFeasibleSubsets}

A key observation to constructing 
suitable subsets of $\Omega$ is that, in view of  
Lemmas~\ref{lem_M_Matrix} and \ref{lemConvexityInvMMatrix}, 
$\Omega$ can be expressed as 
$$	
\Omega 
= \textstyle\bigcup_{\underline{\zb} \in \partial \Omega} 
	\{\zb\in \R^N_+~|~\zb \gneq \underline{\zb} \}.
$$
Thus, for every $\underline{\zb} \in \partial 
\Omega$, 
$\tilde{\Omega}(\underline{\zb}) 
:= \{ \zb\in \R^N_+~|~\zb \gneq \underline{\zb} \}
\subset \Omega$.
Our goal is to find some $\check{\zb} \in \partial \Omega$ 
such that an optimal point $\bx^*_{\rm{R}}$ that solves 
the relaxed problem with $\Omega$ replaced by 
$\tilde{\Omega}(\check{\zb})$, satisfies $\zb^*_{\rm{R}} 
\in$ int $\tilde{\Omega}(\check{\zb})$. 
Below, we provide several possible choices for $\underline{\zb}$ 
with increasing computational complexity. 
		
\paragraph{Diagonal dominance} 
		
The matrix $\diag(\zb) - B$ is nonsingular if it is strictly diagonally dominant. This can be guaranteed by choosing 
$\underline{\zb} > B\1$, where the lower bound $B\1$ represents the total rate of infection from immediate neighbors in the graph $\mathcal{G}$. From \eqref{eq_z_relax}, a trivial sufficient
condition is $\boldsymbol{\delta} + \boldsymbol{\lambda}
\geq B\1$. But, we observe empirically that this often leads to suboptimal solutions. 
		
\paragraph{Dominant eigenvalue}
		
Another straightforward lower bound is given by $\underline{\zb}  > \rho(B)\1$. 
Recall that the spectral radius $\rho(B)$ is also an eigenvalue of $B$ and equal to $\bar{\sigma}(B)$, which can be computed efficiently using, for example, the power method.
		
\paragraph{Iterative dominant eigenvalue selection via matrix balancing}
\label{ApproxOmegaCombine}
		
Unfortunately, we observe empirically that a static selection 
of the subset $\tilde{\Omega}$ does not always lead to a good
solution and a following iterative algorithm yields better performance:
let $\hb >\0$ be a normal vector of the plane tangent to 
the closure of $\Omega$ at 
some $\underline{\zb} \in \partial \Omega$ such that
\begin{eqnarray}
\underline{\zb} 
\myeq \textstyle\arg\min_{\zb \in \R^N_+} \quad   
	\{\hb\T\zb ~|~ \zb \in {\rm cl} (\Omega) \}\nnb \\
\myeq \textstyle\arg\min_{\zb \in \R^N_+}\quad   
	\{\hb\T\zb ~|~ \underline{\sigma} 
	\big(\diag(\zb) - B\big) = 0 \}, 
	\label{MatBal}
\end{eqnarray}
where the second equality follows from the fact 
that we are 
minimizing a linear function over a closed convex set. 
The minimization in \eqref{MatBal} amounts to finding 
the smallest diagonal perturbation $\zb$ 
(in $1$-norm weighted 
by $\hb$) so that $B$ becomes (negative) stable. In 
Appendix~\ref{Appendix_MatrixBalancing}, 
we show that this is in fact a \emph{matrix balancing} 
problem, for which efficient algorithms exist 
(see \cite{ostrovsky2017matrix, boyd2004convex} 
for nearly-linear time centralized algorithms 
and \cite{mai2018distributed, mai2019asynchronous} for 
distributed algorithms with geometric convergence).

\begin{algorithm2e}[hbt]\label{algRelaxation}
\DontPrintSemicolon
\textbf{init}:  $t=0$, $\bar{h}>1$, 
$\underline{\zb}^{(0)}$ from \eqref{MatBal}\;
\While{stopping cond. not met}{
	$(\tilde{\bs}_{\rm{R}}^{(t+1)}\!, 
		\tilde{\pb}_{\rm{R}}^{(t+1)}\!, 
	\tilde{\zb}_{\rm{R}}^{(t+1)}) \gets$ solve 
		$(\rm{P_{R}})$~:~ 
	$\zb\in\!\tilde{\Omega}(\underline{\zb}^{(t)})$\;
	$\mathcal{I}_{ac} \gets \{i\in \mathcal{A}~|~ 
		[\tilde{\zb}_{\rm{R}}^{(t+1)}]_i 
			= [\underline{\zb}^{(t)}]_i\}$\; 
	\If{$\mathcal{I}_{ac} = \varnothing$}{
	    \textbf{break}\;\vspace{-1mm}
	    }
	$\hb^+ \gets (h_i^+ = 1, i\notin \mathcal{I}_{ac}; 
		h^+_i = \bar{h}, i \in \mathcal{I}_{ac})$\;
	$d \gets \underline{\sigma}\big( 
	\diag(\hb^+)^{-1}(\diag(\tilde{\zb}_{\rm{R}}^{(t+1)}) - B) 
	\big)$\;
	$\underline{\zb}^{(t+1)} \gets \tilde{\zb}_{\rm{R}}^{(t+1)} 
	- d \hb^+$\;
	$t\gets t+1$
	\vspace{-1mm}
}
\caption{\mbox{Algorithm for Convex Relaxation $(\rm{P_{R1}})$}}
\end{algorithm2e}

Our first proposed algorithm
(Algorithm~\ref{algRelaxation}) is based on 
the discussion in this subsection. 
Initially, we choose some $\bar{h} > 1$ and 
$\hb = \boldsymbol{\alpha}^{-1}\circ\nabla w(\sbold_0)$,
where $\sbold_0$ is the initial choice of security 
investments. This heuristic is based on the relaxed 
problem by weighting only the investment cost 
$w(\sbold)$ without considering $\pb$. 
Subsequent iterations 
are based on dominant eigenvalues with varying 
weights determined by $\hb^+$, which reflects active 
constraints of $\tilde{\zb}_{\rm{R}}$ (of the current
solution). Since $\tilde{\zb}_{\rm{R}} 
\in \Omega$, we have $\underline{\sigma}
\big(\diag(\tilde{\zb}_{\rm{R}}) - B\big) > 0$. Thus, 
we can construct a new subset 
$\tilde{\Omega}(\underline{\zb})$ by translating the set 
$\{ \zb \ge  \tilde{\zb}_{\rm{R}}\}$ towards the boundary 
$\partial\Omega$ in the direction of $\hb^+$, so that 
$\tilde{\zb}_{\rm{R}}$ lies in the interior of new
$\tilde{\Omega}(\underline{\zb})$. In our numerical 
studies (Section~\ref{sec:Numerical}), we use 
$\bar{h} = 10$. 

\blue{ 
Note that Algorithm~\ref{algRelaxation} is guaranteed to converge 
because the problem is convex and the objective function value
decreases after each iteration. 
Although we cannot provide a convergence rate, 
numerical studies in Section~\ref{sec:Numerical} show 
that only a few iterations are needed in most cases.%
}
	
\subsection{Convex Relaxation Based on Exponential Cones}
    \label{subsec:LowerBoundEXPcone}

As explained in the previous subsection, a possible
difficulty in solving the convex relaxation in 
$(\rm{P_{R1}})$ is taking into account two 
constraints -- constraint in
\eqref{eq_pz_relax} and $\zb \in \Omega$. 
Here, we present an alternative 
convex relaxation of the original problem, which avoids these issues by
introducing auxiliary optimization 
variables and relaxing the equality
constraint in \eqref{eqSteadyState}
without the need for the constraint
set $\Omega$. 

First, recall from the previous subsection that 
the constraint in \eqref{eqSteadyState} can be 
rewritten as 
\begin{align}
\pb^{-1}\circ\boldsymbol{\lambda} + \pb^{-1}\circ B\pb 
= \boldsymbol{\lambda} + B\pb 
+ \boldsymbol{\alpha}\circ \bs + \boldsymbol{\delta}. \tag{\ref{eqSteadyState2}}
\end{align} 
Since any solution must satisfy $\pb \in (0,1]^N$, we introduce following auxiliary variables and rewrite
the equality constraint in \eqref{eqSteadyState2}: 
for fixed $\yb \in \RN_+$, define
\begin{align} 
\hspace{-0.1in} \pb := e^{-\yb}, \ \tb := \blambda\circ e^{\yb}, \ 
U := \diag(e^{\yb})B\diag(e^{-\yb}).  \label{eqEXP_var}
\end{align}
Using these new variables, \eqref{eqSteadyState2} can 
be rewritten as follows.
\begin{align}
\tb + U\1 
= \boldsymbol{\lambda} + B\pb + \boldsymbol{\alpha}\circ \bs + \boldsymbol{\delta}
\label{eqSteadyState6}
\end{align}
Then, problem $({\rm P})$ is equivalent 
to the following problem. 
\begin{align}
\hspace{-0.6in}
(\mathrm{P3})\qquad \quad  
\min_{\bs\ge \0, \pb\ge \0, \yb, \tb, U}&
\quad f(\bs, \pb) = w(\bs) + \cb\T\pb \nnb \\
\mathsf{s.t.}&\quad \eqref{eqEXP_var}, \eqref{eqSteadyState6} \nnb 
\end{align}

The equivalent problem $({\rm P3})$ is still 
nonconvex due to the constraints in 
\eqref{eqEXP_var}. We can relax these 
equality constraints with the following 
inequality convex constraints.
\begin{align} 
\hspace{-0.1in}
\1 \ge \pb \ge e^{-\yb}, \  \tb \ge  \blambda\!\circ\! e^{\yb}, 
\ U \ge \diag(e^{\yb})B\diag(e^{-\yb})
    \label{eqEXP_cone}
\end{align}
This leads to the following second convex relaxation.
\begin{tcolorbox}[colback=white]
\vspace{-0.21in}
\begin{align}
\hspace{-0.5in}
(\mathrm{P_{R2}})\qquad   
\min_{\bs \ge \0, \pb, \yb \ge \0, \tb, U}&
\quad f(\bs, \pb) = w(\bs) + \cb\T\pb \nnb \\
\mathsf{s.t.}&\quad \eqref{eqSteadyState6}, \eqref{eqEXP_cone}  \nnb 
\end{align}
\vspace{-0.28in}
\end{tcolorbox}

We can express the constraints in \eqref{eqEXP_cone} as a following set of at most 
$2N+m$ exponential cone constraints: 
\begin{subequations}
\begin{eqnarray} 
\hspace{-0.45in} 
    (p_i, 1, -y_i)\in \mathcal{K}_{\rm exp}
    && \myb 
    \mbox{ for all } i \in \mathcal{A}
    \label{eqEXP_P_cone}\\
\hspace{-0.45in} 
    (t_i, 1, y_i+\log \lambda_i)\in \mathcal{K}_{\rm exp}
    && \myb 
    \mbox{ for all } i \in \Psi_{\boldsymbol{\lambda}} 
    \label{eqEXP_T_cone}\\
\hspace{-0.45in} 
    (u_{ij}, 1, y_i - y_j +\log b_{ij})\in 
    \mathcal{K}_{\rm exp} 
    && \myb 
    \mbox{ for all } (i,j) \in 
    \mathcal{E} 
    \label{eqEXP_U_cone}
\end{eqnarray}
\end{subequations}
where $\mathcal{K}_{\rm exp} 
\!:=\! {\rm cl}(\{ (x_1, x_2, x_3)
\,|\, x_1 \!\ge\! x_2 e^{x_3/x_2}, x_2 \!>\! 0 \})$, 
and $\Psi_{\boldsymbol{\lambda}} := 
\{ i \in \mathcal{A} ~|~\lambda_i >0 \}.$ 
These constraints can be handled efficiently 
by 
conic optimization solvers, e.g., 
MOSEK \cite{mosek}.

\begin{rem}
We demonstrate below that, somewhat surprisingly, the
convex relaxations in $({\rm P_{R1}})$ and 
$({\rm P_{R2}})$ are in fact equivalent. Moreover,
although one may suspect that the size of 
$(\rm{P_{R2}})$ with $4N+m$ variables and $3N+m$ 
constraints is much larger than the size of 
$(\rm{P_{R1}})$, the constraints of $({\rm P_{R2}})$ 
are much easier to handle numerically. We will 
provide numerical results to illustrate this in Section~\ref{sec:Numerical}.
\end{rem}

Analogously to Theorem~\ref{thmConvexRelaxSolution}, 
the following theorem tells us how to find a 
feasible point of the problem ${\rm (P)}$, using 
an optimal point of problem $({\rm P_{R2}})$. In 
addition, it asserts that the two convex relaxations
$({\rm P_{R1}})$ and $({\rm P_{R2}})$ are equivalent 
in that their optimal values coincide and we can 
find an optimal point of $({\rm P_{R1}})$ from 
an optimal point of $({\rm P_{R2}})$
.

\begin{thm}\label{thmConvexRelaxSolution_Cone}
Suppose $\bx^+_{\rm R} := (\bs^+, \pb^+, \yb^+, 
\tb^+, U^+)$ is an optimal point of $(\rm{P_{R2}})$.
Then, we have
\beqa
f(\bs^+, \pb^+) 
\le f^* \le f(\bs', \pb'), 
    \label{eq:thm3}
\eeqa
where $f^*$ and $(\bs', \pb')$ are the optimal 
value and a feasible point, respectively, of the 
original problem (P) with 
\[
\pb' = e^{-\yb^+} \ \mbox{ and } \
\bs' = \bs^+ + \diag(\boldsymbol{\alpha}^{-1})
        B(\pb^+ - \pb').
\]            
Moreover, the last two constraints of 
\eqref{eqEXP_cone} are active at 
$\bx^+$, i.e.,  
\begin{align} 
\tb^+ = \blambda\circ e^{\yb^+}
    \mbox{ and } 
     U^+ = \diag(e^{\yb^+})B\diag(e^{-\yb^+}) . 
     \label{eqEXP_varopt}
\end{align}
Finally, $(\bs^+, \pb^+,\tb^+ \!+ U^+\1)$
is an optimal point of 
$(\rm{P_{R1}})$.
\end{thm}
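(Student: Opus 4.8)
Given optimal point $\bx^+_{\rm R} = (\bs^+, \pb^+, \yb^+, \tb^+, U^+)$ of $(\rm{P_{R2}})$:
1. $f(\bs^+, \pb^+) \le f^* \le f(\bs', \pb')$ with explicit feasible point for (P)
2. Last two constraints of (20) are active: $\tb^+ = \blambda\circ e^{\yb^+}$ and $U^+ = \diag(e^{\yb^+})B\diag(e^{-\yb^+})$
3. $(\bs^+, \pb^+, \tb^+ + U^+\1)$ is optimal for $(\rm{P_{R1}})$

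Let me think about the structure and what I'd prove.
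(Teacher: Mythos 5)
Your submission contains no proof: after restating the three claims of the theorem it stops at ``Let me think about the structure and what I'd prove.'' There is no argument to evaluate, so every step is a gap. Concretely, here is what is missing and why each piece is nontrivial. (i) The inequality $f(\bs^+,\pb^+) \le f^*$ is the easy part (it follows because $(\rm{P_{R2}})$ is a relaxation of (P)), but the inequality $f^* \le f(\bs',\pb')$ requires verifying that $(\bs',\pb')$ actually satisfies the equality constraint $\gb(\bs',\pb')=\0$ of (P); checking $\bs'\ge\0$ uses $\pb^+ \ge e^{-\yb^+}=\pb'$, and checking \eqref{eqSteadyState2} reduces precisely to the activity claim in \eqref{eqEXP_varopt}, so the two parts are not independent. (ii) The activity of the constraints on $\tb$ and $U$ is the heart of the proof and does not follow from monotonicity alone; the paper establishes it by writing the Lagrangian of $(\rm{P_{R2}})$, using the KKT stationarity conditions to show the multiplier $\bsigma$ of the equality constraint \eqref{eqSteadyState6} satisfies $\sigma_i = \mu_{t_i} = \phi_{ij}$ and then deriving $\bsigma > \0$ from $\cb>\0$, after which complementary slackness forces \eqref{eqEXP_varopt}. (iii) The claim that $(\bs^+,\pb^+,\tb^+ + U^+\1)$ solves $(\rm{P_{R1}})$ requires showing the two relaxations are equivalent: one must substitute $\vv := (\diag(\zb)-B)^{-1}\blambda$ in $(\rm{P_{R1}})$, argue that $\zb\in\Omega$ if and only if $\vv>\0$ (which uses irreducibility of $B$ and Lemma~\ref{lem_M_Matrix}), and match the resulting problem with the reduced form of $(\rm{P_{R2}})$ obtained after eliminating $\tb$ and $U$ via \eqref{eqEXP_varopt} and setting $\vv=e^{-\yb}$. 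None of these steps appears in your proposal.
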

\begin{proof} 
Please see Appendix \ref{appen_exact_relax_cone}.  
\end{proof}

As a direct consequence of the theorem, 
the sub-optimality of $(\bs', \pb')$ can be assessed 
using the gap $f(\bs', \pb') - f(\bs^+, \pb^+)$. 
Similar to Lemma~\ref{corExactRelaxation}, 
the condition in \eqref{eqExactRelaxationCond2} 
provides a sufficient condition for this gap to 
be zero, i.e.,  
the convex relaxation in $({\rm P_{R2}})$ is
exact.

\section{Upper bounds on Optimal Value}
    \label{sec:UpperBound}

The previous section described (i) how we can 
formulate a convex relaxation of problem $(\rm P)$, 
which provides a lower bound on the optimal value
of $(\rm P)$, using two different techniques and (ii)
how to find a feasible solution to $(\rm P)$ using an 
optimal point of a convex relaxation. 

Although the convex relaxation $(\rm{P_{R1}})$ 
or $({\rm P_{R2}})$ may 
be exact under certain conditions, this is not 
true in general.
In addition, $({\rm P_{R1}})$ may not scale 
well due to the 
constraint in \eqref{eq_pz_relax}; see also 
Remark~\ref{remRelaxationIssues} above and numerical 
results in Section~\ref{sec:Numerical}. For these 
reasons, we also propose efficient algorithms 
for finding a local minimizer of the
nonconvex problem $({\rm P})$ in this section. 
These algorithms provide an 
upper bound on the optimal value, which, together 
with the optimal value of a convex relaxation 
when available, can be used to offer a bound on 
the optimality gap.

\subsection{Reduced Gradient Method}
    \label{sec:RGM}

Among different nonconvex optimization approaches, 
we first choose the 
RGM \cite{lasdon1974nonlinear,
gabay1976efficiently} because it is well suited to the problem $({\rm P})$ 
and, more importantly, is scalable. 

\subsubsection{Main Algorithm}
	
First, together with Theorem~\ref{thmEquilibrium}, 
the implicit function theorem tells us that the 
condition $\gb (\bs, \pb) = \0$ in 
\eqref{eqSteadyState} defines a continuous mapping 
$\pb^*: \sbold \in\R^N_+ \mapsto \pb^*(\sbold) 
\in (0,1)^N$ such that $\gb(\bs, \pb^*(\bs)) = 0$. 
Thus, problem $({\rm P})$ can be transformed 
to a reduced problem 
only with optimization variables $\sbold$:
\begin{align}
\min_{\bs \in \mathbb{R}^N_+}&\quad 
    F(\bs):= w(\bs) + \cb\T\pb^*(\bs).
    \label{eqObjectiveFun_S}
\end{align}
Suppose that $(\bs^\star, \pb^\star)$ is a feasible point 
of $({\rm P})$. Then, the gradient of $F$ at $\bs^\star$ 
is equal to
\begin{equation*} 
\nabla F(\bs^\star) 
= \nabla w(\bs^\star) + J(\bs^\star)\T\cb,
\end{equation*}
where $J(\bs^\star) = \big[ \partial p^*_i(\bs^\star) 
/ \partial s_j \big]$. This matrix can be computed 
by  totally differentiating $\gb (\bs, \pb^*(\bs)) 
= \0$ at $\bs^\star$: 
the calculation of total derivative yields 
\begin{equation}    \label{eq:MJ=p}
M(\bs^\star)J(\bs^\star) 
= -\diag(\boldsymbol{\alpha} \circ\pb^\star)
\end{equation}
with $M(\bs^\star) 
= \diag(\boldsymbol{\alpha}\circ\bs^\star 
	+ \boldsymbol{\delta} 
    + \boldsymbol{\lambda} + B \pb^\star) 
    - \diag(\1 - \pb^\star) B$. 
The following lemma shows that $M(\bs^\star)$ is 
nonsingular.
\begin{lemma}   \label{lemma_M_M_matrix}
	The matrix $M(\bs^\star)$ is a nonsingular M-matrix.
\end{lemma}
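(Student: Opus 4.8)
The plan is to verify that $M(\bs^\star)$ satisfies the conditions of a nonsingular M-matrix via the characterization in Lemma~\ref{lem_M_Matrix}, exploiting the specific test vector $\pb^\star$ supplied by the equilibrium. Writing $\zb := \boldsymbol{\alpha}\circ\bs^\star + \boldsymbol{\delta} + \boldsymbol{\lambda} + B\pb^\star$, so that $M(\bs^\star) = \diag(\zb) - \diag(\1-\pb^\star)B$, I would first check that $M(\bs^\star)$ is a Z-matrix. Since $\pb^\star \in (0,1)^N$ by Theorem~\ref{thmEquilibrium} and $B \ge \0$ has zero diagonal, each off-diagonal entry $-(1-p_i^\star)b_{i,j}$ is nonpositive, while the diagonal entry reduces to $z_i = \alpha_i s_i^\star + \delta_i + \lambda_i + [B\pb^\star]_i > 0$ (using $\boldsymbol{\delta}>\0$); hence $M(\bs^\star)$ is a Z-matrix.

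The key step is to exhibit a strictly positive vector on which $M(\bs^\star)$ acts positively, so that condition~(g) of Lemma~\ref{lem_M_Matrix} applies. I would test $\pb^\star$ itself. Computing the $i$-th component of $M(\bs^\star)\pb^\star$ and substituting the equilibrium relation $g_i(\bs^\star,\pb^\star)=0$, i.e. $(1-p_i^\star)(\lambda_i + [B\pb^\star]_i) = (\alpha_i s_i^\star + \delta_i)p_i^\star$, the cross terms cancel and I expect the clean identity
\begin{align*}
M(\bs^\star)\pb^\star = \boldsymbol{\lambda} + \pb^\star\circ(B\pb^\star).
\end{align*}
Once the algebra collapses to this form, positivity is immediate: $\boldsymbol{\lambda}\ge\0$, and because $B$ is irreducible (so no row of $B$ is zero) and $\pb^\star>\0$, every entry $[B\pb^\star]_i = \sum_j b_{i,j}p_j^\star$ is strictly positive, whence $M(\bs^\star)\pb^\star > \0$. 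Applying Lemma~\ref{lem_M_Matrix}-(g) with $\xx=\pb^\star>\0$ then concludes that $M(\bs^\star)$ is a nonsingular M-matrix.

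I expect the only delicate point to be the \emph{strict} positivity in the last step, rather than any hard estimate: since $\boldsymbol{\lambda}$ may vanish in some coordinates, strictness must be carried entirely by the term $\pb^\star\circ(B\pb^\star)$, which is exactly where irreducibility of $B$ enters. An alternative, slightly shorter route avoids the explicit computation: noting $0 < 1-p_i^\star < 1$ gives $M(\bs^\star) \ge \diag(\zb) - B$ entrywise, while $\diag(\zb)-B$ is already known to be a nonsingular M-matrix (the argument preceding problem (P2), since $(\diag(\zb)-B)\pb^\star = \boldsymbol{\lambda}$). As $M(\bs^\star)$ is a Z-matrix dominating a nonsingular M-matrix, any $\xx>\0$ with $(\diag(\zb)-B)\xx>\0$ also satisfies $M(\bs^\star)\xx \ge (\diag(\zb)-B)\xx > \0$, and Lemma~\ref{lem_M_Matrix}-(g) again applies.
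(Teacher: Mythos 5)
Your proposal is correct and follows essentially the same route as the paper's proof: verify the Z-matrix property, derive the identity $M(\bs^\star)\pb^\star = \boldsymbol{\lambda} + \pb^\star\circ(B\pb^\star)$ from the equilibrium constraint, and invoke Lemma~\ref{lem_M_Matrix}-(g) with the test vector $\pb^\star>\0$. You actually spell out one point the paper leaves implicit -- that strict positivity of the right-hand side in coordinates where $\lambda_i=0$ is carried by $[B\pb^\star]_i>0$ via irreducibility of $B$ -- and your alternative comparison argument against $\diag(\zb)-B$ is also valid, but neither changes the substance of the approach.
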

\begin{proof}
	First, note that $M(\bs^\star)$ is a Z-matrix. 
	Second, after some algebra, the constraint 
	$\gb(\bs^\star, \pb^\star) = \0$ is equivalent to
	$M(\bs^\star)\pb^\star 
	= \boldsymbol{\lambda} 
	    + \pb^\star \circ (B \pb^\star)$. 
	Since $\pb^\star > \0$, we have $\boldsymbol{\lambda} 
	+ \pb^\star \circ (B \pb^\star) > \0$. Thus, 
	Lemma~\ref{lem_M_Matrix}-(g) implies that $M(\bs^\star)$ 
	is a nonsingular M-matrix. 
\end{proof}

\blue{As a result}, 
$J(\bs^\star) = - M(\bs^\star)^{-1} \diag(\boldsymbol{\alpha}\circ\pb^\star)$
from \eqref{eq:MJ=p} and the gradient of $F$ is given by 
\begin{equation*} 
\nabla F(\bs^\star)
= \nabla w(\bs^\star) 
    - \boldsymbol{\alpha}\circ\pb^\star \circ \big( \big( M(\bs^\star)\big)\nT \cb \big).
\end{equation*}
Hence, we can apply the \blue{(projected)} gradient descent 
method on the reduced problem in  \blue{\eqref{eqObjectiveFun_S}. 
For instance, \cite[Proposition~2.3.3]{Bertsekas99Book} shows that this method converges to a stationary point under step sizes 
$\{\gamma_t\}_{t\ge 0}$ chosen by the Armijo backtracking line search.}

Note that, after each update of $\bs$ during a search, 
we need to compute the corresponding $\pb$ so that 
$(\bs, \bp)$ is feasible for the problem $({\rm P})$. 
As mentioned earlier,
this can be done by using the fixed point iteration 
in \eqref{eqFixedPointEquilibrium}. 

Our proposed algorithm is provided in Algorithm
\ref{algReduceGrad}.

\begin{algorithm2e}[hbt]\label{algReduceGrad}
	\DontPrintSemicolon
	\textbf{init}:  $t=0$, feasible $(\bs^{(0)}, \pb^{(0)})$\;
	\While{stopping cond. not met}{
		\mbox{\!$M^{(t)} \!\!\gets \! \diag(\boldsymbol{\alpha} \!\circ\!\bs^{(t)} \!+\!\boldsymbol{\delta}\!+\! \boldsymbol\lambda \!+\!  B\pb^{(t)} ) \!-\! \diag(\1\!-\!\pb^{(t)})B$}\;
		\vspace{-4mm}
		$\uu \gets  (M^{(t)})\nT\cb$\;
		$\gamma_t \gets$ \textsc{line\_search}\;
		$\bs^{(t+1)} \gets  \big[ \bs^{(t)} - \gamma_t \big( \nabla w(\bs^{(t)}) - \boldsymbol{\alpha}\circ\pb^{(t)}\circ \uu\big) \big]_+$\;
		$\pb^{(t+1)} \gets \pb^*(\bs^{(t+1)})$ using \eqref{eqFixedPointEquilibrium}\;
		$t\gets t+1$\;\vspace{-1mm}
	}
	\caption{Reduced Gradient Method}%
\end{algorithm2e}
\vspace{-0.1in}

\subsubsection{Computational Complexity and Issues}
	\label{subsec:ComputationalComp}

For large systems, a naive evaluation of the gradient 
$\nabla F$, which requires the inverse matrix 
$\big( M(\bs^\star) \big)\nT$, becomes computationally 
expensive, if not infeasible. For this reason, we develop 
an efficient subroutine for computing $\nabla F$. This is
possible because our algorithm only requires 
$\uu$ (in line~4 of Algorithm~\ref{algReduceGrad}), not 
$\big( M(\bs^\star) \big)\nT$. 

For fixed $t \in \N := \{0, 1, \ldots\}$, 
the vector $\uu$ is a solution to 
a set of linear equations $M\T \uu= \cc$, where the 
matrix $M\T$ tends to be sparse for most real graphs
$\mathcal{G}$. Thus, there are several efficient algorithms
for solving them. In this paper, we employ the power 
method: let $M = D - E$, where $D$ and $E$ denote the 
diagonal part and off-diagonal part of $M$, 
respectively. Then, the linear equations are 
equivalent to $\cc =  D\uu - E\T\uu$.
Since $D$ is invertible, the following fixed point 
relation holds: 
\begin{equation}
\uu = D^{-1}E\T\uu + D^{-1}\cc 
    =: G(\uu). \label{eqFixedPointGradient}
\end{equation}
As $M \in \mathbb{M}^{N \times N}_+$ (Lemma
\ref{lemma_M_M_matrix}), Lemma~\ref{lem_M_Matrix}-(d) 
tells us that $M = D - E$ is a convergent 
splitting and the mapping $G$ in 
\eqref{eqFixedPointGradient} is a contraction mapping 
with coefficient $\rho(D^{-1}E\T) < 1$. Hence, the 
iteration $\uu_{k+1} = G(\uu_k)$ converges to the 
solution $\uu$ exponentially fast. 
Moreover, this iteration 
is highly scalable 
\blue{
because $E = \diag(\1-\pb)B$ 
is sparse, requiring only $O(|\mathcal{E}|)$ 
memory space and $O(|\mathcal{E}|)$ operations per iteration.}

\subsection{Sequential Convex Programming Method}
    \label{subsec:SCM}

In this subsection, we will develop a second efficient 
algorithm for finding a {\rm local} minimizer of 
the original nonconvex problem $({\rm P})$. This 
will provide another upper bound 
on the optimal value we can use to provide an 
optimality gap together with a lower bound from 
convex relaxations.

Our algorithm is based on SCP
applied to the original formulation in \eqref{eqObjectiveFun}--\eqref{eqSteadyState}. To this 
end, we successively convexify the constraint 
\eqref{eqSteadyState} using 
first order approximations. 
The novelty of our approach 
is to linearize (only) the terms $p_ip_j$ and then 
employ either the convexity result in 
Lemma~\ref{lemConvexityInvMMatrix} 
or the exponential cone formulation as in Section~\ref{sec:LowerBound}.


At each iteration $t \in \N$, we 
replace the terms $p_ip_j$ with their first order Taylor 
expansion at $\pb^{(t)}$, resulting in the following 
partially linearized equality constraint functions: 
\begin{eqnarray}
\gb^{(t)}(\bs, \pb) 
\myeq \boldsymbol\lambda +  \pb^{(t)}\circ (B\pb^{(t)}) 
    - \big( \boldsymbol{\lambda} 
    + B\pb^{(t)} \big)\circ \pb \label{eqLinearize_i} \\
&& + (\1-\pb^{(t)})\circ(B\pb) - (\boldsymbol{\alpha}\circ\bs 
    + \boldsymbol{\delta})\circ \pb.
    \nonumber
\end{eqnarray}
The partial linearization error 
\blue{is equal to 
$\gb^{(t)}(\bs, \pb) - \gb(\bs, \pb) = (\pb-\pb^{(t)})\circ B(\pb-\pb^{(t)}).$}   
When $\pb$ is close to $\pb^{(t)}$, 
this error will likely be `small' and we 
expect the linearization step to be acceptable. 
This allows us to approximate the constraint 
\eqref{eqSteadyState} with $\gb^{(t)}(\bs, \pb) = \0$,
which can be rewritten as
\begin{align}
\big( L^{(t)} + \diag(\boldsymbol{\alpha}\circ\bs) \big)\pb = \boldsymbol{\lambda}^{(t)}, \label{eqSS_linearized}
\end{align}
where $\boldsymbol{\lambda}^{(t)} = \boldsymbol\lambda +  \pb^{(t)}\circ(B\pb^{(t)})$, and 
\begin{align*}
L^{(t)} &=  \diag(\boldsymbol{\delta}+ \boldsymbol\lambda +  B\pb^{(t)} ) - \diag(\1-\pb^{(t)})B .
\end{align*}

This gives us the following subproblem we need to 
solve at each iteration $t \in \N$.
\begin{eqnarray}\label{eq:subprob1}
\hspace{-0.4in} {\rm (S1)} \hspace{0.4in}
\min_{\bs\ge \0, \pb\ge \0} \quad \big\{  w(\bs) + \cb\T\pb 
~|~\eqref{eqSS_linearized}~\text{holds} \big\}
\end{eqnarray}
The solution at the $t$-th iteration is then
used to construct a new constraint for the $(t+1)$-th
iteration, and we repeat this procedure until some
stopping condition is met. 
Unfortunately, 
the problem in \eqref{eq:subprob1} is still nonconvex.
But, as we show below, under certain conditions, 
it can be transformed 
to a convex problem, which can be solved efficiently.

\subsubsection{Convex Formulation Based on M-matrix}
    \label{subsubsec:M-matrix}

First, we show that if $(\bs, \pb)$ is feasible for 
the subproblem in \eqref{eq:subprob1}, then $L^{(t)} + 
\diag(\boldsymbol{\alpha}\circ\bs) \in \mathbb{M}_+^{N\times N}$: note that this is always a Z-matrix. 
Also, from \eqref{eqLinearize_i} and \eqref{eqSS_linearized}, any feasible $\pb$ 
for the subproblem must be positive,
and given $\pb^{(t)} > \0$, we have $\boldsymbol{\lambda}^{(t)}
> \0$ from its definition. Because $\pb$ and 
$\boldsymbol{\lambda}^{(t)}$ are positive, together with 
condition \eqref{eqSS_linearized}, Lemma~\ref{lem_M_Matrix}-(g)
implies that $L^{(t)} \!+\! \diag(\boldsymbol{\alpha}\circ\bs) \!\in\! \mathbb{M}_+^{N\times N}$. This in turn tells us from 
\mbox{Lemma~\ref{lem_M_Matrix}-(b)} that its inverse exists and is 
nonnegative. 
Consequently,
\begin{eqnarray}
\pb 
= \big( L^{(t)} + \diag(\boldsymbol{\alpha}\circ\bs) \big)^{-1} 
    \boldsymbol{\lambda}^{(t)} 
> \0 .
    \label{eq:subprob2}
\end{eqnarray}

This allows us to reformulate the subproblem in 
\eqref{eq:subprob1} as follows: we 
replace $\pb$ with the above expression in 
\eqref{eq:subprob2} and introduce a new constraint that $\bs$ 
belongs to a feasible set
\[
\Omega^{(t)} = \big\{\bs \in
\RN_+ \ | \ L^{(t)} + \diag(\boldsymbol{\alpha}\circ\bs)\in \mathbb{M}^{N\times N}_+ \big\}.
\]
Note that $\Omega^{(t)}$ is convex (the proof follows 
similar arguments in Appendix~\ref{Appendix_CvxOmega}). 
The partially linearized 
subproblem in \eqref{eq:subprob1} can now be 
written as
\begin{equation}\label{SCP_M_Mat}
\hspace{-0.8in} ({\rm P_L}) \hspace{0.5in}  
\displaystyle \min_{\bs \in \Omega^{(t)}}\quad J(\bs) 
    := w(\bs) + \zeta(\bs),
\end{equation}
where 
$\zeta(\bs) 
:= \cb\T\big( L^{(t)} 
    + \diag(\boldsymbol{\alpha}\circ\bs)\big)^{-1}
        \boldsymbol{\lambda}^{(t)}$
is a convex and decreasing function on $\Omega^{(t)}$ 
in view of  Lemma~\ref{lemConvexityInvMMatrix}. 
Thus, the problem $({\rm P_L})$ is convex at every 
iteration $t$. Note that, when solving $({\rm P_L})$, 
we need to ensure the constraint $\bs \in \Omega^{(t)}$ is
satisfied. This can be done in a manner similar to 
that discussed in subsection~\ref{secFeasibleSubsets}.

After computing an optimal point of $({\rm P_L})$ at the 
$t$-th iteration, which we denote by $\bs^{(t+1)}$, we then 
find $\pb^{(t+1)}$ satisfying $\gb(\bs^{(t+1)}, \pb^{(t+1)})
= \0$ (constraint \eqref{eqSteadyState}) using the fixed 
point iteration in \eqref{eqFixedPointEquilibrium}. 
Thus, we obtain a feasible solution 
$(\bs^{(t+1)}, \pb^{(t+1)})$ to the original problem 
$({\rm P})$ after each iteration $t \in \N$. 
The proposed algorithm based on this approach is provided 
in Algorithm~\ref{algSCP} below. 
\begin{algorithm2e}[hbt]\label{algSCP}
	\DontPrintSemicolon
	\textbf{init}:  $t=0$, $\pb^{(0)} \in [0,1]^N$\;
	\While{stopping cond. not met}{
		$\boldsymbol{\lambda}^{(t)} \gets  \boldsymbol\lambda +  \pb^{(t)}\circ(B\pb^{(t)})$\;
		$L^{(t)} \gets   \diag(\boldsymbol{\delta}+ \boldsymbol\lambda +  B\pb^{(t)} ) - \diag(\1-\pb^{(t)})B$\;
		\mbox{$\bs^{(t+1)} \!\gets\! \displaystyle\arg\!\!\min_{\bs \in \Omega^{(t)}} w(\bs) \!+\! \cb\T  \big( L^{(t)} \!\!+\! \diag(\boldsymbol{\alpha}\circ\bs) \big)^{-1}\!\blambda^{(t)}$}\;
		\vspace{-4mm}
		$\pb^{(t+1)} \gets  \pb^*(\bs^{(t+1)})$ using \eqref{eqFixedPointEquilibrium}\;
		$t\gets t+1$\;\vspace{-1mm}
	}
	\caption{Sequential Convex Programming}%
\end{algorithm2e}
\vspace{-0.1in}

\begin{rem}(\emph{Complexity}) 
For small and medium-sized networks, the subproblem can be 
solved using off-the-shelf numerical convex solvers, e.g., 
interior point methods. In this paper, we use an 
interior-point method to solve the subproblem 
$({\rm P_L})$, 
which employs the Newton's algorithm on a sequence of 
equality constrained problems. Since the number of 
variables is $O(N)$ and the number of constraints is 
also $O(N)$, the worst case 
complexity is $O(N^3)$ \cite{boyd2004convex}. 

For large networks, we take advantage of the fact
that we do not need to solve $({\rm P_L})$ exactly
at each iteration. Thus, we can use simple approximations
of $\Omega^{(t)}$ and employ computationally cheaper methods
to solve $({\rm P_L})$. For example, we can follow the 
same gradient-based approach in \cite{mai2018optimizing} 
for solving the subproblem, where the gradient of 
$\zeta(\bs)$ given by
\[
\nabla \zeta(\bs) \!=\! -(S^{-\mathsf{T}}\cb)\!\circ\!\boldsymbol{\alpha}\!\circ\!(S^{-1}\boldsymbol{\lambda}^{(t)}),
\]
where $S = L^{(t)} + \diag(\boldsymbol{\alpha}\circ\bs)$,
can be computed efficiently using the power method as 
explained in subsection~\ref{subsec:ComputationalComp}. 
\end{rem}

\subsubsection{{Convex Formulation Based on Exponential Cones}}

As discussed above, if $(\bs, \pb)$ is feasible 
for problem (S1), $L^{(t)} + 
\diag(\boldsymbol{\alpha}\circ\bs)$ is a nonsingular 
M-matrix and $\pb > \0$; see also~\eqref{eq:subprob2}. 
Thus, we can introduce a new variable $\yb$ satisfying 
\begin{equation*} 
\hspace{-0.1in} \pb = e^{-\yb}.
\end{equation*}
Then, \eqref{eqSS_linearized} becomes
$\big( L^{(t)} \!+\! \diag(\boldsymbol{\alpha}\circ\bs) \big)e^{-\yb} \!=\! \boldsymbol{\lambda}^{(t)}$,
which, after left-multiplying both sides by $\diag(e^{\yb})$, \mbox{is equivalent to}
\begin{align}
\boldsymbol{\alpha}\circ\bs + \boldsymbol{\delta}+ \boldsymbol\lambda +  B\pb^{(t)}  =  \diag(e^{\yb})B^{(t)}e^{-\yb} + \diag(e^{\yb})\boldsymbol{\lambda}^{(t)},\nnb
\end{align}
where $B^{(t)} = \diag(\1-\pb^{(t)})B$. As a result, the subproblem is equivalent to the following problem.
\begin{align}
\hspace{-0.4in} {\rm (S2)} \hspace{0.25in}
\min_{\bs\ge \0, \yb, \tb, U}&
\quad f^{(t)} = w(\bs) + \cb\T e^{-\yb} \nnb \\
\mathsf{s.t.}
&\quad \tb + U\1 = \boldsymbol{\alpha}\circ\bs + \boldsymbol{\delta}+ \boldsymbol\lambda +  B\pb^{(t)} \nnb \\
&\quad \tb = \blambda^{(t)}\!\circ\! e^{\yb}\nnb\\ 
&\quad U = \diag(e^{\yb})B^{(t)}\diag(e^{-\yb}) \nnb
\end{align}
A convex relaxation of ${\rm (S2)}$ can be obtained
by replacing the last two equality constraints
with inequality constraints.
\begin{subequations}  \label{SCP_ExpC}
\begin{align}
\hspace{-0.1in} ({\rm S_{R1}}) \hspace{0.2in}
\min_{\bs\ge \0, \yb, \tb, U}&
\quad f^{(t)} = w(\bs) + \cb\T e^{-\yb} 
    \nonumber \\
\mathsf{s.t.}
&\quad \tb + U\1 
= \boldsymbol{\alpha}\circ\bs + \boldsymbol{\delta}
    + \boldsymbol\lambda +  B\pb^{(t)} 
    \label{SCP_ExpC_Linear} \\
&\quad \tb \ge \blambda^{(t)} \circ e^{\yb}
    \label{SCP_ExpC_t}\\
&\quad U \ge \diag(e^{\yb}) B^{(t)} \diag(e^{-\yb})
    \label{SCP_ExpC_U}
\end{align}
\end{subequations}
It turns out that this convex relaxation
is always exact. 

\begin{thm}\label{thmSCP_Cone_Exact}
The convex relaxation $({\rm S_{R1}})$ is exact.
\end{thm}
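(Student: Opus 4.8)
The plan is to prove exactness by showing that the optimal value of $({\rm S_{R1}})$ coincides with that of its unrelaxed counterpart $({\rm S2})$ (equivalently, of the subproblem (S1)). Since replacing the two equalities of $({\rm S2})$ by the inequalities \eqref{SCP_ExpC_t}--\eqref{SCP_ExpC_U} only enlarges the feasible set, the optimal value of $({\rm S_{R1}})$ is automatically a lower bound for that of $({\rm S2})$. Hence it suffices to take an optimal point $\bx^+ = (\bs^+, \yb^+, \tb^+, U^+)$ of $({\rm S_{R1}})$ and prove that \eqref{SCP_ExpC_t}--\eqref{SCP_ExpC_U} are active there, which makes $\bx^+$ feasible for $({\rm S2})$ while having the same objective value, forcing the two optimal values to agree.

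The key structural fact I would exploit is that the objective $f^{(t)} = w(\bs) + \cb\T e^{-\yb}$ does not depend on $\tb$ or $U$; these variables enter only through the linear balance \eqref{SCP_ExpC_Linear}, namely $\tb + U\1 = \boldsymbol{\alpha}\circ\bs + \boldsymbol{\delta} + \boldsymbol\lambda + B\pb^{(t)}$. Writing $\pb^+ := e^{-\yb^+}$ and denoting by $\underline{\tb}(\yb) := \blambda^{(t)}\circ e^{\yb}$ and $\underline{U}(\yb) := \diag(e^{\yb})B^{(t)}\diag(e^{-\yb})$ the lower bounds appearing in \eqref{SCP_ExpC_t}--\eqref{SCP_ExpC_U}, the balance equation forces $\boldsymbol{\alpha}\circ\bs^+ + \boldsymbol{\delta} + \boldsymbol\lambda + B\pb^{(t)} = \tb^+ + U^+\1 \ge \underline{\tb}(\yb^+) + \underline{U}(\yb^+)\1$. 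I would then argue coordinatewise: suppose the combined inequality is strict in coordinate $i$, i.e. $t_i^+ + (U^+\1)_i > (\underline{\tb}(\yb^+))_i + (\underline{U}(\yb^+)\1)_i$. If $s_i^+ > 0$, I can lower $s_i^+$ and simultaneously lower $t_i^+$ (or some $U^+_{ij}$) by a matching amount, preserving \eqref{SCP_ExpC_Linear} and all lower bounds; since $w$ is strictly increasing this strictly decreases $f^{(t)}$, contradicting optimality. This forces $s_i^+ = 0$ whenever coordinate $i$ is slack.

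The remaining and genuinely delicate case is $s_i^+ = 0$ together with strict slack, which forces $(\underline{\tb}(\yb^+))_i + (\underline{U}(\yb^+)\1)_i < \delta_i + \lambda_i + (B\pb^{(t)})_i$; that is, the lower bound on $t_i + (U\1)_i$ lies strictly below the corresponding right-hand constant. Here $\bs$ cannot be reduced any further, so I would instead perturb $\yb$. Increasing $y_i$ raises the $i$-th combined lower bound $(\underline{\tb}(\yb))_i + (\underline{U}(\yb)\1)_i$ toward the constant $\delta_i + \lambda_i + (B\pb^{(t)})_i$ (closing exactly the available slack), while it only lowers the combined lower bounds in the coordinates $j \ne i$, since each off-diagonal factor $e^{y_j} b_{ji} e^{-y_i}$ decreases; thus feasibility is preserved for a small increase. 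Because the objective contains $c_i e^{-y_i}$ with $c_i = c^r_i + c^e_i > 0$, increasing $y_i$ strictly decreases $f^{(t)}$, again contradicting optimality. Hence no coordinate can be slack, so every inequality in \eqref{SCP_ExpC_t}--\eqref{SCP_ExpC_U} is active at $\bx^+$; the activeness of each individual entry $U^+_{ij}$ then follows, as in the proof of Theorem~\ref{thmConvexRelaxSolution_Cone}, from the fact that a row sum at its lower bound with every entry above its lower bound forces entrywise equality. Consequently $\bx^+$ is feasible for $({\rm S2})$, and exactness follows.

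I expect the $s_i^+ = 0$ boundary case to be the main obstacle: the naive ``reduce $\tb$, $U$ and $\bs$ together'' argument breaks down precisely at the nonnegativity constraint $\bs \ge \0$, and the remedy requires the second perturbation in $\yb$ together with strict positivity of the infection costs. (When some $c_i = 0$ the objective is indifferent to $y_i$; one then simply selects, among the optimal points, one in which $y_i$ has been increased until coordinate $i$ becomes active, which leaves the optimal value unchanged and is all that exactness requires.) Combining the two contradictions shows that an optimal point of $({\rm S_{R1}})$ satisfies the equalities \eqref{SCP_ExpC_t}--\eqref{SCP_ExpC_U}, proving that $({\rm S_{R1}})$ is exact.
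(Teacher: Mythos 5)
Your proposal is correct in its overall strategy but takes a genuinely different route from the paper. The paper's proof is dual: it writes the Lagrangian of $({\rm S_{R1}})$, and from stationarity in $\tb$ and $U$ deduces that the multiplier $\bsigma$ of the balance constraint \eqref{SCP_ExpC_Linear} coincides with the multipliers $\bmu_t$ and $\Phi$ of the cone constraints; stationarity in $\yb$ then gives $\sigma_i\bigl[\lambda^{(t)}_i e^{y_i^+} + \sum_j b^{(t)}_{ij}e^{y_i^+-y_j^+}\bigr] = c_i e^{-y_i^+} + \sum_j \sigma_j b^{(t)}_{ji}e^{y_j^+-y_i^+} > 0$, forcing $\bsigma>\0$, and complementary slackness makes \eqref{SCP_ExpC_t}--\eqref{SCP_ExpC_U} active. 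You instead argue purely in the primal, by exhibiting a feasible descent perturbation whenever a row of the cone constraints has aggregate slack: trading slack against $s_i$ when $s_i^+>0$ (using that $w$ is strictly increasing), and against $y_i$ when $s_i^+=0$ (using $c_i>0$). Your route is more elementary -- it needs no constraint qualification to justify the existence of KKT multipliers and makes visible exactly where monotonicity of $w$ and positivity of $\cb$ enter -- while the paper's is shorter given that the identical Lagrangian computation is already carried out for Theorem~\ref{thmConvexRelaxSolution_Cone}. Your handling of $c_i=0$ (select among optimizers one with the constraint active) is a legitimate refinement the paper does not address, since the paper's proof genuinely requires $c_i>0$.

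One step needs a small repair. In your second case ($s_i^+=0$ with slack in row $i$) you increase $y_i$ and assert that feasibility is preserved because the \emph{combined} lower bound $\underline{t}_i(\yb)+(\underline{U}(\yb)\1)_i$ stays below the fixed right-hand side. But the constraints \eqref{SCP_ExpC_t}--\eqref{SCP_ExpC_U} are entrywise: if some individual entry, say $u^+_{ik}$, already equals $b^{(t)}_{ik}e^{y^+_i-y^+_k}$ while the slack resides in $t^+_i$ or another entry of row $i$, then raising $y_i$ while holding $U^+$ fixed violates \eqref{SCP_ExpC_U} at $(i,k)$. The fix is immediate: after increasing $y_i$ by a small $\epsilon$, reassign $u_{ij}\gets b^{(t)}_{ij}e^{y_i+\epsilon-y_j}$ for all $j$ and put the remaining (still positive, by continuity) slack into $t_i$; the row sum, and hence \eqref{SCP_ExpC_Linear}, is unchanged, and rows $j\neq i$ are unaffected because their lower bounds involving $y_i$ only decrease. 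With that redistribution the descent direction is feasible, the contradiction goes through, and your proof is complete.
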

\begin{proof}
A proof can be found in Appendix
\ref{appen_exact_SCP_cone}. 
\end{proof}

We end this subsection by noting that this convex formulation is in fact equivalent to the one based on M-matrix in \eqref{SCP_M_Mat}. This can be shown using similar arguments used in the proof of Theorem~\ref{thmConvexRelaxSolution_Cone} and is omitted here.

\section{Special Case: $\boldsymbol{\lambda} = \0$} \label{sec:Case_lambda0}

In practice, we expect that the systems experience 
primary attacks infrequently and $\boldsymbol{\lambda}$
is small, and that steady-state infection probabilities
are not large.  
For this reason, we consider a limit case of our problem 
as $\boldsymbol{\lambda} \to \0$ with diminishing primary
attack rates. As we show, studying the special case
with $\boldsymbol{\lambda} = \0$ reveals additional insights 
into the steady-state behavior and provides an approximate 
upper bound on the system cost when $\boldsymbol{\lambda} 
\approx \0$, which can be computed easily.

This case reduces to a problem that has been studied 
by previous works, in which the adjustable curing rate
is equal to $\delta_i / q_i(s_i)$ for each 
$i \in \cA$.\footnote{In the previous studies
\cite{Mieghem-TON09, mai2018distributed,
Ottaviano-JCN18}, security investments affect the curing rates
rather than the breach probability, i.e., they determine
how quickly each system can recover from an infection, but
do not change the infection probability of systems.} 
A key difference between this case and when $\boldsymbol{\lambda} 
\gneq {\bf 0}$ is that Theorem~\ref{thmEquilibrium} cannot 
be applied to guarantee the uniqueness of an equilibrium 
because the assumption $\boldsymbol{\lambda}
\gneq {\bf 0}$ is violated. It turns out that this 
difference has significant effects on our problem, 
as it will be clear.

\subsection{Preliminary}
	\label{subsec:Prelim}
	
In the absence of primary attacks, if 
no system is infected at the beginning, obviously
they will remain at the state. However, 
if some systems are infected initially, there are two
possible outcomes based on the security investments
$\bs$. 
\\ \vspace{-0.12in}

{\bf Case 1: $\rho\big( {\rm diag}(\boldsymbol{\alpha}
\circ \bs + \boldsymbol{\delta})^{-1} B \big) \leq 1$ --} 
In this case, the unique (stable) equilibrium 
of \eqref{eq:pdot} is $\pb_{\rm se}(\bs) 
= \0$. Thus, 
as $t \to \infty$, $\pb(t) \to \0$ and 
all systems become free of infection. 

{\bf Case 2: $\rho\big( {\rm diag}(\boldsymbol{\alpha}
\circ \bs + \boldsymbol{\delta})^{-1} B \big) > 1$ --} 
In this case,
there are two equilibria of 
\eqref{eq:pdot} -- one stable equilibrium 
$\pb_{\rm se}(\bs) > \0$ and
one unstable equilibrium $\0$: (a) if $\pb(0) \neq 
\0$, although there are no primary attacks, 
we have $\pb(t) \to \pb_{\rm se}(\bs)$. 
As a result, somewhat surprisingly, 
infections continue to transmit 
among the systems indefinitely and do not go away;
and (b) if $\pb(0) = \0$, obviously $\pb(t) = \0$
for all $t \in \R_+$.
\\ \vspace{-0.1in}

Based on this observation, 
we define a function $\pb_{\rm se}:\R^N_+ \to 
[0, 1]^N$, where $\pb_{\rm se}(\bs)$ is the 
aforementioned stable equilibrium
of \eqref{eq:pdot} for the given security investment
vector $\bs \in \R^N_+$. 
It is shown in Appendix~\ref{appen:Continuity} 
that  $\pb_{\rm se}$ is a 
continuous function over $\R^N_+$.
This tells us that, if we start with $\pb(0) \neq 
{\bf 0}$, for any given security investments $\bs 
\geq {\bf 0}$, our steady-state cost is given 
by $w(\bs) + \cb\T \pb_{\rm se}(\bs)$. 
For this reason, we are interested
in  the following optimization problem.
\begin{tcolorbox}[colback=white]
\vspace{-0.18in}
\begin{align}
\min_{\bs \geq {\bf 0}} \Big\{w(\bs) 
    + \cb\T\pb_{\rm se}(\bs)\Big\}
    \label{eqCase0_prob}
\end{align}
\vspace{-0.23in}
\end{tcolorbox}
\noindent 
We denote the optimal value and the optimal set of 
\eqref{eqCase0_prob} by $f^*_0$ and $\mathbb{S}^*_0$, 
respectively. Based on the discussion, we have the 
following simple observation. 

\begin{thm}\label{thmSpectralRadiusOptimal}
If $\rho\big( \diag(\boldsymbol{\delta})^{-1} B \big) \!\le\! 1$, then $\bs^*\!=\!\0$ is the optimal point. Otherwise, $\rho\big( \diag(\boldsymbol{\alpha}\circ \bs^*+\boldsymbol{\delta})^{-1} B \big) \!\ge\! 1$ for all $\bs^*\in \mathbb{S}^*_0$. 
\end{thm}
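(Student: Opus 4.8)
The plan is to handle the two assertions separately: the first is a one-line comparison at $\bs=\0$, and the second is a perturbation argument by contradiction resting on continuity of the spectral radius. For the first assertion I would evaluate the objective at $\bs=\0$. Under the hypothesis $\rho(\diag(\boldsymbol{\delta})^{-1}B)\le 1$, the Case~1 characterization established just before the theorem gives $\pb_{\rm se}(\0)=\0$, so the objective there equals $w(\0)$. For an arbitrary $\bs\ge\0$, monotonicity of $w$ yields $w(\bs)\ge w(\0)$, while $\cb\ge\0$ and $\pb_{\rm se}(\bs)\in[0,1]^N$ give $\cb\T\pb_{\rm se}(\bs)\ge 0$; adding these shows the objective at $\bs$ is at least $w(\0)$, so $\bs=\0$ is optimal. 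Strict monotonicity of $w$ in fact makes it the \emph{unique} minimizer, since $w(\bs)>w(\0)$ whenever $\bs\gneq\0$.

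For the second assertion I would argue by contradiction. Write $\rho(\bs):=\rho(\diag(\boldsymbol{\alpha}\circ\bs+\boldsymbol{\delta})^{-1}B)$ and suppose some optimal $\bs^*\in\mathbb{S}^*_0$ had $\rho(\bs^*)<1$. The ``otherwise'' hypothesis $\rho(\0)>1$ forces $\bs^*\neq\0$, so some coordinate $s_{i_0}^*>0$; moreover $\rho(\bs^*)<1$ places $\bs^*$ strictly inside Case~1, so $\pb_{\rm se}(\bs^*)=\0$ and the optimal value equals $w(\bs^*)$. I would then reduce the $i_0$-th coordinate of $\bs^*$ by a small amount to obtain a feasible $\bs'\lneq\bs^*$; for a sufficiently small reduction $\rho(\bs')<1$ still holds, hence $\pb_{\rm se}(\bs')=\0$ and the objective at $\bs'$ equals $w(\bs')<w(\bs^*)$ by strict monotonicity, contradicting optimality of $\bs^*$. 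Thus every $\bs^*\in\mathbb{S}^*_0$ satisfies $\rho(\bs^*)\ge 1$.

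The main obstacle is justifying that a small reduction of $\bs^*$ keeps us inside Case~1 so that $\pb_{\rm se}$ stays $\0$. This rests on continuity of $\bs\mapsto\rho(\diag(\boldsymbol{\alpha}\circ\bs+\boldsymbol{\delta})^{-1}B)$, which holds because $\delta_i>0$ keeps each denominator $\alpha_is_i+\delta_i$ bounded away from zero, so the matrix entries are continuous in $\bs$ and the spectral radius is continuous in the entries. It is also worth recording the accompanying monotonicity: since $B\ge\0$, increasing any $s_i$ only scales down a row of a nonnegative matrix and hence, by Perron--Frobenius, cannot increase $\rho$; this makes transparent that reducing investment is exactly what drives $\rho$ back up toward the Case~2 threshold and confirms the direction of the perturbation. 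No step beyond these, together with $\cb\ge\0$ and the stated strict monotonicity of $w$, is required.
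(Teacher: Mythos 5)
Your proof is correct and follows essentially the same route as the paper, which disposes of the theorem in one line by invoking the Case~1/Case~2 discussion together with the monotonicity of the spectral radius of nonnegative matrices. Your write-up is in fact more complete: the coordinate-reduction argument for the second assertion makes explicit the continuity of $\bs \mapsto \rho\big(\diag(\boldsymbol{\alpha}\circ\bs+\boldsymbol{\delta})^{-1}B\big)$ needed to keep the perturbed point in Case~1, a detail the paper's citation of monotonicity alone leaves implicit.
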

\begin{proof}
The theorem follows directly from the above discussion and 
the monotonicity of the spectral radius of nonnegative 
matrices \cite[Thm~8.1.18]{horn85matrix}: if $A, B \in 
\mathbb{R}^{n\times n}_+$ such that $A\ge B$, then 
$\rho(A) \ge \rho(B)$.
\end{proof}
\begin{rem}
The theorem rules out the case where $\rho\big( \diag(\boldsymbol{\alpha}\circ \bs^*+\boldsymbol{\delta})^{-1} B \big) < 1$ for some $\bs^*\!\neq\!\0$. 
It suggests that if the recovery rates of all the systems 
are sufficiently large, no additional investments are 
needed. Otherwise, at any solution $\bs^* \in 
\mathbb{S}^*_0$, the spectral radius is either 
(i) at the threshold (of one) or (ii) strictly above 
the threshold. In case (i), $w(\bs^*)$ is also 
the smallest investment cost to suppress the 
spread in that $\lim_{t \to\infty} \pb(t) = \0$ 
for all $\pb(0)$. We will show in 
subsection~\ref{subsec:Case0_main} below how to compute this 
minimum investment for suppression, denoted by $C^*$.  
On the other hand, case (ii) corresponds to an endemic state, i.e., $\lim_{t\to\infty} \pb(t) = \pb_{\rm se}(\bs^*) >  \0$ when $\pb(0) \neq \0$. 
In this case, we will demonstrate that we can find upper and lower bounds on the optimal cost using our techniques in \mbox{Sections~\ref{sec:LowerBound} and \ref{sec:UpperBound}}.
\end{rem}

In order to facilitate our discussion, we introduce
following related optimization problems with a
{\em fictitious} constraint on security investments.
\blue{The goal of imposing a fictitious budget constraint
is not to investigate a problem with a budget constraint; 
instead, it is used to facilitate the determination of 
a (nearly) optimal point of \eqref{eqCase0_prob} as we will
show.}
\begin{tcolorbox}[colback=white]
\vspace{-0.18in}
\begin{eqnarray} \label{eq:constr1}
\min_{\bs \ge \0}\quad \Big\{ w(\bs) 
    + \cb\T \pb_{\rm se}(\bs) ~|~ w(\bs) \leq C \Big\} 
\end{eqnarray}
\vspace{-0.23in}
\end{tcolorbox}

We define a function $f_0: \R_+ \to \R_+$, where
$f_0(C)$ is the optimal value of the above optimization 
problem for
a given budget $C$. Clearly, the 
function $f_0$ is continuous and nonincreasing, and 
$\lim_{C\to \infty} f_0(C) = f^*_0$, i.e., problem~\eqref{eq:constr1} reduces to \eqref{eqCase0_prob} by letting $C\to \infty$.
		
Suppose 
\beqa
\bs^* \in \arg \min_{\bs \in 
    \mathbb{S}^*_0} w(\bs) \ \mbox{ and } \
	w^* := w(\bs^*) = 
	\min_{\bs \in \mathbb{S}^*_0} w(\bs).
	\label{eq:minbs*}
\eeqa 
Obviously, $w^*$ is the minimum security 
investments necessary to minimize the total
cost in \eqref{eqCase0_prob}, and $\bs^*$ 
is an optimal point with the smallest 
security investments.  
Then, because $f_0$ is nonincreasing, 
for any $C \geq w^*$, we have 
\beqa
f^*_0 
\le f_0(C)  
\leq f_0(w^*) 
\leq w(\bs^*) + \cb\T \pb_{\rm se}(\bs^*)
= f^*_0, 
    \label{eq:mins*1}
\eeqa
which implies $f_0(C) = f_0(w^*) = f^*_0$.
On the other hand, 
\beqa
f^*_0 = f_0(w^*) < f_0(C) \ \mbox{ if } C < w^*,
    \label{eq:mins*2}
\eeqa
where the strict inequality follows from the definition of $\bs^*$ in \eqref{eq:minbs*}; any $\bs$ with $w(\bs) < w^*$ is not an optimal point and, as a result, we have $f^*_0 < w(\bs) + \cb\T \pb_{\rm se}(\bs)$.

The inequalities in \eqref{eq:mins*1}
and \eqref{eq:mins*2} tell us the following: 
increasing the security 
budget $C$ reduces the total cost $f_0(C)$ 
while $C \leq w^*$. On the other hand, beyond $w^*$, 
increasing the budget will not reduce 
the cost any more as $f_0(w^*) \!=\! f^*_0$.

\subsection{Bounds on the Optimal Value of \eqref{eqCase0_prob}} 
    \label{subsec:Case0_main}

From the discussion at the beginning of subsection
\ref{subsec:Prelim}, it is clear that the spectral radius
of the matrix $\diag(\boldsymbol{\alpha}\circ \bs + \boldsymbol{\delta})^{-1}B$ plays an important
role in the dynamics and the determination of a
stable equilibrium 
of \eqref{eq:pdot}. For this reason, we find it
convenient to define the following problem: 
\begin{eqnarray}
\hspace{-0.17in} 
\myb \min_{\bs \ge \0} \big\{ w(\bs)~|~\rho (\diag(\boldsymbol{\alpha}\circ \bs + \boldsymbol{\delta})^{-1}B) \le 1 \big\}
	\label{eq:rho0} 
\end{eqnarray}
Let $C^*$ be the optimal value of this optimization problem, which is the aforementioned minimum investments needed for suppression. 
We show in Appendix~\ref{sec:ProbP0_Balancing} 
that this optimization problem can be 
transformed into a convex (exponential cone) problem and, 
thus, can be solved efficiently.

The following lemma points out an important fact that we 
will make use of in the remainder of the section. 
\begin{lemma} \label{lem:upperC*}
The optimal value $C^*$ of \eqref{eq:rho0} is an upper
bound on $f^*_0$, i.e., $f^*_0 \leq C^*$. 
\end{lemma}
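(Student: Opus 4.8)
The plan is to compare the two optimization problems directly by observing that the feasible region of the suppression problem~\eqref{eq:rho0} is exactly the set of investments on which the infection-cost term $\cb\T\pb_{\rm se}(\bs)$ vanishes. Since~\eqref{eq:rho0} minimizes $w(\bs)$ over a \emph{subset} of the feasible region of~\eqref{eqCase0_prob} and the two objectives agree on that subset, the inequality $f^*_0 \le C^*$ will follow at once.

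Concretely, I would take any $\bs \in \R^N_+$ that is feasible for~\eqref{eq:rho0}, i.e., satisfies $\rho\big(\diag(\boldsymbol{\alpha}\circ\bs + \boldsymbol{\delta})^{-1}B\big) \le 1$. This is precisely the hypothesis of Case~1 in the discussion preceding subsection~\ref{subsec:Prelim}, which asserts that the stable equilibrium is the origin, i.e., $\pb_{\rm se}(\bs) = \0$. Consequently, the objective of~\eqref{eqCase0_prob} evaluated at this $\bs$ reduces to
\[
w(\bs) + \cb\T\pb_{\rm se}(\bs) = w(\bs) + \cb\T\0 = w(\bs).
\]
Because such $\bs$ is also feasible for the unconstrained problem~\eqref{eqCase0_prob}, whose optimal value is $f^*_0$, we obtain $w(\bs) = w(\bs) + \cb\T\pb_{\rm se}(\bs) \ge f^*_0$. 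Taking the infimum of the left-hand side over all $\bs$ feasible for~\eqref{eq:rho0} then gives $C^* \ge f^*_0$, which is the claim.

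This argument is short and contains no real obstacle; its only substantive ingredient is the correct invocation of Case~1 to identify the constraint $\rho \le 1$ with the condition $\pb_{\rm se}(\bs) = \0$. The one point deserving a moment's care is the threshold $\rho = 1$, but since Case~1 is stated with the non-strict inequality $\rho \le 1$, the boundary is already included and no separate treatment is needed. Phrasing the conclusion as an infimum (rather than evaluating at a minimizer) also sidesteps any need to establish existence of an optimal point of~\eqref{eq:rho0}.
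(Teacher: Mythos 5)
Your proof is correct and follows essentially the same route as the paper: both arguments rest on the observation that any $\bs$ feasible for~\eqref{eq:rho0} has $\pb_{\rm se}(\bs)=\0$ by Case~1, so its total cost in~\eqref{eqCase0_prob} collapses to $w(\bs)$, which upper-bounds $f^*_0$. The only (harmless) difference is that the paper evaluates at an optimal point $\tilde{\bs}_0$ of~\eqref{eq:rho0} while you take an infimum over all feasible points, which sidesteps the existence of a minimizer.
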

\begin{proof}
\blue{We know that any optimal point $\tilde{\bs}_0$ of \eqref{eq:rho0}
satisfies $w(\tilde{\bs}_0) \!=\! C^*$ and $\pb_{\rm se}(\tilde{\bs}_0) \!=\! \0$. 
Therefore, the total cost achieved by $\tilde{\bs}_0$ is equal to
$C^* + \cb\T \0 \!=\! C^* \!\ge\! f_0(C^*)\!\ge\! f^*_0$.}
\end{proof}

From this lemma and the definition of $\bs^*$ in 
\eqref{eq:minbs*}, we have
\beqa
w^* = w(\bs^*) \leq f^*_0 \leq C^*.
	\label{eq:w*1}
\eeqa  
Obviously, this also implies $f^*_0 = f_0(C)$ for all 
$C \geq C^*$.

In a special case when the equalities 
in \eqref{eq:w*1} hold, we have $f_0^* = w^* = C^*$, 
and an optimal point of the optimization problem 
in \eqref{eq:rho0}, 
say $\tilde{\bs}_0$, is optimal for the problem in 
\eqref{eqCase0_prob} with $\pb_{\rm se}(\tilde{\bs}_0) = 
\0$. 
Also, $\rho\big( \diag(\boldsymbol{\alpha} \circ 
\bs^*+\boldsymbol{\delta})^{-1} B \big) = 1$ and the 
equality holds in the second part of 
Theorem~\ref{thmSpectralRadiusOptimal}.
But, in general these equalities 
may not hold, in which case we must have $w^* < f^*_0 
< C^*$ and $\rho\big( \diag(\boldsymbol{\alpha}\circ \bs^*+\boldsymbol{\delta})^{-1} B \big) > 1$.

Because $w^*$ is unknown beforehand, we cannot determine 
which case holds. However, if we solve the constrained
problem in \eqref{eq:constr1} with $C = C^* - 
\epsilon$ for small positive $\epsilon$, 
either (a) the optimal 
point $\bs^*_C$ we obtain is an optimal point 
of \eqref{eqCase0_prob} if $w(\bs^*_C) < C$ or
(b) $C = C^* - \epsilon \leq w^* \leq f^*_0 
\leq C^*$ if $w(\bs^*_C) 
= C$. Note that in the latter case, we have
$C^* - f^*_0 \leq \epsilon$, and $\tilde{\bs}_0$ is 
$\epsilon$-suboptimal for \eqref{eqCase0_prob}.


At first glance, one may suspect that solving 
\eqref{eq:constr1} is as difficult as solving 
\eqref{eqCase0_prob} because both share the same 
nonconvex objective function. However, \eqref{eq:constr1} 
enjoys a few numerical advantages: first, unlike 
\eqref{eqCase0_prob}, the constrained problem  
\eqref{eq:constr1} with $w(\bs) \le C < C^*$ 
admits only $\pb_{\rm se}(\bs) > \0$, 
which can be computed using the fixed point 
iteration in \eqref{eqFixedPointEquilibrium}. 
Second, perhaps more importantly, because the
stable equilibrium is guaranteed to be strictly
positive, it allows us to employ the approaches 
in Sections~\ref{sec:LowerBound} and 
\ref{sec:UpperBound} in order to compute upper 
and lower bounds on the optimal value $f_0(C)$. 
In the process of finding these bounds, we also 
demonstrate an interesting observation that
we can bound the gap $C^* - f^*_0$ under certain 
conditions.

Before we proceed with discussing the bounds, 
let us remark on the choice of $\epsilon$. 
Theoretically, we want $\epsilon$ as small as 
possible because it determines the 
$\epsilon$-suboptimality of $\tilde{\bs}_0$ when 
the constraint is active (case (b) above). In practice, 
however, $\epsilon$ should not be too small 
because it can cause numerical issues when 
$C^* = w^*$ and $\pb_{\rm se}(\bs^*) = \0$ (or
$\pb_{\rm se}(\bs^*) \approx  \0$ when $C^*\approx 
w^*$). This is because our approaches to finding 
upper and lower bounds on the optimal cost in Sections
\ref{sec:LowerBound} and \ref{sec:UpperBound} rely on 
the strict condition that $\pb_{\rm se}(\bs) > \0$.

\subsubsection{Lower Bound via Convex Relaxation}
    \label{subsubsec:LowerBoundCR}
    
Borrowing a similar approach used in subsection
\ref{subsec:LowerBoundEXPcone}, we can formulate the following convex relaxation of \eqref{eq:constr1}:
\begin{tcolorbox}[colback=white]
\vspace{-0.2in} 
\begin{subequations}
\begin{eqnarray}
\hspace{-0.5in}
{(\rm P_{R3})} ~~
\min_{\bs\geq \0, \pb, \yb\geq \0, U} 
	&&  f(\bs,\pb)= w(\bs)+\cb\T\pb 
	\label{eq:ConvexRelax0} \\
\mathsf{s.t.}&& w(\bs) \le  C \nonumber \\
&& U\1 = B\pb + \boldsymbol{\alpha}\circ \bs 
	+ \boldsymbol{\delta}\nnb\\
&& \pb \ge e^{-\yb}\label{eqCase0RL_p}\\
&& U \ge \diag(e^{\yb})B\diag(e^{-\yb}) 
	\label{eqCase0RL_U}
\end{eqnarray}
\end{subequations}
\vspace{-0.28in}
\end{tcolorbox}
We denote the optimal value of $({\rm P_{R3}})$
by $f_L(C)$. 
The inequalities in \eqref{eqCase0RL_p} and \eqref{eqCase0RL_U} can be expressed as exponential cone constraints as done in \eqref{eqEXP_P_cone} and \eqref{eqEXP_U_cone}, respectively. Thus, this convex relaxation can be solved efficiently.

Clearly, $f_L(C)$ is nonincreasing on $[0, C^*)$.
Also, $f_L(C)$ $\le$ $f_0(C)$ for all $C \in [0,C^*)$. 
Let $f^*_L := \lim_{C\to C^*} f_L(C)$. Together
with the earlier inequality in 
\eqref{eq:w*1}, we have 
\begin{equation}
f_L^* \le f_0^* \le C^*. \label{eqFLstar}
\end{equation}
\blue{More can be said regarding these bounds as follows.}
\begin{thm}\label{thmSolutionOfRelaxed}
\blue{Suppose $C = C^* - \epsilon$ and $(\bs_L,\pb_L,\yb_L,U_L)$ 
is an optimal 
point of $({\rm P_{R3}})$. Let $C_L = w(\bs_L)$. Then, we have
$f_L^* = f_L(C_L) =f_L(C) \le f_0^*$ if $C_L < C$, and }
\begin{equation}
    C^* - f_0^* 
    \leq \epsilon + \epsilon 
	\big( f_L(0) - f_L^* \big)/C^* \ \mbox{ if } 
	C_L = C.  
	\label{eqCstarSubopt}
\end{equation}
\end{thm}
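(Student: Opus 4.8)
The plan is to exploit two structural facts about the relaxed value function $f_L$: that it is nonincreasing on $[0,C^*)$ (already recorded) and that it is \emph{convex} in the budget $C$. Convexity holds because $f_L(C)$ is exactly the optimal value of the convex program $(\mathrm{P_{R3}})$ viewed as a function of the right-hand side $C$ of the single budget inequality $w(\bs)\le C$; the remaining constraints do not involve $C$, so $f_L$ is the perturbation (value) function of a convex problem and is therefore convex. From these two properties, together with $f_L^*\le f_0^*\le C^*$ in \eqref{eqFLstar}, I would first record the chord inequality obtained by comparing the value at $C=C^*-\epsilon$ against the segment joining $\big(0,f_L(0)\big)$ and $\big(C^*,f_L^*\big)$, namely $f_L(C)\le f_L^* + \frac{\epsilon}{C^*}\big(f_L(0)-f_L^*\big)$. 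Here I set $f_L^*:=\lim_{C\to C^*}f_L(C)$ and obtain the endpoint version by letting $C'\uparrow C^*$ in the three-point convexity inequality rather than assuming $f_L$ is defined at $C^*$.

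For the case $C_L<C$ the argument is elementary. Since $w(\bs_L)=C_L$, the optimal point $(\bs_L,\pb_L,\yb_L,U_L)$ is still feasible for $(\mathrm{P_{R3}})$ under the tighter budget $C_L$, so $f_L(C_L)\le f(\bs_L,\pb_L)=f_L(C)$; combined with monotonicity and $C_L<C$ this forces $f_L(C_L)=f_L(C)$. A convex, nonincreasing function taking equal values at $C_L$ and $C$ must be constant on $[C_L,C]$, and convexity then forbids any further decrease, so $f_L$ is constant on all of $[C_L,C^*)$ and hence $f_L(C)=f_L^*$. The bound $f_L(C)\le f_0^*$ is then immediate from \eqref{eqFLstar}.

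For the case $C_L=C$ I would pair the chord inequality with a matching lower bound on $f_L(C)$. Because the budget constraint is active, $w(\bs_L)=C=C^*-\epsilon$, and since $\cb\ge\0$ and $\pb_L>\0$ we get $f_L(C)=w(\bs_L)+\cb\T\pb_L\ge C^*-\epsilon$. Chaining this with the chord bound and with $f_L^*\le f_0^*$ yields $C^*-\epsilon\le f_L(C)\le f_L^*+\frac{\epsilon}{C^*}(f_L(0)-f_L^*)\le f_0^*+\frac{\epsilon}{C^*}(f_L(0)-f_L^*)$, which rearranges to the claimed estimate \eqref{eqCstarSubopt}. The step needing the most care — and the main obstacle — is establishing convexity of $f_L$ and legitimately evaluating the chord at the endpoint $C^*$, since $f_L^*$ is only a limit; this is resolved by the limiting argument described in the first paragraph, after which both cases reduce to routine manipulation.
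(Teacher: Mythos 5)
Your proof is correct and follows essentially the same route as the paper's: convexity of the value function $f_L$ in the budget $C$, the chord inequality through $(0,f_L(0))$ and $(C^*,f_L^*)$ combined with the lower bound $f_L(C)\ge w(\bs_L)=C^*-\epsilon$ when the constraint is active, and monotonicity plus feasibility of the optimal point under the tighter budget when it is inactive. Your two refinements --- obtaining the endpoint chord by letting $C'\uparrow C^*$ since $f_L^*$ is only a limit, and using convexity explicitly to propagate constancy from $[C_L,C]$ to $[C_L,C^*)$ --- tighten steps the paper treats more informally, but do not change the argument.
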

\begin{proof}
Please see Appendix~\ref{ProofThmSolutionOfRelaxed}.  
\end{proof}

\blue{This result} tells us that, when $C_L \!=\! 
C$, any optimal point of \eqref{eq:rho0} is
$O(\epsilon)$-suboptimal for the original problem
in \eqref{eqCase0_prob}. 
\blue{Also, since $f_L(0) \!\le\! f_0(0) \!=\! 
\cb\T\pb_{\rm se}(\0)$, the bound in 
\eqref{eqCstarSubopt} is upper bounded by
$
\epsilon \big( 1
	+ \frac{\cb\T\pb_{\rm se}(\0)}{C^*} \big),
$
which can be computed before solving $({\rm P_{R3}})$.} 
Therefore, a natural 
question that arises is: 
{\em Can we determine if the condition 
$C_L = C$ holds for some $C < C^*$ without 
having to solve the convex relaxation 
$({\rm P_{R3}})$?}

The following theorem offers a (partial) 
answer to this question by providing a 
sufficient condition for the condition 
$C_L = C$ to
hold. For a given budget constraint $C \in \R_+$, 
define $\mathbb{S}_C = \{ \bs \in \RN_+~|~ w(\bs) 
\le C \}.$
		
\begin{thm} \label{claim_suff_cond}
Suppose that every $\bs \in \mathbb{S}_C$ satisfies
\begin{equation}
B\T\diag(\boldsymbol{\alpha})^{-1}\nabla w(\bs) \lneq \cb.  \label{eqClaim_suff_cond}
\end{equation}
Then, 
$C_L= C$. 
If \eqref{eqClaim_suff_cond} holds for 
all $\bs \in \mathbb{S}_{C^*}$, then $f_0^* = C^*$.
\end{thm}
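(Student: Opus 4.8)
The plan is to prove the two assertions in order: first that the hypothesis on $\mathbb{S}_C$ forces $C_L = C$, and then to deduce $f_0^* = C^*$ by specializing to budgets approaching $C^*$ and invoking Theorem~\ref{thmSolutionOfRelaxed}. For the first assertion I would argue by contradiction: assume the budget constraint is inactive, $w(\bs_L) < C$. Since $({\rm P_{R3}})$ is convex and (for $C<C^*$) strictly feasible, I may use the KKT conditions as necessary and sufficient at the optimal point $(\bs_L,\pb_L,\yb_L,U_L)$. I would attach multipliers $\mu\ge 0$ to $w(\bs)\le C$, a free $\boldsymbol{\nu}$ to the affine equality $U\1 = B\pb + \boldsymbol{\alpha}\circ\bs + \boldsymbol{\delta}$, $\boldsymbol{\pi}\ge\0$ to \eqref{eqCase0RL_p}, a nonnegative matrix $\Lambda$ to \eqref{eqCase0RL_U}, and $\boldsymbol{\xi},\boldsymbol{\tau}\ge\0$ to $\bs\ge\0$ and $\yb\ge\0$.

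Stationarity in $U$ gives $\Lambda_{ij}=\nu_i$, hence $\boldsymbol{\nu}\ge\0$; stationarity in $\pb$ gives $\boldsymbol{\pi}=\cb-B\T\boldsymbol{\nu}$; and stationarity in $\bs$ gives $\boldsymbol{\alpha}\circ\boldsymbol{\nu}=(1+\mu)\nabla w(\bs_L)-\boldsymbol{\xi}\le(1+\mu)\nabla w(\bs_L)$, i.e. $\boldsymbol{\nu}\le(1+\mu)\diag(\boldsymbol{\alpha})^{-1}\nabla w(\bs_L)$. The crucial (and, I expect, hardest) step is to show $\boldsymbol{\pi}=\0$. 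For this I would write stationarity in $\yb$: setting $V_{ij}:=e^{y_i-y_j}B_{ij}$, it reads $-\pi_i e^{-y_i}+\nu_i(V\1)_i-(V\T\boldsymbol{\nu})_i-\tau_i=0$ for every $i$. Summing over $i$, the bilinear flow terms telescope, $\sum_i\nu_i(V\1)_i=\sum_{i,j}\nu_i V_{ij}=\sum_i(V\T\boldsymbol{\nu})_i$, which leaves $\sum_i\pi_i e^{-y_i}+\sum_i\tau_i=0$. As each summand is nonnegative and $e^{-y_i}>0$, this forces $\boldsymbol{\pi}=\0$ (and $\boldsymbol{\tau}=\0$). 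Then $\cb=B\T\boldsymbol{\nu}$, and since $B\T\ge 0$ and the $\bs$-stationarity bound holds componentwise, I get $\cb=B\T\boldsymbol{\nu}\le(1+\mu)\,B\T\diag(\boldsymbol{\alpha})^{-1}\nabla w(\bs_L)$. Because the budget was assumed inactive, complementary slackness gives $\mu=0$, so $\cb\le B\T\diag(\boldsymbol{\alpha})^{-1}\nabla w(\bs_L)$; this directly contradicts the strict hypothesis $B\T\diag(\boldsymbol{\alpha})^{-1}\nabla w(\bs_L)\lneq\cb$ of \eqref{eqClaim_suff_cond}. Hence the budget must be active, i.e. $C_L=C$.

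For the second assertion I would use $\mathbb{S}_C\subseteq\mathbb{S}_{C^*}$ for every $C<C^*$, so the hypothesis on $\mathbb{S}_{C^*}$ lets me apply the first assertion at each such budget and conclude $C_L=C$. Theorem~\ref{thmSolutionOfRelaxed} (case $C_L=C$) then yields $C^*-f_0^*\le\epsilon\big(1+(f_L(0)-f_L^*)/C^*\big)$ with $\epsilon=C^*-C$. Sending $C\uparrow C^*$, so $\epsilon\downarrow 0$ while the bracketed factor stays bounded, gives $C^*\le f_0^*$; combining with Lemma~\ref{lem:upperC*} ($f_0^*\le C^*$) yields $f_0^*=C^*$.

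The main obstacle is the telescoping identity that kills $\boldsymbol{\pi}$: everything else is bookkeeping, but the vanishing of the multiplier of $\pb\ge e^{-\yb}$ is what converts the exponential-cone stationarity into the scalar comparison $\cb=B\T\boldsymbol{\nu}$ against $\diag(\boldsymbol{\alpha})^{-1}\nabla w$. A secondary point to verify carefully is the constraint qualification (strict feasibility of $({\rm P_{R3}})$ for $C<C^*$, guaranteed by the strictly positive endemic equilibrium), which is what legitimizes the use of KKT in the first place.
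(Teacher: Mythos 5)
Your proposal is correct, and for the first assertion it takes a genuinely different route from the paper. You argue dually: you write the KKT system of $({\rm P_{R3}})$ at the optimum, observe that stationarity in $U$ forces $\Lambda_{ij}=\nu_i\ge\0$, and then sum the $\yb$-stationarity over all components so that the bilinear terms $\sum_{i,j}\nu_i b_{ij}e^{y_i-y_j}$ telescope, which kills the multiplier of $\pb\ge e^{-\yb}$ and yields $\cb=B\T\boldsymbol{\nu}\le(1+\mu)B\T\diag(\boldsymbol{\alpha})^{-1}\nabla w(\bs_L)$; inactivity of the budget gives $\mu=0$ and contradicts \eqref{eqClaim_suff_cond}. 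The paper instead runs a purely primal argument: assuming $w(\bs_L)<C$, it exhibits the explicit feasible perturbation $\yb'=\yb_L+\gamma\1$, $\pb'=e^{-\gamma}\pb_L$, $\bs'=\bs_L+(1-e^{-\gamma})\boldsymbol{\alpha}^{-1}\circ B\pb_L$ and shows via convexity of $w$ that the objective strictly decreases under \eqref{eqClaim_suff_cond}, contradicting optimality. The two arguments are dual to one another: perturbing all $y_i$ uniformly is exactly the direction whose inner product with the $\yb$-stationarity condition is your summation, so the same cancellation is at work in both. The paper's route is more elementary (no constraint qualification, no existence-of-multipliers discussion), while yours is more systematic and consistent with the KKT machinery the paper already uses in the proofs of Theorems~\ref{thmConvexRelaxSolution_Cone} and \ref{thmSCP_Cone_Exact}; as you note, you do owe a verification of Slater's condition for $({\rm P_{R3}})$ with $0<C<C^*$, which holds (perturb a feasible $(\bs,\pb_{\rm se}(\bs))$ with $w(\bs)<C$ by slightly increasing $\pb$ and $U$ while preserving the affine equality), but should be spelled out. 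Your proof of the second assertion is essentially identical to the paper's: apply the first part along a sequence $C_k\uparrow C^*$, invoke the bound \eqref{eqCstarSubopt} from Theorem~\ref{thmSolutionOfRelaxed}, and let $\epsilon_k\to 0$.
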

\begin{proof}
A proof can be found in 
Appendix~\ref{appen:thm6}. 
\end{proof}

Note that 
the condition \eqref{eqClaim_suff_cond} can be verified 
prior to solving the relaxed problem $({\rm P_{R3}})$. 
In the case that condition \eqref{eqClaim_suff_cond} 
holds for all $\bs \in \mathbb{S}_{C^*}$, 
any optimal point $\bs_0$ of \eqref{eq:rho0} is also optimal for our original problem in \eqref{eqCase0_prob}.

\subsubsection{Upper Bound via the Reduced Gradient Method}
    \label{subsubsec:UpperBound}
In order to use the RGM for the problem in \eqref{eq:constr1}, 
we first need to introduce a following modification to 
Algorithm~\ref{algReduceGrad}: 
replace line~5 of Algorithm~\ref{algReduceGrad} with
\begin{equation}
\bs^{(t+1)} \gets \cP_{\mathbb{S}_C}\big[ \bs^{(t)} +\gamma_t \big( \boldsymbol{\alpha}\circ\pb^{(t)}\circ \uu\big) \big],
\end{equation}
where $\cP_{\mathbb{S}_C}[\cdot]$ denotes the Euclidean projection onto $\mathbb{S}_C$. 		
When $w$ is simple, this projection step can be very efficient. 
		
The results in Section~\ref{sec:UpperBound} still hold in this case. In particular, at any feasible point $(\bs^\star,\pb^\star)$ of problem $({\rm P})$ with $\boldsymbol{\lambda} = \0$ such that $\bs^\star\in \mathbb{S}_C$, the matrix 
\[
M(\bs^\star) 
= \diag(\boldsymbol{\alpha}\circ\bs^\star 
	+ \boldsymbol{\delta} 
	+ B \pb^\star) 
	- \diag(\1 - \pb^\star) B,
\]
which arises from totally differentiating the constraint $\gb(\sbold, \pb) = \0$, is still a nonsingular M-matrix. 
This can be verified by noting that $M(\bs^\star)$ satisfies $M(\bs^\star)\pb^\star = \pb^\star \circ (B \pb^\star) > \0,$
where the positivity follows from $\pb^\star >\0$ because we require that $\bs^\star\in \mathbb{S}_C$ with $C<C^*$. As a result, we can use Algorithm~\ref{algReduceGrad} with 
an efficient evaluation of reduced gradient as shown in subsection~\ref{subsec:ComputationalComp} and the projection step as described above. \\ \vspace{-0.3in}

\blue{
\begin{rem}
We summarize how to find a good solution to 
\eqref{eqCase0_prob} when $\rho\big( \diag( 
\boldsymbol{\delta})^{-1} B \big) > 1$ based on 
the above discussion:
first, find a pair $(\bs_0, C^*)$ of the optimal 
point and optimal value of \eqref{eq:rho0}. If 
\eqref{eqClaim_suff_cond} holds for all 
$\bs \in \mathbb{S}_{C^*}$, $\bs_0$ is 
an optimal point of \eqref{eqCase0_prob}. 
Otherwise, solve ${\rm (P_{R3})}$
with $C = C^* - \epsilon$ (for small $\epsilon$)
and let $(\bs_L, f_L)$ be the pair of its optimal point 
and optimal value. If $w(\bs_L) = C$, adopt $\bs_0$
as a solution to \eqref{eqCase0_prob} with
\texttt{opt\_{gap}} $\!\leq\!\epsilon ( 1
	\!+\! \frac{\cb\T\pb_{\rm se}(\0)}{C^*})$. 
Otherwise, solve \eqref{eq:constr1} using RGM
and adopt its solution $\bs_U$ as a 
solution to \eqref{eqCase0_prob} with 
\texttt{opt\_{gap}} $\le f_U - f_L$, where 
$f_U = w(\bs_U) + \cb\T \pb_{{\rm se}}(\bs_U)$.
\\ \vspace{-0.2in}
\end{rem}
}










\section{Numerical Results}
    \label{sec:Numerical}

In this section, we provide some numerical results 
that demonstrate the performance of the proposed 
algorithms. Our numerical studies are
carried out in MATLAB (version 9.5) on a laptop with 8GB RAM and 
a 2.4GHz Intel Core i5 processor. 
We consider 5 different strongly 
connected scale-free networks with the power law
parameter for node degrees set to 1.5, 
and the minimum and maximum 
node degrees equal to 2 and $\lceil 3\log N 
\rceil$, respectively, in order to ensure
network connectivity with high probability.
   
For all considered networks, we fix $\alpha_i = 1$ 
and  $\delta_i = 0.1$ for all $i \in \mathcal{A}$. 
The infection rates $\beta_{j,i}$, $(j,i) \in 
\mathcal{E}$, are modeled using i.i.d.
Uniform(0,1) random variables. 
We choose 
\[
w(\bs) = \1\T\bs \quad \mbox{ and } \quad 
\cb = \nu B\T\1 + 2\cb_{\rm rand},
\]
where the elements of 
$\cb_{\rm rand}$ are given by i.i.d. Uniform(0,1) 
random variables and $\nu \ge 0$ is a varying parameter. 
We select $\cb$ above, in 
order to reflect an observation that nodes which 
support more neighbors should, on the average, 
have larger economic costs modeled by $c_i^e$ 
(Section~\ref{sec:Formulation}-A).
We consider two separate cases: 
$\boldsymbol{\lambda} > \0$ 
and $\boldsymbol{\lambda} = \0$. 

\subsection{Case $\boldsymbol{\lambda} > \0$} 
We generate $\boldsymbol{\lambda}$ using i.i.d. 
Uniform(0,1) random variables for each network, 
set $\nu \in \{0, 0.5, 1\}$, and apply 5 schemes 
described below. The results 
\blue{(averaged over 10 runs)}  
are summarized in Table~\ref{C2_tb_comparison_all}, 
and a more detailed description of the simulation setups can be found 
in Appendix~\ref{appen:NumericalSetup}. 
Here, the reported optimality gap is 
the relative optimality gap given by 
${\sf opt\_gap} = (f^{{\rm cur}} - 
{f}^*_{\rm R})/{{f}^*_{\rm R}}$, where ${f}^*_{\rm R} 
:= \min(f^*_{\rm R1}, f^*_{\rm R2})$, 
$f^*_{\rm R1}$ and $f^*_{\rm R2}$ are the optimal 
values of $(\rm{P_{R1}})$ and $(\rm{P_{R2}})$, 
respectively, and $f^{{\rm cur}}$ is the cost
achieved by the solution found by the algorithm 
under consideration. Although
the two convex relaxations are equivalent, their
numerical solutions are not necessarily identical
and we take a conservative lower bound given by
the minimum of the two values.
When the optimal value of a convex relaxation 
is unavailable, we take the other optimal value. 
Also, the column $t_s$ indicates the total runtime.

\begin{table*}[t] 
	\caption{Numerical Results 
	$(\boldsymbol{\lambda} > \0)$.}
	\label{C2_tb_comparison_all}
	\centering
	\begin{adjustbox}{width=0.99\textwidth}%
		\begin{tabular}{|c||c|c|l||c|c|c||c|c|l||c|c|l||c|c|l|}
			\hline
			{$\nu = 0$}
			& \multicolumn{3}{c||}{\tt M-Matrix + OPTI}
			& \multicolumn{3}{c||}{\tt K-Exp + MOSEK}
			& \multicolumn{3}{c||}{\tt RGM + ARMIJO}
			& \multicolumn{3}{c||}{\tt K-Exp SCP}
			& \multicolumn{3}{c|}{\tt M-Matrix SCP} \\ 
			\hline 
			\hline
			\rule[0ex]{0pt}{9pt}
			$N, |\mathcal{E}|$ 
			& ${\sf opt\_gap}$ 	& ${\sf iter}$ &  $t_s$\! (s)
			& ${\sf opt\_gap}$ 	& ${\sf iter}$ &  $t_s$\! (s)
			& ${\sf opt\_gap}$ 	& ${\sf iter}$ &  $t_s$\! (s)
			& ${\sf opt\_gap}$ 	& ${\sf iter}$ &  $t_s$\! (s) 
			& ${\sf opt\_gap}$ 	& ${\sf iter}$ &  $t_s$\! (s) 
			\\ 
			\hline 
			\rule[0ex]{0pt}{8pt}
            $\tt 100, 474$
  &  $\tt 8.89e-2$  &  $\tt 3 , 15$  &  $\tt 0.41$ 	
  &  $\tt 8.89e-2$  &  $\tt 13$  &  $\tt 0.09$ 			
  &  $\tt 1.16e-2$  &  $\tt 12 , 8$  &  $\tt 0.00$ 	
  &  $\tt 1.16e-2$  &  $\tt 4 , 14$  &  $\tt 0.41$ 	
  &  $\tt 1.16e-2$  &  $\tt 4 , 11$   &  $\tt 0.13$ 	
               \\ 
            $\tt 200, 1014$
    &  $\tt 1.13e-1$  &  $\tt 3 , 18$  &  $\tt 1.71$ 	
  &  $\tt 1.13e-1$  &  $\tt 15$  &  $\tt 0.21$ 			
  &  $\tt 1.31e-2$  &  $\tt 8 ,  7$  &  $\tt 0.00$ 	
  &  $\tt 1.31e-2$  &  $\tt 4 , 16$  &  $\tt 0.77$ 	
  &  $\tt 1.31e-2$  &  $\tt 4 , 13$   &  $\tt 0.23$ 	
               \\ 
            $\tt 499, 2738$
    &  $\tt 1.36e-1$  &  $\tt 3 , 27$  &  $\tt 27.8$ 	
  &  $\tt 1.36e-1$  &  $\tt 16$  &  $\tt 0.66$ 			
  &  $\tt 1.26e-2$  &  $\tt 8 ,  8$  &  $\tt 0.01$ 	
  &  $\tt 1.26e-2$  &  $\tt 5 , 18$  &  $\tt 3.53$ 	
  &  $\tt 1.26e-2$  &  $\tt 4 , 13$   &  $\tt 0.63$ 	
               \\ 
            $\tt 999, 5750$
    &  $\tt 1.41e-1$  &  $\tt 4 , 36$  &  $\tt 428$ 	
  &  $\tt 1.41e-1$  &  $\tt 18$  &  $\tt 1.76$ 			
  &  $\tt 1.40e-2$  &  $\tt 11 ,  8$  &  $\tt 0.03$ 	
  &  $\tt 1.40e-2$  &  $\tt 5 , 20$  &  $\tt 8.86$ 	
  &  $\tt 1.40e-2$  &  $\tt 4 , 17$   &  $\tt 2.80$ 	
               \\ 
            $\tt 2001, 12076$
    &\multicolumn{3}{c||}{ \text{n/a}  }	 
  &  $\tt 1.47e-1$  &  $\tt 17$  &  $\tt 4.78$ 			
  &  $\tt 1.37e-2$  &  $\tt 17 , 7$  &  $\tt 0.15$ 	
  &  $\tt 1.37e-2$  &  $\tt 4 , 19$  &  $\tt 16.1$ 	
  &  $\tt 1.37e-2$  &  $\tt 4 , 108$   &  $\tt 4.63$ 	
               \\ 
			\hline			
		\end{tabular}
	\end{adjustbox}

	\vspace{2ex}
	\begin{adjustbox}{width=0.99\textwidth}%
		\begin{tabular}{|c||c|c|l||c|c|c||c|c|l||c|c|l||c|c|l|}
			\hline
			{$\nu = 0.5$}
			& \multicolumn{3}{c||}{\tt M-Matrix + OPTI}
			& \multicolumn{3}{c||}{\tt K-Exp + MOSEK}
			& \multicolumn{3}{c||}{\tt RGM + ARMIJO}
			& \multicolumn{3}{c||}{\tt K-Exp SCP}
			& \multicolumn{3}{c|}{\tt M-Matrix SCP} \\ 
			\hline 
			\hline
			\rule[0ex]{0pt}{9pt}
			$N, |\mathcal{E}|$ 
			& ${\sf opt\_gap}$ 	& ${\sf iter}$ &  $t_s$\! (s)
			& ${\sf opt\_gap}$ 	& ${\sf iter}$ &  $t_s$\! (s)
			& ${\sf opt\_gap}$ 	& ${\sf iter}$ &  $t_s$\! (s)
			& ${\sf opt\_gap}$ 	& ${\sf iter}$ &  $t_s$\! (s) 
			& ${\sf opt\_gap}$ 	& ${\sf iter}$ &  $t_s$\! (s) 
			\\ 
			\hline 
			\rule[0ex]{0pt}{8pt}
            $\tt 100, 474$
    &  $\tt 1.60e-2$  &  $\tt 2 , 20$  &  $\tt 0.33$ 	
  &  $\tt 1.60e-2$  &  $\tt 11$  &  $\tt 0.08$ 			
  &  $\tt 3.24e-3$  &  $\tt 20 , 8$  &  $\tt 0.01$ 	
  &  $\tt 3.24e-3$  &  $\tt 6 , 16$  &  $\tt 0.58$ 	
  &  $\tt 3.24e-3$  &  $\tt 5 , 16$   &  $\tt 0.22$ 	
               \\ 
            $\tt 200, 1014$
    &  $\tt 1.16e-2$  &  $\tt 3 , 20$  &  $\tt 1.88$ 	
  &  $\tt 1.16e-2$  &  $\tt 12$  &  $\tt 0.18$ 			
  &  $\tt 1.98e-3$  &  $\tt 21 , 7$  &  $\tt 0.01$ 	
  &  $\tt 1.98e-3$  &  $\tt 5 , 15$  &  $\tt 0.94$ 	
  &  $\tt 1.98e-3$  &  $\tt 4 , 20$   &  $\tt 0.37$ 	
               \\ 
            $\tt 499, 2738$
    &  $\tt 1.26e-2$  &  $\tt 3 , 29$  &  $\tt 31.4$ 	
  &  $\tt 1.26e-2$  &  $\tt 13$  &  $\tt 0.57$ 			
  &  $\tt 2.26e-3$  &  $\tt 14 , 10$  &  $\tt 0.02$ 	
  &  $\tt 2.26e-3$  &  $\tt 5 , 17$  &  $\tt 3.84$ 	
  &  $\tt 2.26e-3$  &  $\tt 5 , 18$   &  $\tt 1.08$ 	
               \\ 
            $\tt 999, 5750$
    &  $\tt 1.52e-2$  &  $\tt 3 , 29$  &  $\tt 229$ 	
  &  $\tt 1.52e-2$  &  $\tt 15$  &  $\tt 1.60$ 			
  &  $\tt 2.33e-3$  &  $\tt 19 , 11$  &  $\tt 0.06$ 	
  &  $\tt 2.33e-3$  &  $\tt 6 , 17$  &  $\tt 9.66$ 	
  &  $\tt 2.33e-3$  &  $\tt 5 , 24$   &  $\tt 5.50$ 	
               \\ 
            $\tt 2001, 12076$
    &\multicolumn{3}{c||}{ \text{n/a}  }	 
  &  $\tt 1.56e-2$  &  $\tt 15$  &  $\tt 4.53$ 			
  &  $\tt 2.33e-3$  &  $\tt 25 , 8$  &  $\tt 0.22$ 	
  &  $\tt 2.33e-3$  &  $\tt 5 , 18$  &  $\tt 21.3$ 	
  &  $\tt 2.33e-3$  &  $\tt 5 , 208$   &  $\tt 10.7$ 	
               \\  
			\hline			
		\end{tabular}
	\end{adjustbox}

	\vspace{2ex}
	\begin{adjustbox}{width=0.99\textwidth}%
		\begin{tabular}{|c||c|c|l||c|c|c||c|c|l||c|c|l||c|c|l|}
			\hline
			{$\nu = 1$}
			& \multicolumn{3}{c||}{\tt M-Matrix + OPTI}
			& \multicolumn{3}{c||}{\tt K-Exp + MOSEK}
			& \multicolumn{3}{c||}{\tt RGM + ARMIJO}
			& \multicolumn{3}{c||}{\tt K-Exp SCP}
			& \multicolumn{3}{c|}{\tt M-Matrix SCP} \\ 
			\hline 
			\hline
			\rule[0ex]{0pt}{9pt}
			$N, |\mathcal{E}|$ 
			& ${\sf opt\_gap}$ 	& ${\sf iter}$ &  $t_s$\! (s)
			& ${\sf opt\_gap}$ 	& ${\sf iter}$ &  $t_s$\! (s)
			& ${\sf opt\_gap}$ 	& ${\sf iter}$ &  $t_s$\! (s)
			& ${\sf opt\_gap}$ 	& ${\sf iter}$ &  $t_s$\! (s) 
			& ${\sf opt\_gap}$ 	& ${\sf iter}$ &  $t_s$\! (s) 
			\\ 
			\hline 
			\rule[0ex]{0pt}{8pt}
            $\tt 100, 474$
              &  $\tt 3.32e-9$  &  $\tt 1 , 24$  &  $\tt 0.21$ 	
  &  $\tt 2.0e-10$  &  $\tt 11$  &  $\tt 0.08$ 			
  &  $\tt 7.58e-8$  &  $\tt 80 , 9$  &  $\tt 0.02$ 	
  &  $\tt 6.75e-7$  &  $\tt 15 , 12$  &  $\tt 1.20$ 	
  &  $\tt 5.12e-7$  &  $\tt 15 , 18$   &  $\tt 0.84$ 	
               \\ 
            $\tt 200, 1014$
              &  $\tt 2.88e-9$  &  $\tt 1 , 34$  &  $\tt 1.03$ 	
  &  $\tt 8.0e-10$  &  $\tt 13$  &  $\tt 0.20$ 			
  &  $\tt 1.52e-7$  &  $\tt 94 , 8$  &  $\tt 0.05$ 	
  &  $\tt 1.02e-6$  &  $\tt 16 , 14$  &  $\tt 2.92$ 	
  &  $\tt 8.49e-7$  &  $\tt 16 , 21$   &  $\tt 1.59$ 	
               \\ 
            $\tt 499, 2738$
               &  $\tt 2.91e-9$  &  $\tt 1 , 35$  &  $\tt 11.5$ 	
  &  $\tt 1.5e-10$  &  $\tt 11$  &  $\tt 0.51$ 			
  &  $\tt 1.06e-7$  &  $\tt 116 , 10$  &  $\tt 0.16$ 	
  &  $\tt 1.03e-6$  &  $\tt 18 , 16$  &  $\tt 12.8$ 	
  &  $\tt 8.09e-7$  &  $\tt 18 , 24$   &  $\tt 5.68$ 	
               \\ 
            $\tt 999, 5750$
              &  $\tt 2.57e-9$  &  $\tt 2 , 51$  &  $\tt 278$ 	
  &  $\tt 3.61e-9$  &  $\tt 12$  &  $\tt 1.30$ 			
  &  $\tt 1.57e-7$  &  $\tt 126 , 14$  &  $\tt 0.39$ 	
  &  $\tt 1.20e-6$  &  $\tt 18 , 17$  &  $\tt 29.4$ 	
  &  $\tt 9.41e-7$  &  $\tt 18 , 27$   &  $\tt 23.9$ 	
               \\ 
            $\tt 2001, 12076$
              &\multicolumn{3}{c||}{ \text{n/a} }	 
  &  $\tt 1.50e-9$  &  $\tt 12$  &  $\tt 3.76$ 			
  &  $\tt 1.25e-7$  &  $\tt 140 , 9$  &  $\tt 1.35$ 	
  &  $\tt 1.07e-6$  &  $\tt 19 , 17$  &  $\tt 76.7$ 	
  &  $\tt 2.79e-5$  &  $\tt 16 , 663$   &  $\tt 105$ 	
               \\   
			\hline			
		\end{tabular}
	\end{adjustbox}
\end{table*}

$\bullet$ \underline{\tt M-matrix + OPTI}: We solve 
the relaxed problem based on M-matrix in 
subsection~\ref{subsec:LowerBound} using 
Algorithm~\ref{algRelaxation}, where line~3 utilizes an 
interior point method from the OPTI package 
\cite{currie12opti}, and consider the feasible point
$(\tilde{\bs}, \tilde{\pb})$ in 
Theorem~\ref{thmConvexRelaxSolution}. 
The column ${\sf iter}$ shows the pair of (i) the
number of outer 
updates (each corresponding to an approximation 
$\tilde{\Omega}(\underline{\zb}^{(t)})$ of the set 
$\Omega$) and (ii) the average number of inner 
interior-point iterations inside outer updates. 
As we can see, 
the algorithm runtime does not scale 
well with the network size; for the case $(N, 
|\mathcal{E}|) = (2001,12076)$, the solver failed 
to converge within an hour. 
	
$\bullet$ \underline{\tt K-Exp + MOSEK}: We solve 
the relaxed problem $(\rm{P_{R2}})$ with exponential 
cone constraints using the MOSEK package \cite{mosek} 
and consider the feasible point $(\bs', \pb')$ 
in Theorem~\ref{thmConvexRelaxSolution_Cone}.   
The column {\sf iter} indicates the number of interior 
point iterations. As expected, this method enjoys
smaller runtimes and, hence, has
a computational advantage over the 
\mbox{\tt M-matrix + OPTI} scheme. 
	
$\bullet$ \underline{\tt RGM + ARMIJO}: We use the 
RGM in Algorithm~\ref{algReduceGrad} to 
find a local minimizer $(\bs^\star, \pb^\star)$. 
The column {\sf iter} shows the pair of (a) the number 
of gradient updates and 
(b) the maximum number of fixed point 
iterations needed for evaluating $\uu$ in line~4 and 
$\pb^*$ in line~7, denoted by  $\bar{k}_{\rm fp}$.
In our study, the reported values of  
$\bar{k}_{\rm fp}$ are all relatively 
small as expected from our earlier discussions
(Theorem~\ref{thmEquilibrium} and subsection
\ref{subsec:ComputationalComp}).

$\bullet$ \underline{\tt M-Matrix SCP}: We use
Algorithm~\ref{algSCP} to find a local minimizer. 
Here, we solve the convex optimization subproblem 
in line~5 of Algorithm~\ref{algSCP} approximately, 
using the OPTI package for $N \le 10^3$ and a gradient 
descent method for $N > 10^3$. The column 
${\sf iter}$ shows the pair of (a) the number of outer 
linearization updates and (b) the average inner 
steps of either the interior-point solver or gradient 
descent method. However, this approach does not scale well 
due to the M-matrix based relaxation as shown in 
Table~\ref{C2_tb_comparison_all}.

$\bullet$ \underline{\tt K-Exp SCP}: 
For this algorithm, we replace the convex optimization 
subproblem in line~5 of Algorithm~\ref{algSCP} with 
the formulation in \eqref{SCP_ExpC} and solve it 
approximately using MOSEK. The 
column ${\sf iter}$ shows the pair of (a) the number 
of outer linearization updates and (b) the average 
inner interior-point steps. 
As we can see from Table~\ref{C2_tb_comparison_all}, 
this approach achieves similar ${\sf opt\_gap}$ as 
{\tt RGM + ARMIJO}, but the runtime is roughly two 
orders higher. Compared to {\tt M-Matrix SCP},
its performance, both in terms of the quality of
solution and runtime, is comparable.

We summarize observations. First, 
as $\nu$ increases and infection costs become 
larger, as expected from Lemma
\ref{corExactRelaxation}, the gap diminishes and 
becomes negligible when $\nu=1$. 
Second, the upper bounds from local minimizers 
are very close to 
the lower bound $f^*_{\rm R}$, even when 
the relaxation may not be exact (for $\nu =0$ and $0.5$). 
Moreover, they lead to optimal points when the 
relaxation is exact. This suggests that the algorithms 
can practically find global solutions to the original 
problem. Finally, Algorithm~\ref{algReduceGrad} 
based on {\tt RGM}, is highly scalable: 
despite  a larger number of required iterations 
compared to all other schemes, the total runtime~$t_s$ 
is much smaller and is a fraction of that of
Algorithm~\ref{algRelaxation} or~\ref{algSCP}; {we note that we did not optimize step sizes; we instead used the same parameters in all cases.} 
    
\blue{
We also tried {\tt sqp} and {\tt interior-point} solvers in 
{\tt MATLAB} for problem $({\rm P})$, but found them to be
very inefficient 
compared to our approaches to finding local optimizers.
For example, for the case $(N, |\mathcal{E}|) = (999, 5750)$ 
and $\nu=0.5$, while RGM runs in only a fraction of a second, 
{\tt sqp} takes 19 iterations in 102 seconds to achieve 
the same ${\sf opt\_gap}$ as RGM, and {\tt interior-point} 
terminates after 125 iterations in 68 seconds with twice the
${\sf opt\_gap}$ of RGM.}

\subsection{Case $\boldsymbol{\lambda} = \0$} 

In this subsection, we study the scenario with 
no primary attacks, using the 
scale-free network with $499$ nodes from 
the previous subsection. We consider the value of 
$\nu$ in $\{0.6, 0.8, 1\}$ in order to obtain 
more informative numbers. 

\begin{figure*}[h]
	\centering
	\includegraphics[scale=0.62]{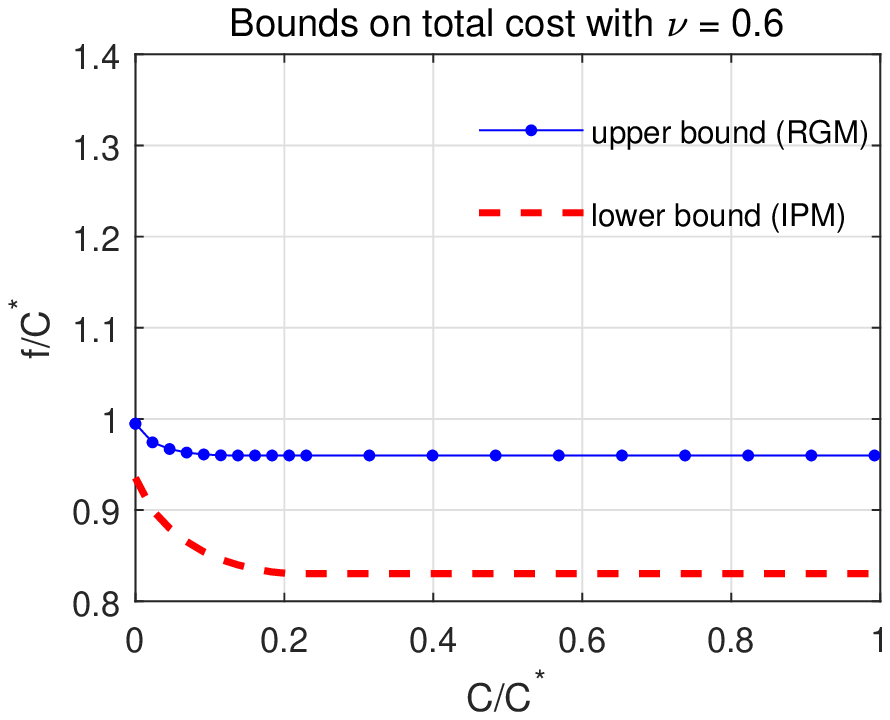}
	\hspace{-6mm}
	\includegraphics[scale=0.62]{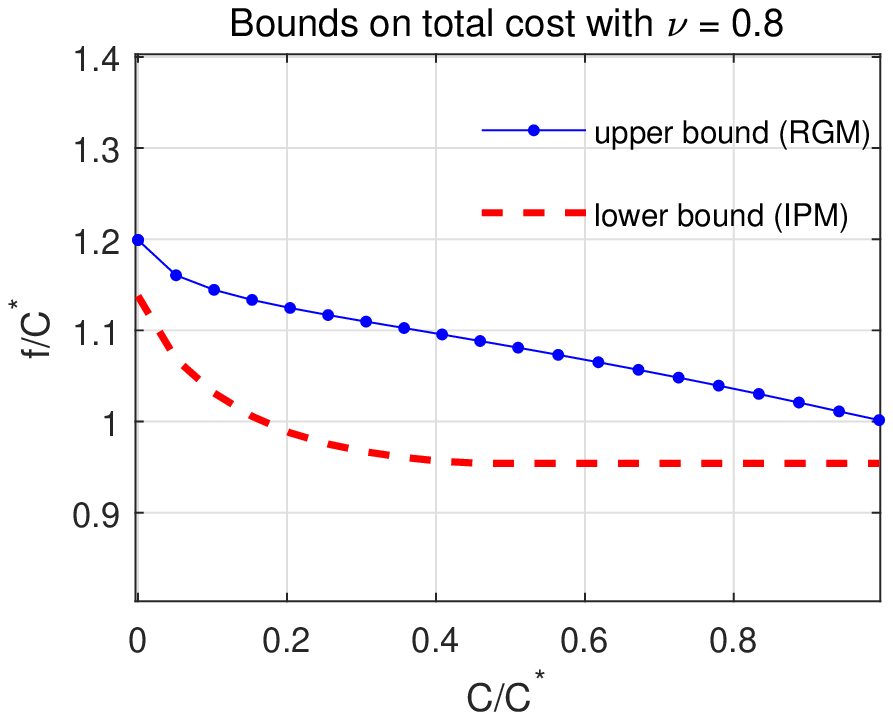}
	\hspace{-6mm}
	\includegraphics[scale=0.62]{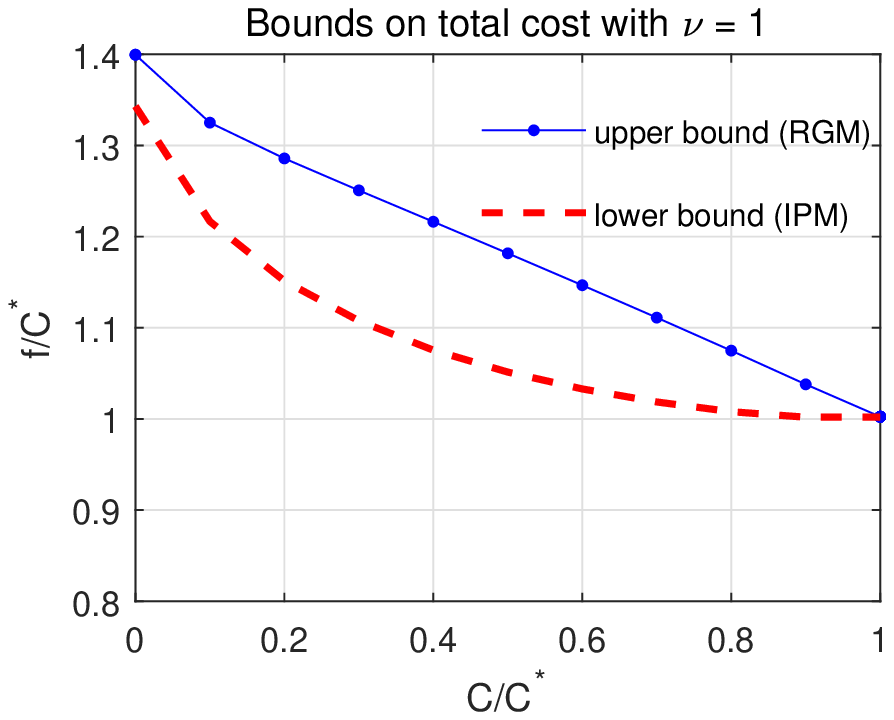}
\caption{Upper and lower bounds on the optimal 
    value of the problem in \eqref{eq:constr1} with 
    varying \emph{fictitious} security budget $C$ 
    ($\nu = 0.6, 0.8$ and $1.0$).}
	\label{fig_lambda0}
\end{figure*}

Following the steps outlined in Section
\ref{sec:Case_lambda0}, 
we first find the optimal value $C^*$ and an optimal 
point $\bs_0$ of problem \eqref{eq:rho0}, using MOSEK. 
When $\nu=1$, condition 
\eqref{eqClaim_suff_cond} in Theorem
\ref{claim_suff_cond} holds for all $\bs 
\in \RN_+$ and, thus, we have $f_0^* = C^*$ 
with $\bs_0$ being an optimal point of the
original problem in \eqref{eqCase0_prob}. 

Second, for $\nu<1$, we consider the problem 
in \eqref{eq:constr1} with $\epsilon = 0.01C^*$
or, equivalently, $C_1 = 0.99C^*$, and find a lower 
bound $f_L(C_1)$, which is the optimal 
value of the relaxed problem $({\rm P_{R3}})$, using 
MOSEK. For $\nu = 0.6$ and $0.8$, we found 
that the constraint $w(\bs) \leq C_1$ is inactive at 
the optimal point and, consequently, 
$f_L^* = f_L(C_1)$. 
In addition, using the projected RGM described in 
subsection \ref{subsubsec:UpperBound}, 
we also compute an upper bound $f_U(C_1)$ 
on the optimal value $f^*_0$ 
and then consider the gap 
$\Delta f_0 := 
\min\{ f_U(C_1), C^* \} - f_L^*$.

We plot in Fig.~\ref{fig_lambda0} both the upper 
bound $f_U(C)$
and the lower bound $f_L(C)$ of the 
optimal value of problem in \eqref{eq:constr1} as
a function of $C$ over the 
interval $[0, 0.99C^*]$. There are several 
observations we make from the plots. 

\noindent {\bf o1.} 
When $\nu = 0.6$ and the infection costs $\cb$ 
are small, the left plot shows $f_U(C) 
< C^*$, which tells us that the  
local minimizer found by the projected RGM
of the problem in \eqref{eq:constr1} 
with the given security budget $C$,
is better than the optimal point of 
\eqref{eq:rho0}. 
The plot also indicates that both the upper and 
lower bounds quickly reach a plateau less than 
$C^*$ with increasing $C$. This likely suggests 
that the optimal security investment at an optimal 
point of \eqref{eqCase0_prob} is significantly 
smaller than $C^*$. 

\noindent {\bf o2.} As we discussed just before 
subsection \ref{subsubsec:LowerBoundCR}, the 
left plot highlights the practical usefulness 
of our approach: by considering the problem in
\eqref{eq:constr1} for $C = C^* - \epsilon$ 
with small positive $\epsilon$, we can quickly 
estimate the optimal security investments, i.e., 
whether or not they are close to $C^*$
(when the security budget constraint is active
at $\bs^*_C$)
or is equal to $w(\bs^*_C)$ (when the constraint
is inactive at $\bs^*_C$), without suffering from
the numerical issues mentioned earlier.

\noindent {\bf o3.}
As $\nu$ increases, so do the bounds 
$f_U(C_1)$ and 
$f_L(C_1)$ (normalized by $C^*$). 
For larger values of $\nu$ with higher 
infection costs (middle and right plots), the
upper bound $f_U(C_1)$ obtained 
from a local minimizer is slightly
larger than another upper bound $C^*$,
suggesting that the budget constraint is active
at a local minimizer returned by the projected
RGM. 
This suggests that, when infection costs are 
high, $\bs_0$, which may overinvest compared to 
an optimal point, may still be a good feasible 
point. Furthermore, as $\nu$ gets close to one, 
eventually $\bs_0$ becomes an optimal point 
for the problem in \eqref{eqCase0_prob}
as shown in the right plot for $\nu = 1$. 
This is expected because as the
infection costs become larger, the system 
operator has an incentive to invest more 
in security.

\noindent {\bf o4.}
Although we do not report detailed numbers here, 
both the upper and lower bounds can be computed 
efficiently using RGM and MOSEK; the 
required computational time is always less 
than 2 seconds for each run, suggesting that
the RGM may be a good method for identifying
suitable security investments for large
systems.

\section{Discussion}
    \label{sec:Discussion}

\subsection{Constraints on Security Investments}
    \label{sec:ConstraintDiscussion}

\blue{
As mentioned in Section~\ref{sec:Formulation}, we 
imposed only non-negativity constraints on 
security investments $\bs$ in problem (P) and
subsequent problems. In this subsection, we 
discuss how additional constraint(s) on $\bs$, 
such as a budget constraint, affect 
our main results reported in Sections~\ref{sec:LowerBound} 
through \ref{sec:Case_lambda0}.
}

\subsubsection{Case with $\boldsymbol{\lambda} \gneq \0$}

\blue{
Suppose that the security investments $\bs$ is 
required to lie in some convex set $\cS \subset \R_+^N$
in problem (P). Then, the relaxed problems 
${\rm (P_{R1})}$ and ${\rm (P_{R2})}$ are still 
convex. However, the pair $(\tilde{\bs}, \tilde{\pb})$
(resp. $(\bs', \pb')$) in Theorem 
\ref{thmConvexRelaxSolution} (resp. 
\ref{thmConvexRelaxSolution_Cone}) is a feasible
point of problem (P) if and only if $\tilde{\bs}
\in \cS$ (resp. $\bs' \in \cS$). 
For this reason, the convex relaxations
${\rm (P_{R1})}$ and ${\rm (P_{R2})}$ are
exact if $\tilde{\bs}$ and $\bs'$ lie in $\cS$
and the condition \eqref{eqExactRelaxationCond2}
in Lemma~\ref{corExactRelaxation} holds. 
In addition, in order to ensure that $\bs^{(t+1)}$ 
belongs to $\cS$, line 6 of Algorithm~\ref{algReduceGrad} 
needs to be modified as follows:
\beqan
\bs^{(t+1)} 
\leftarrow  \cP_{\cS}\big[ \bs^{(t)} 
    - \gamma_t \big( \nabla w(\bs^{(t)}) 
    - \boldsymbol{\alpha}\circ\pb^{(t)}\circ \uu\big) \big], 
\eeqan
where $\cP_{\cS}[\cdot]$ denotes the projection operator
onto $\cS$. 
}

\subsubsection{Case with $\boldsymbol{\lambda} = \0$}

\blue{
The finding in Theorem~\ref{thmSpectralRadiusOptimal}
continues to hold when the minimum element $\bs_{\min}$
of the set $\cS$ exists, with the minimum element
$\bs_{\min}$
being the unique optimal point of the problem 
in \eqref{eqCase0_prob} when 
$\rho\big( \diag(\boldsymbol{\alpha} \circ 
\bs_{\min} + \boldsymbol{\delta})^{-1} B \big) \leq 1$. 
Hence, when the recovery rates $\boldsymbol{\delta}$
are sufficiently large, only the minimum investments 
given by $\bs_{\min}$ are needed. Obviously, when 
$\cS = \R_+^N$, the minimum element is $\0$. Also, 
for a general constraint set ${\cal S}$,
the problem in \eqref{eq:rho0} is not guaranteed to 
be feasible, i.e., $\rho (\diag(\boldsymbol{\alpha}
\circ \bs + \boldsymbol{\delta})^{-1}B) > 1$ for all 
$\bs \in \cS$. This
means that $\pb_{\rm se}(\bs) > \0$ for all $\bs \in 
{\cal S}$ and our methods in Sections~\ref{sec:LowerBound}
and \ref{sec:UpperBound} can be applied directly to the problem 
in \eqref{eqCase0_prob}.
}

\subsection{Relaxation of Irreducibility of $B$}
    \label{subsec:Irreducibility}

\blue{  
Although we suspect that irreducibility of matrix $B$ is a 
reasonable assumption for many systems of interest, 
such as enterprise intranets, some systems may not
satisfy this assumption. For this reason, here we discuss 
how relaxing this assumption affects our results.
}

\blue{
Note that the irreducibility 
of $B$ is used to (i) ensure the existence of a
unique equilibrium $\pb^*(\bs) \in (0,1)^N$ of \eqref{eq:pdot} as shown in Theorem~1, 
and (ii) make use of Lemma~\ref{lemConvexityInvMMatrix} 
for our M-matrix based convex formulations. 
}

\blue{
We relax the assumption that $B$ is irreducible and
instead assume that, for every system $i \in \cA$, either 
$\lambda_i > 0$ or there is another system $j \in \cA
\setminus \{i\}$ with $\lambda_j > 0$ and a directed path 
to $i$ in ${\cal G}$. Then, 
the main results in Theorem~1 still hold, i.e., 
there is a unique equilibrium $\pb^*(\bs) \in (0,1)^N$ of 
\eqref{eq:pdot} which is strictly positive and can be computed 
via iteration~\eqref{eqFixedPointEquilibrium}. 
Moreover, our formulations and results based on exponential 
cones, including the convex relaxation ${\rm (P_{R2})}$ and 
Theorem~\ref{thmConvexRelaxSolution_Cone}, are still 
valid because they rely only on the 
positivity of $\pb(\bs)$. However, the convexity of 
problem ${\rm (P_{R1})}$ is not guaranteed and requires 
extending Lemma~\ref{lemConvexityInvMMatrix}.
Finally, when the aforementioned assumption does not hold, 
the problem is more complicated; our results cannot
be applied directly, and it is still an open problem. 
}



\section{Conclusions}
    \label{sec:Conclusion}
    
We studied the problem of determining suitable 
security investments for hardening interdependent 
component systems of large systems against 
malicious attacks and infections. Our formulation
aims to minimize the average aggregate costs of 
a system operator based on the steady-state analysis. 
We showed that the resulting optimization problem
is nonconvex, and proposed a set of algorithms for
finding a good solution; two approaches are based
on convex relaxations, and the other two look for
a local minimizer based on RGM and SCP. In addition, 
we derived a sufficient condition under which the
convex relaxations are exact. Finally, we evaluated
the proposed algorithms and demonstrated that,
although the original problem is nonconvex, 
local minimizers found by the RGM and SCP methods
are good solutions with only small optimality gaps. 
In addition, as predicted by our analytical results, 
when the infection costs are high, the optimal
points of convex relaxations solve the original
nonconvex problem.

\bibliographystyle{plain}
\bibliography{ref}

\begin{IEEEbiographynophoto}
{Van Sy Mai}  
received his B.E. degree in Electrical Engineering 
from the Hanoi University of Technology in 2008,
his M.E. degree in Electrical Engineering 
from the Chulalongkorn University in 2010, and
his Ph.D. degree in Electrical and Computer Engineering 
from the University of Maryland in 2017. 
Since 2017, he has been a guest researcher 
at the National Institute of Standards and Technology.
\end{IEEEbiographynophoto}

\vspace{-10mm}
\begin{IEEEbiographynophoto}
{Richard J. La} received his Ph.D. degree in Electrical Engineering from the University of California, Berkeley in 2000. Since 2001 he has been on the faculty of the Department of Electrical and Computer Engineering at the University of Maryland, where he is currently a Professor. He is currently an associate editor for IEEE/ACM Transactions on Networking, and served as an associate editor for IEEE Transactions on Information Theory and IEEE Transactions on Mobile Computing. 
\end{IEEEbiographynophoto}

\vspace{-10mm}
\begin{IEEEbiographynophoto}
{Abdella Battou} 
is the Division Chief of the Advanced Network Technologies Division, within The Information Technology Lab at NIST. He also leads the Cloud Computing Program.
His research areas in Information and Communications Technology (ICT)  include cloud computing,  high performance optical networking, information centric networking, and more recently quantum networking.
From 2009 to 2012, prior to joining NIST, he served as the Executive Director of The Mid-Atlantic Crossroads (MAX) GigaPop. From 2000 to 2009, he was Chief Technology Officer, and Vice President of Research and Development for Lambda OpticalSystems.
%
%
Dr. Battou holds a PhD and MSEE in Electrical Engineering, and MA in Mathematics all from the Catholic University of America.
\end{IEEEbiographynophoto}
\newpage


\begin{appendices}
		
\section{Proof of Theorem~\ref{thmEquilibrium}}
    \label{appen:Theorem1}

The existence and uniqueness of an equilibrium 
can be established following the same approach in \cite{khanafer2016stability}, and we omit its
proof here. 

We now show that $\pb^*$ can be obtained using
the iteration in \eqref{eqFixedPointEquilibrium}
of the theorem. Given $\boldsymbol{\lambda} \gneq \0$, 
$\boldsymbol{\delta}, \sbold \ge \0$, it follows that 
the steady state $\pb^*$ satisfies \eqref{eqSteadyState2},
which can be rewritten as 
\begin{align}
(\1-\pb^*)\circ (\boldsymbol{\lambda} + B\pb^*) 
= \uu\circ \pb^*
\label{eqSteadyState3}
\end{align}
with $\uu = \boldsymbol{\alpha}\circ \bs + 
\boldsymbol{\delta}$. This is equivalent to 
\begin{align}
\pb^*  
= \frac{\boldsymbol{\lambda} + B\pb^*}{\boldsymbol{\lambda} 
	+ B\pb^* + \uu} =: H(\pb^*). 
\label{eqSteadyState4}
\end{align}
This suggests that $\pb^*$ is a fixed point of the 
map $H$. It can be verified that $H$ satisfies 
the following properties of a \emph{standard 
interference function} \cite{yates1995framework}: 
for any $\pb \ge  \pb'\ge \0$, (i) 
$H(\pb) > \0$, (ii)  $ H(\pb) \ge  H(\pb')$, 
and (iii) $c H(\pb) >  H(c\pb)$ for $c>1$. As 
a result, the unique fixed point of $H$ can be 
computed using the iteration 
\begin{equation}
\pb_{k+1} = H(\pb_k) 
	\label{eqEqilibriumFixedPoint}
\end{equation}
starting from any $\pb_0 \ge \0$. 
The convergence of the 
synchronous version (as well as asynchronous one) 
can be found in \cite{yates1995framework}. 
However, little is known about the convergence rate 
except for when a stronger condition than (iii) is used. 
For example, linear convergence is achieved when 
$H$ is a contraction \cite{feyzmahdavian2012contractive}. 
Here, we demonstrate linear convergence using a 
different, more direct argument. 

First, for any $\pb \ge  \pb'\ge \pb^*$, we have 
\begin{align*}
\0 \le  H(\pb) -  H(\pb') 
&= \frac{\uu \circ B(\pb - \pb')}
	{(\uu+\boldsymbol{\lambda} + B\pb)\circ
	(\uu+\boldsymbol{\lambda} + B\pb')} \\
&\le \frac{\uu \circ B(\pb - \pb')}
	{(\uu+\boldsymbol{\lambda} + B\pb^*)^2} \\
&= \frac{(\1-\pb^*)^2}{\uu}\circ B(\pb - \pb') \\
&= \diag(1-\pb^*)\Xi B(\pb - \pb'),
\end{align*}
where $\Xi = \diag(\frac{\1-\pb^*}{\uu})$, and
the second equality is a consequence of 
$\1-\pb^* = \frac{\uu}{\uu + 
\boldsymbol{\lambda} + B\pb^*}$ 
from \eqref{eqSteadyState4}. 
Here, for a vector $\bx$, we use $\bx^2 = \bx\circ\bx$. 

Next, we show that $\rho(\Xi B) \le 1$, which then 
implies 
\begin{equation} 
\rho_0:=\rho\big(\diag(\1-\pb^*)\Xi B\big) 
\le 1-\textstyle\min_i  p_i^* 
< 1. \label{eqContractionParam}
\end{equation} 
To this end, note $\rho(\Xi B) \le \rho 
\Big( \Xi \big(\diag(\frac{\boldsymbol{\lambda}}{\pb^*}) 
+ B\big) \Big) = 1$, 
where the inequality follows from 
the monotonicity of the spectral radius of nonnegative 
matrices \cite[Thm~8.1.18]{horn85matrix}, 
and the equality from the fact that 
$\pb^*$ is a strictly positive eigenvector of 
$\Xi \big(\diag(\frac{\boldsymbol{\lambda}}{\pb^*}) 
+ B\big)$ corresponding to eigenvalue $1$, i.e.,  
$$\Xi \Big(\diag(\frac{\boldsymbol{\lambda}}{\pb^*}) 
+ B\Big) \pb^* = \Xi (\boldsymbol{\lambda} +B \pb^*) \stackrel{\eqref{eqSteadyState3}}{=} \pb^*.$$ 
Thus, we conclude that, starting from any $\pb_0$ such that 
$\pb^* \le \pb_0 \le \1$, $H$ is contractive with 
parameter $\rho_0< 1$ given in \eqref{eqContractionParam}. 
In fact, iteration 
\eqref{eqEqilibriumFixedPoint} satisfies $0\le 
\pb_{k} - \pb^* \le \rho_0^k(\pb_0 - \pb^*)$, and
linear convergence is achieved.

\section{A Proof of Convexity of $\Omega$}
	\label{Appendix_CvxOmega}

We will prove this convexity claim by using the 
convexity of the dominant eigenvalue of an 
essentially nonnegative matrix (also known as 
a Metzler matrix, i.e., off-diagonal elements 
are nonnegative). In particular, it is known 
that any Metzler matrix $A\in \Rnn$ has an 
eigenvalue $\bar\sigma(A)$ called the dominant 
eigenvalue that is real and greater than or 
equal to the real part of any other eigenvalue 
of $A$. The following result was shown in 
\cite{cohen1981convexity}. 
\myskip

\noindent
{\em Theorem \cite{cohen1981convexity}:}
Let $A$ be a Metzler matrix and $D$ be a diagonal 
matrix. Then $\bar{\sigma}(A+D)$ is a convex 
function of $D$.
\myskip

\noindent This theorem implies that 
$\underline{\sigma}(\diag(\zb)-B)$ is concave 
in $\zb$. 

Fix $\zb_1, \zb_2\in \Omega$ and let $\zb(\tau) 
:= \tau\zb_1 + (1-\tau)\zb_2$, $\tau \in 
[0, 1]$. We will show that $\zb(\tau) \in \Omega$ 
for all $\tau \in [0,1]$, which proves 
the convexity of $\Omega$. 
Becase $B$ is a Metzler matrix, from
Lemma~\ref{lem_M_Matrix}-(e)), the following 
equivalence holds:
$\zb\in \Omega$ if and only if
$\underline{\sigma}(\diag(\zb)-B) >0$.
Since $\zb_1, \zb_2 \in \Omega$, we have $\underline{\sigma}(\diag(\zb_i)-B) 
>0$ for $i = 1, 2$. This, together with the 
concavity of $\underline{\sigma}(\diag(\zb)-B)$, 
gives us the inequality 
\begin{eqnarray*}
&& \hspace{-0.2in}
\underline{\sigma}(\diag(\zb(\tau))-B) \\ 
\mygeq \tau \underline{\sigma}(\diag(\zb_1)-B) 
    + (1-\tau)\underline{\sigma}(\diag(\zb_2)-B) 
\end{eqnarray*}
which is strictly positive for all $\tau \in [0,1]$.

\section{Formulation of \eqref{MatBal} 
    as a Matrix Balancing Problem}
	\label{Appendix_MatrixBalancing}

First, recall that a matrix $A\in \Rnn_+$ is 
\emph{balanced} if $A\1 = A\T\1$. A positive diagonal 
matrix $X$ is said to balance a matrix $A \in \Rnn_+$ 
if $XAX^{-1}$ is balanced. The following lemma
holds \cite{schneider1990comparative}: 

\begin{lemma} \label{lemMatBalExist}
For any $A \in \Rnn_+$, let $\mathcal{G}_A$ denote 
the weighted directed graph associated with $A$. 
Then, the following statements are equivalent: 
\begin{itemize}
	\item[(i)] A positive diagonal matrix $X$ that 
		balances $A$ exists. 
	\item[(ii)] The graph $\mathcal{G}_A$ is strongly 
		connected. 
	\item[(iii)] There exists a positive vector $\xx$ 
		that minimizes the sum $\sum_{1 \le i,j \le n} 
		a_{ij} x_i x_j^{-1}$.
\end{itemize}
Moreover, matrix $X$ and vector $\xx$ are unique up 
to scalars.%
\end{lemma}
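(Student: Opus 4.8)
The plan is to pass to logarithmic coordinates, in which the objective of (iii) becomes convex, and then to read off the balancing equations from its stationarity conditions and the role of strong connectivity from its recession cone. Writing $\phi(\xx) := \sum_{i,j} a_{ij} x_i x_j^{-1}$ and substituting $x_i = e^{u_i}$ gives $\phi = \sum_{i,j} a_{ij} e^{u_i - u_j}$, a sum of exponentials of affine functions of $\uu$ and hence convex in $\uu$; moreover $\phi$ is invariant under $\uu \mapsto \uu + c\1$ (equivalently $\xx \mapsto c\xx$), so I would analyze it on the subspace $\{\uu : \1\T\uu = 0\}$.

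For (i) $\Leftrightarrow$ (iii), I would differentiate in the original variables: $\partial \phi/\partial x_k = \sum_j a_{kj} x_j^{-1} - x_k^{-2}\sum_i a_{ik} x_i$, so multiplying the stationarity condition $\partial\phi/\partial x_k = 0$ by $x_k$ yields $\sum_j x_k a_{kj} x_j^{-1} = \sum_i x_i a_{ik} x_k^{-1}$, which is exactly the equality of the $k$-th row and column sums of $X A X^{-1}$ with $X = \diag(\xx)$, i.e.\ the balancing condition. Since $\phi$ is convex in $\uu$, a stationary point is a global minimizer, so a balancing $X$ exists if and only if a positive minimizer $\xx$ exists.

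The crux is (ii) $\Leftrightarrow$ (iii), which I would settle by a recession-cone analysis. A minimizer of the convex $\phi$ on $\1^\perp$ exists if and only if $\phi$ admits no recession direction there, i.e.\ no $\dd \not\parallel \1$ for which $t \mapsto \phi(\uu_0 + t\dd)$ stays bounded as $t \to +\infty$; since each term behaves like $e^{t(d_i - d_j)}$, boundedness is equivalent to $d_i \le d_j$ for every edge $(i,j)$ with $a_{ij} > 0$. If $\mathcal{G}_A$ is strongly connected, then the set $S$ of nodes attaining $\max_i d_i$ is closed under out-edges, hence equals the whole vertex set, forcing $\dd$ constant; thus no nonconstant recession direction exists, $\phi$ is coercive on $\1^\perp$, and a minimizer exists, giving (ii) $\Rightarrow$ (iii). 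Conversely, if $\mathcal{G}_A$ is not strongly connected, choose $p, q$ with no directed path from $p$ to $q$ and let $R$ be the set of nodes reachable from $p$; then $R$ is proper and nonempty with no edges leaving it, so $\dd := \1_R$ (projected onto $\1^\perp$) satisfies $d_i \le d_j$ on every edge while being nonconstant, a recession direction that rules out a minimizer. This is the contrapositive of (iii) $\Rightarrow$ (ii).

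For uniqueness up to scalars, I would note that the Hessian of $\phi$ in $\uu$ is $\sum_{i,j} a_{ij} e^{u_i - u_j}(\e_i - \e_j)(\e_i - \e_j)\T$, which is positive definite on $\1^\perp$ exactly when the vectors $\{\e_i - \e_j : a_{ij} > 0\}$ span $\1^\perp$; strong connectivity makes the underlying undirected graph connected, so they do, $\phi$ is strictly convex on $\1^\perp$, and the minimizer $\xx$ (and hence $X$) is unique up to scaling. I expect the main obstacle to be the recession-cone step: carefully justifying that boundedness of $\phi$ along a direction is equivalent to the edge-monotonicity condition $d_i \le d_j$, and handling the quotient by the scaling direction $\1$ consistently so that existence, nonexistence, and uniqueness of the minimizer all transfer correctly between the full space and $\1^\perp$.
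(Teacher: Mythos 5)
You should first note that the paper does not actually prove this lemma: it is quoted verbatim from \cite{schneider1990comparative} and used as a black box, so your proof is being compared against a citation rather than an argument. Your route --- passing to $x_i=e^{u_i}$ so that $\phi(\uu)=\sum_{i,j}a_{ij}e^{u_i-u_j}$ is convex, identifying the balancing equations with the stationarity conditions (the computation $x_k\,\partial\phi/\partial x_k=\sum_j a_{kj}x_kx_j^{-1}-\sum_i a_{ik}x_ix_k^{-1}$ is correct and does give row sum $=$ column sum of $XAX^{-1}$), and using a recession/coercivity argument on $\1^\perp$ --- is the standard geometric-programming proof of this result, and the parts the paper actually relies on are sound: (i) $\Leftrightarrow$ (iii), (ii) $\Rightarrow$ (iii) via the ``max-set is closed under out-edges'' argument, and uniqueness up to scalars via positive definiteness of the Hessian on $\1^\perp$ for a connected underlying graph. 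Since the lemma is only ever applied to $\diag(\hh)B$ with $B$ irreducible, these are the only directions the paper needs.

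There is, however, a genuine gap in your (iii) $\Rightarrow$ (ii) step, and it cannot be repaired because that implication is false as the lemma is stated. Your direction $\dd=\1_R$ (with $R$ the reachable set from $p$) satisfies $d_i\le d_j$ on every edge, but it only rules out a minimizer if $\phi$ \emph{strictly decreases} along it, which requires at least one entry $a_{ij}>0$ with $i\notin R$, $j\in R$. If no positive entry crosses between $R$ and $R^c$ in either direction --- e.g.\ $A=0$, or $A$ permutation-similar to a block-diagonal matrix with two irreducible blocks --- then $\phi$ is constant along $\dd$, a positive minimizer still exists, and a balancing $X$ still exists, yet $\mathcal{G}_A$ is not strongly connected. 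The correct characterization in the literature (Osborne; Eaves--Hoffman--Rothblum--Schneider) is that a nonnegative matrix is balanceable by a positive diagonal similarity iff it is \emph{completely reducible} (its graph is a disjoint union of strongly connected components with no edges between them), with uniqueness up to one scalar per component. Your proof of (ii) $\Rightarrow$ (i), (iii) and of uniqueness under strong connectivity stands; you should either restrict (iii) $\Rightarrow$ (ii) to the irreducible setting in which the paper invokes the lemma, or replace (ii) by complete reducibility, in which case your recession argument does close (the crossing edge you need is guaranteed whenever two components are joined by an edge in only one direction).
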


We now proceed to prove that \eqref{MatBal} is a 
matrix balancing problem. Suppose that 
$\underline{\sigma}(\diag(\zb)-B) = 0$, which is 
equivalent to the condition that $\diag(\zb) - B$ is 
a singular M-matrix (i.e.,	$\zb\in \partial\Omega$). 
Then, there exists $\tilde{B} \in \R^{N \times N}_+$ 
such that 
$\diag(\zb) - B = \rho(\tilde{B})I - \tilde{B}.$ 
Since $\tilde{B}$ is also irreducible, it follows 
from the Perron-Frobenius theorem \cite{horn85matrix} 
that there is $\xx > \0$ such that $\tilde{B}\xx = 
\rho(\tilde{B}) \xx$. As a result, $(\diag(\zb) - B)
\xx = \0$, i.e., 
\begin{align}
\textstyle\sum_{j=1}^N b_{ij}x_jx_i^{-1} = z_i, 
	\quad \forall i=1,\ldots,n. \label{eqEigvec}
\end{align}
Therefore, we can write \eqref{MatBal} as follows.
\begin{align}
&\textstyle\min_{\zb > \0} \quad\{\hb\T\zb ~|~ 
	\underline{\sigma} \big(\diag(\zb) - B\big) 
	= 0 \} \nnb\\
= &\textstyle\min_{\zb > \0, \xx >\0} \quad\{\hb\T\zb 
	~|~ \eqref{eqEigvec} \text{ holds}\} \nnb\\
= &\textstyle\min_{\xx >\0} \quad  \sum_{1\le i,j\le n} 
	h_ib_{ij}x_jx_i^{-1} \nnb
\end{align}
which, by Lemma~\ref{lemMatBalExist}, is the problem 
of balancing $\diag(\hh)B$.

\section{A Proof of Theorem 
\ref{thmConvexRelaxSolution_Cone}}
    \label{appen_exact_relax_cone}

The first inequality in \eqref{eq:thm3} is obvious 
because $({\rm P_{R2}})$ is a relaxation of 
$({\rm P})$. The second inequality follows from 
the feasibility of  $(\bs', \pb')$ as shown below. 

\subsection{Proof of Feasibility of $(\bs', \pb')$
    for (P)}

Recall that 
\[
\pb' = e^{-\yb^+} > \0 \quad 
	\bs' = \bs^+ + \diag(\boldsymbol{\alpha}^{-1})
	    B(\pb^+ - \pb').
\]
Since $\bs^+ \ge \0$ and $\pb^+ \ge e^{-\yb^+} = 
\pb'$, it follows that $\bs' \ge \0$. It remains 
to show that $(\bs', \pb')$ satisfies the constraint 
\eqref{eqSteadyState} or, equivalently, 
\begin{equation}
(\pb')^{-1}\circ\boldsymbol{\lambda} 
    + (\pb')^{-1}\circ B\pb' 
= \boldsymbol{\lambda} + B\pb' 
    + \boldsymbol{\alpha}\circ \bs' 
    + \boldsymbol{\delta}.
    \tag{\ref{eqSteadyState2}}
\end{equation}
From the definition of $(\pb',\bs')$, 
\eqref{eqSteadyState2} is equivalent to 
\begin{align}
e^{\yb^+}\circ\boldsymbol{\lambda} + e^{\yb^+}\circ Be^{-\yb^+} 
= \boldsymbol{\lambda} + B\pb^+ + \boldsymbol{\alpha}\circ \bs^+ + \boldsymbol{\delta},\nnb
\end{align}
where the right-hand side equals $\tb^+ + U^+\1$
from \eqref{eqSteadyState6}. 

To show that the above equality holds, we will prove 
that the equalities
\begin{equation} 
\tb^+ = \blambda \circ e^{\yb^+}, \quad 
U^+ = \diag(e^{\yb^+})B\diag(e^{-\yb^+}) 
    \tag{\ref{eqEXP_varopt}}
\end{equation}
hold, i.e., the inequality constraints of $\tb$ and 
$U$ are active at the solution $\bx^+_{\rm R}$ of 
problem $({\rm P_{R2}})$. 
Let $\bmu_s$, $\bmu_p, \overline{\bmu}_p$, $\bmu_t$, 
$\bmu_y \in \RN_+$ and $\Phi \in \mathbb{R}^{N\times N}_+$ 
be the Lagrangian multipliers associated with inequality 
constraints, and $\boldsymbol{\sigma} \in \RN$ with 
the equality constraint in \eqref{eqSteadyState6} of 
$({\rm P_{R2}})$. Then, the Lagrangian function is 
given by 
\begin{align}
L = &w(\bs) + \langle\cb,\pb\rangle 
+ \langle \bmu_p, e^{-\yb}-\pb \rangle  
+ \langle \bmu_t,\blambda\circ e^{\yb} 
    - \tb \rangle \nnb\\
&+ \langle \Phi, \diag(e^{\yb})B\diag(e^{-\yb}) 
    - U\rangle \nnb\\
&- \langle \bmu_s, \bs \rangle 
    + \langle \overline{\bmu}_p, \pb-\1 \rangle
    - \langle \bmu_y,\yb \rangle\nnb\\
&+ \langle \bsigma, \tb + U\1 
    - \boldsymbol{\lambda} - B\pb 
    - \boldsymbol{\alpha}\circ \bs - \boldsymbol{\delta} \rangle. \nnb
\end{align}
Since the problem is convex, the following 
Karush-Kuhn-Tucker (KKT) conditions are 
necessary and sufficient for optimality.
\begin{subequations}
\begin{eqnarray}
\0 \myeq \nabla_p L 
    = \cb - \bmu_p + \overline{\bmu}_p - B\T\bsigma \label{L_p}\\
\0 \myeq \nabla_t L 
    = \bsigma - \bmu_t \label{L_t}\\
0 \myeq \partial_{u_{ij}} L 
    = \sigma_i - \phi_{ij} \ \mbox{ for all }
        (i,j) \in \mathcal{E} 
    \label{L_u}\\
0 \myeq \partial_{y_i} L 
    = \mu_{t_i} \lambda_i e^{y_i^+} 
    - \mu_{p_i}e^{-y_i^+} -\mu_{y_i}
    + \sum_{j \in \mathcal{A}}\phi_{ij}b_{ij} 
        e^{y_i^+ \!-y_j^+} \nnb\\
&& \qquad
    - \sum_{j \in \mathcal{A}} \phi_{ji}b_{ji} 
        e^{y^+_j\!-y^+_i} \ \mbox{ for all } 
        i \in \mathcal{A} 
        \label{L_y}\\
\0 \myeq \bmu_t \circ (e^{\yb^+}-\tb^+) \label{slk_t}\\
0 \myeq \phi_{ij}(u^+_{ij} - b_{ij}e^{y^+_i-y^+_j})
    \ \mbox{ for all } (i,j) \in \mathcal{E}. 
    \label{slk_u}
\end{eqnarray}
\end{subequations}
From \eqref{L_t} and \eqref{L_u}, we get
\begin{equation}
\sigma_i = \mu_{t_i} = \phi_{ij} \ge 0 
    \ \mbox{ for all } (i,j) \in \mathcal{E} .
    \label{L_tu}
\end{equation}
Combining this and \eqref{L_y} yields 
\begin{eqnarray*}
&& \hspace{-0.2in}
\sigma_i (\lambda_i e^{y_i^+}          
    + \sum_{j}b_{ij}e^{y_i^+-y_j^+}) \\
\myeq \mu_{p_i}e^{-y_i^+} \!+\! \mu_{y_i} 
    + \sum_{j}\sigma_j b_{ji}e^{y_j^+ - y_i^+} \\
\mygeq (\mu_{p_i} + \sum_{j}\sigma_i b_{ji}) 
    e^{-y_i^+}
\geq (c_i+\overline{\mu}_{p_i})e^{-y_i^+} 
> 0,
\end{eqnarray*}
where the first inequality follows from $\bmu_y\ge \0$ and $\yb^+\ge \0$, and the second from~\eqref{L_p}. Thus,  $\bsigma > \0$, which together with \eqref{L_tu} and the slackness conditions \eqref{slk_t} and \eqref{slk_u} implies that \eqref{eqEXP_varopt} must hold.

\subsection{Equivalence of Relaxed Problems 
$({\rm P_{R1}})$ and $({\rm P_{R2}})$}

We will show that $({\rm P_{R1}})$ and $({\rm P_{R2}})$ are equivalent. Recall
\begin{subequations}
\beqa
\hspace{-0.4in}
(\mathrm{P_{R1}})\qquad  
\min_{\bs, \pb, \zb}
	&&  f(\bs, \pb) \lb
\mathsf{s.t.}&&\pb \ge  
	(\diag(\zb) - B)^{-1}\boldsymbol{\lambda} 
	\nnb\\
&& \zb = \boldsymbol{\alpha} \circ \bs 
	+ \boldsymbol{\delta} 
	+ \boldsymbol{\lambda} + B\pb \nnb\\
&& \bs \ge \0,\quad \pb \le \1,
	\quad \zb \in \Omega.
	\nonumber
\eeqa
\end{subequations}
Let $\vv := (\diag(\zb) - B)^{-1}\boldsymbol{\lambda}$. Then, we have 
\begin{equation}
(\diag(\zb) - B)\vv = \boldsymbol{\lambda}. 
    \label{eqP1_vv}
\end{equation}

Since $B$ is irreducible, Lemma~\ref{lem_M_Matrix}-(f) 
implies that if $\vv > \0$, we have $\zb 
\in \Omega$ because $\boldsymbol{\lambda} \gneq \0$. 
In addition, 
we can show that 
if $\zb \in \Omega$, then $(\diag(\zb) - B)^{-1} > 0$
and thus $\vv > \0$.\footnote{We can show that the inverse is 
strictly positive as follows: $\zb \in \Omega$ implies that $\diag(\zb) - B \!=\! \mu(I - A)$ for some 
$\mu \!>\!0$ and $A \!\in\! \R^{N \times N}_+$, which is irreducible 
with positive diagonal elements and satisfies $\rho(A) \!<\! 1$. Thus, 
$(\diag(\zb) - B)^{-1} 
\!=\! \mu^{-1}\!\sum_{k\ge 0}A^k$. Lemma 8.5.5 in 
\cite{horn85matrix} tells us that irreducibility implies $A^{N-1} \!>\! 0$, 
hence $\sum_{k \geq 0}\! A^k \!>\! 0$.
}
Thus, $\zb \in \Omega$ if and only if $\vv > \0$
and, when $\vv > \0$, \eqref{eqP1_vv} is equivalent 
to
\begin{align}
    \diag(\vv^{-1})(\diag(\zb) - B)\vv = \diag(\vv^{-1})\boldsymbol{\lambda}.\nnb
\end{align}
Rearranging the terms, we get
\begin{align}
\zb 
= \vv^{-1}\circ B\vv + \vv^{-1}\circ \boldsymbol{\lambda}.
    \label{eqP1_vv1}
\end{align}
Substituting the expressions for $\vv$ and $\zb$
in the first and second constraint, respectively, 
and replacing the constraint $\zb \in \Omega$
with $\vv > \0$,
we can reformulate $(\mathrm{P_{R1}})$ as follows.
\beqa
\hspace{-0.4in}
\min_{\bs, \pb, \vv}
	&&  f(\bs, \pb) 
	\label{eq_PR1_vv}\\
\mathsf{s.t.}&&\pb \ge  \vv  \nnb\\
&& \vv^{-1}\circ B\vv + \vv^{-1}\circ \boldsymbol{\lambda} = \boldsymbol{\alpha} \circ \bs 
	+ \boldsymbol{\delta} 
	+ \boldsymbol{\lambda} + B\pb\nnb\\
&& \bs \ge \0,\quad \pb \le \1,
	\quad \vv > \0.
	\nonumber
\eeqa

Next, we will show that this problem is equivalent to 
$(\mathrm{P_{R2}})$. Recall that, as we showed in the 
previous subsection, at any optimal point $\bx^+ 
= (\bs^+, \pb^+, \yb^+, \tb^+, U^+)$ of 
$(\mathrm{P_{R2}})$, equalities in \eqref{eqEXP_varopt} 
hold. Thus, $(\mathrm{P_{R2}})$ can be reduced to the 
following problem after eliminating $\tb$ and $U$. 
\begin{align}
\hspace{-0.5in}
\min_{\bs, \pb, \yb}&
\quad f(\bs, \pb) \label{eq_PR2_reduced}\\
\mathsf{s.t.} &\quad \pb \ge e^{-\yb} \nnb\\
&\quad \blambda\circ e^{\yb} 
    + e^{\yb} \circ B e^{-\yb} 
= \boldsymbol{\alpha} \circ \bs 
	\!+\! \boldsymbol{\delta} 	\!+\! \boldsymbol{\lambda} \!+\! B\pb\nnb\\
&\quad \bs \ge \0,\quad \pb \le \1,\quad \yb \ge \0. \nnb
\end{align}
Clearly, \eqref{eq_PR2_reduced} is identical 
to \eqref{eq_PR1_vv} after replacing $e^{-\yb}$
with $\vv$. This proves the equivalence of 
$({\rm P_{R1}})$ and $({\rm P_{R2}})$.

\section{A Proof of Theorem \ref{thmSCP_Cone_Exact}}
    \label{appen_exact_SCP_cone}

In order to prove the theorem, it suffices to 
show that at any optimal point 
$\bx^+= (\bs^+, \yb^+, \tb^+, U^+)$ of the
relaxed problem in \eqref{SCP_ExpC},  
the constraints \eqref{SCP_ExpC_t}
and \eqref{SCP_ExpC_U} are active. In 
other words, 
\begin{align} 
&\tb^+ = \blambda \circ e^{\yb^+}, \quad 
U^+ = \diag(e^{\yb^+})B\diag(e^{-\yb^+}).
    \label{SCP_ExpC_tU}
\end{align}
The proof is similar to that in 
Appendix~\ref{appen_exact_relax_cone}. 
Let $\bmu_s$, $\bmu_t\in \RN_+$ and $\Phi \in 
\mathbb{R}^{N\times N}_+$ be the Lagrange multipliers 
associated with inequality constraints
\eqref{SCP_ExpC_t} and \eqref{SCP_ExpC_U}, and 
$\boldsymbol{\sigma} \in \RN$ those associated with 
the equality constraint \eqref{SCP_ExpC_Linear} of 
the relaxed problem. The Lagrangian is
given by 
\begin{align}
L &=   w(\bs) + \langle\cb,e^{-\yb}\rangle - \langle \bmu_s, \bs \rangle 
+ \langle \bmu_t,\blambda\circ e^{\yb} - \tb \rangle \nnb\\
&\quad + \langle \Phi, \diag(e^{\yb})B^{(t)}\diag(e^{-\yb}) - U\rangle \nnb\\
&\quad + \langle \bsigma, \tb + U\1 
- \boldsymbol{\lambda} - B\pb^{(t)} - \boldsymbol{\alpha}\circ \bs - \boldsymbol{\delta} \rangle .
\end{align}
Since the problem is convex, the necessary and 
sufficient KKT conditions are given by the following:
\begin{subequations}
\begin{eqnarray}
\0 \myeq \nabla_t L 
    = \bsigma - \bmu_t 
    \label{SCP_L_t}\\
0 \myeq \partial_{u_{ij}} L 
    = \sigma_i - \phi_{ij} 
    \ \mbox{ for all } (i,j) \in \mathcal{E}    
    \label{SCP_L_u}\\
0 \myeq \partial_{y_i} L 
    = -c_ie^{-y_i^+} + \mu_{t_i} \lambda_i e^{y_i^+} 
    + \sum_{j} \phi_{ij} b^{(t)}_{ij}
    e^{y_i^+ - y_j^+}\nnb\\
&&\qquad
    - \sum_{j} \phi_{ji}b^{(t)}_{ji} e^{y^+_j - y^+_i}
    \ \mbox{ for all }  i \in \mathcal{A} 
    \label{SCP_L_y}\\
\0 \myeq \bmu_t \circ (e^{\yb^+}-\tb^+) 
    \label{SCP_slk_t}\\
0 \myeq \phi_{ij}(u^+_{ij} - b^{(t)}_{ij}e^{y^+_i-y^+_j}) 
    \ \mbox{ for all } (i,j) \in \mathcal{E}. 
    \label{SCP_slk_u}
\end{eqnarray}
\end{subequations}
From \eqref{SCP_L_t} and \eqref{SCP_L_u},
\begin{equation}
\sigma_i = \mu_{t_i} = \phi_{ij} \ge 0 
    \ \mbox{ for all } (i,j) \in \mathcal{E}.  
    \label{SCP_L_tu}
\end{equation}
By combining this and \eqref{SCP_L_y}, we obtain 
\begin{equation*}
\sigma_i \big[\lambda_i e^{y_i^+} 
    + \sum_{j} b^{(t)}_{ij} e^{y_i^+-y_j^+}\big] 
= c_i e^{-y_i^+} 
    + \sum_{j} \sigma_j b^{(t)}_{ji} e^{y_j^+ - y_i^+},
\end{equation*}
which is strictly positive because $c_i>0$. Thus,  
$\bsigma > \0$, which together with \eqref{SCP_L_tu} 
and the slackness conditions in \eqref{SCP_slk_t} and 
\eqref{SCP_slk_u}, implies that \eqref{SCP_ExpC_tU} 
must hold.

\section{\blue{Proof of Theorem~\ref{thmSolutionOfRelaxed}} } \label{ProofThmSolutionOfRelaxed}
Suppose $(\bs_L,\pb_L,\yb_L,U_L)$ is an optimal 
point of $({\rm P_{R3}})$, and let $C_L = w(\bs_L)$. 
There are two possibilities: $C_L = C$ or $C_L < C$.
We shall consider them separately below.

$\bullet$ {$C_L= C$:} 
In this case, we have
\begin{equation}
C^* - \epsilon = w(\bs_L) < f_L(C),  \label{eqFLC_lowerbound}
\end{equation}	
where the inequality follows from the assumption 
$C < C^*$ and, hence, $\pb_{\rm se}(\bs_L) > \0$. 
Since the relaxed problem  $({\rm P_{R3}}$) 
is convex, $f_L(C)$ is a convex 
function of $C$~\cite{boyd2004convex}. 
Thus, we have
\[
f_L(C) 
\leq \big( 1-\frac{ \epsilon}{C^*} \big) f_L^* 
	+ \frac{\epsilon}{C^*}f_L(0) 
\stackrel{\eqref{eqFLstar}}{\leq}  
f_0^* + \frac{\epsilon}{C^*} \big( f_L(0) 
	- f_L^* \big).
\]
Combining this with \eqref{eqFLC_lowerbound} yields
\begin{equation}
0 \stackrel{\eqref{eq:w*1}}{\leq} 
C^* - f_0^* 
\leq \epsilon + \frac{\epsilon}{C^*}
	\big( f_L(0) - f_L^* \big).  
	\label{eqCstarSubopt1}
\end{equation}	
Suppose that $\bs_0$ is an optimal point of 
\eqref{eq:rho0}. Then, \eqref{eqCstarSubopt1} 
implies that $\bs_0$ is $O(\epsilon)$-suboptimal 
for the original problem in \eqref{eqCase0_prob}. 

$\bullet$ {$ C_L< C$:} Since the constraint
$w(\bs_L) \leq C$ is inactive, it follows that 
$f_L(C_L) = f_L(C')$ for all $C' \in [C_L, C]$.
Because $C$ can be made arbitrarily close to 
$C^*$, this also tells us that $f_L(C_L) = f^*_L$, 
providing us with a following lower bound:
$f_L^* = f_L(C_L) \le f_0^*$.
This lower bound, together with an upper bound
we can obtain as explained in subsection
\ref{subsubsec:UpperBound}, can be used 
to quantify how close to optimal a feasible 
solution is.

\section{A Proof of Theorem~\ref{claim_suff_cond}}
    \label{appen:thm6}

We prove the first part by contradiction: 
suppose that there exists some $C < C^*$ such that, at an 
optimal point $(\bs_L,\pb_L,\yb_L,U_L)$, we have $C_L < C$. 
Consider 
\[ 
\yb' = \yb_L + \gamma\1, \ \ 
\pb'= e^{-\gamma}\pb_L, \ \  
\bs' = \bs_L + (1-e^{-\gamma})\boldsymbol{\alpha}^{-1}
    \circ B\pb_L
\]
for some $\gamma > 0$ such that $w(\bs') < C$. We can find
such positive $\gamma$ because $w$ is continuous and 
increasing, and $w(\bs_L) = C_L<C$. We can verify that  
the tuple $(\bs',\pb',\yb',U_L)$ is a feasible point of
the relaxed problem $({\rm P_{R3}})$. Moreover, 
\begin{align}
&f(\bs',\pb') - f(\bs_L,\pb_L) 
= w(\bs') - w(\bs_L) + \cb\T(\pb'-\pb_L) \nnb\\
&\le \nabla w(\bs')\T(\bs'- \bs_L) + (e^{-\gamma}-1)\cb\T\pb_L \nnb\\
&= (1-e^{-\gamma})\big(B\T \diag(\boldsymbol{\alpha}^{-1})
    \nabla w(\bs') - \cb\big)\T\pb_L < 0.\nnb
\end{align}
The second inequality holds because $\gamma >0$, 
$\pb_L > \0, \bs' \in \mathbb{S}_C$ and the condition \eqref{eqClaim_suff_cond} is true. 
But, this contradicts the assumed optimality of $(\bs_L,\pb_L,\yb_L,U_L)$. Hence, we have $C_L = C$. 
	
To prove the second part of theorem, 
assume that \eqref{eqClaim_suff_cond} holds for 
all $\bs \in \mathbb{S}_{C^*}$. Take any sequence of positive 
$\epsilon_k$, $k \in \N$, which satisfies $\lim_{k \to \infty}
\epsilon_k = 0$, and consider the sequence $\{C_k = 
C^*-\epsilon_k : k \in \N\}$. Because $\mathbb{S}_{C_k} 
\subset \mathbb{S}_{C^*}$, from \eqref{eqCstarSubopt}, 
we obtain
\beqan
C^* - f^*_0 
\leq \epsilon_k + \frac{\epsilon_k}{C^*} 
    \big( f_L(0) - f^*_L \big) 
    \ \mbox{ for all } k \in \N. 
\eeqan
Since the right-hand side goes to 0 as $k \to \infty$,
$f_0^* = C^*$ and $\bs_0$ is an optimal point 
of the original problem in \eqref{eqCase0_prob}.

\section{Reformulation of Problem \eqref{eq:rho0}
    As a Convex (Exponential Cone) Problem} \label{sec:ProbP0_Balancing}
Recall that the optimization problem is given 
by 
\begin{eqnarray*}
(\mathrm{P}_C) \hspace{0.3in}
\min_{\bs\in \mathbb{R}^N_+} &&  w(\bs) \\
\mathsf{s.t.} && \rho (\diag(\boldsymbol{\alpha}\circ \bs + \boldsymbol{\delta})^{-1}B) \le 1. 
\end{eqnarray*}
By Perron-Frobenius theory \cite{horn85matrix}, the above spectral radius constraint is equivalent to 
the existence of a positive vector $\bx$ such 
that $\diag(\boldsymbol{\alpha}\circ \bs + \boldsymbol{\delta})^{-1}B\bx \le \bx$.
With a change of variable $\yb = \log (\xx)$, problem $(\mathrm{P}_C)$ can be expressed as follows.
\begin{align}
\min_{\bs \ge \0,\by}&\quad   w(\bs) \label{eqObjectiveFun_02}\\
\mathsf{s.t.}&\quad \textstyle \sum_{j\in \mathcal{A}} b_{ij}e^{y_j-y_i} \le \alpha_is_i + \delta_i, \quad
\forall i \in \mathcal{A}.
\end{align}
This is a convex problem and can be turned into an exponential cone optimization problem as follows 
with $U = [u_{ij}]\in \mathbb{R}^{N\times N}$: 
\begin{eqnarray}
\min_{\bs\ge \0,\by,U} && \myb  w(\bs)
    \label{eq:ExpCone}\\
\mathsf{s.t.} && \myb  
    U\1 \le \boldsymbol{\alpha}\circ\bs 
    + \boldsymbol{\delta} \nonumber \\
&& \myb (u_{ij}, 1, y_i - y_j +\log b_{ij})
    \in \mathcal{K}_{\rm exp}, \quad
    \forall (i,j) \in \mathcal{E} 
    \nnb
\end{eqnarray}
This problem can be solved efficiently using convex 
solvers \cite{mosek,currie12opti}, provided that $N$ 
is not large.  
			
Below, we describe a first-order algorithm 
for solving the problem in \eqref{eq:ExpCone} 
based on our previous matrix balancing algorithm. This algorithm can deal with large-scale problems in a parallel and distributed fashion.  
For our discussion, without loss of generality, 
we assume $\boldsymbol{\alpha} = \1$. 
Let us consider the Lagrangian given by 
\begin{align}
L(\mathbf{y}, \mathbf{s}, \btheta) 
&= w( \bs) + \sum_{i\in \mathcal{V}} \theta_i \Big( \sum_{j\in \mathcal{V}} b_{i,j} e^{y_j-y_i} - s_i -\delta_i\Big) \nnb\\
&= w(\bs) - \btheta\T(\bs + \boldsymbol{\delta}) + \sum_{(i,j)\in \mathcal{E}}  \theta_i b_{i,j} e^{y_j-y_i},
\end{align}
where $\btheta \ge 0$ is a Lagrange multiplier vector. 
The dual function can be obtained from the
Lagrangian by solving the following problem:

\begin{align*}
g(\btheta) 
&:= \inf_{\mathbf{y}\in \mathbb{R}^N, \mathbf{s}\ge \0} 
    L(\mathbf{y}, \mathbf{s}, \btheta) \nnb\\
&= \inf_{\mathbf{s}\ge \0} w(\bs) - \btheta\T (\bs + \boldsymbol{\delta}) + \inf_{\mathbf{y}} \sum_{(i,j)\in \mathcal{E}}  \theta_i b_{i,j} e^{y_j-y_i}, 
\end{align*}
It is obvious that this problem can be decoupled into 
two subproblems -- one over $\bs$ and the other
over $\yb$ -- as follows.
\begin{align*}
\bs(\boldsymbol{\theta}) 
&= \arg\inf_{\mathbf{s}\ge \0} w(\bs) - \boldsymbol{\theta}\T \bs\\
\yb(\boldsymbol{\theta}) &= \arg\inf_{\mathbf{y}} \sum_{i, j\in \mathcal{E}}  
    \theta_i b_{i,j} e^{y_j-y_i}
\end{align*}
Note that the second problem of finding 
$\yb(\boldsymbol{\theta})$ is a problem of balancing
the matrix $\diag(\boldsymbol{\theta})B$ and can be solved 
efficiently. 

The dual function is now equal to 
\[
g(\btheta)
=  w(\bs(\btheta)) \!-\! \btheta\T (\bs(\btheta) + \boldsymbol{\delta}) 
\!+\!\! \sum_{(i,j)\in \mathcal{E}}  \theta_i b_{ij} e^{y_j(\btheta)-y_i(\btheta)},
\]
and the dual problem given by
\begin{align*}
\max_{\btheta \ge \0} ~~~ g(\btheta) 
\end{align*}
is a convex problem. 
Suppose that the Slater's condition holds for problem 	\eqref{eqObjectiveFun_02} so that we have strong duality. 
Then, we can solve \eqref{eqObjectiveFun_02} by solving 
the dual problem instead, for example, using a 
first-order method with subgradients. 


\section{A Proof of the continuity of Stable
Equilibrium $\pb_{\rm se}$}
    \label{appen:Continuity}
    
We will prove the claim by contradiction. 
Suppose that $\pb_{\rm se}$ is not continuous
at some point $\bs' \in \R^N_+$. 
Then, there exists a sequence
$\bs_k$, $k \in \N$, such that 
$\bs_k \to \bs'$ as $k \to \infty$, but either
(a) $\lim_{k \to \infty} \pb_{\rm se}(\bs_k)$ does 
not exist or (b) the limit exists but is not equal 
to $\pb_{\rm se}(\bs')$. When the limit 
does not exist, we can work with a convergent
subsequence instead; the existence of a convergent
subsequence of $\pb_{\rm se}(\bs_k)$, 
$k \in \N$, is guaranteed by the 
Bolzano-Weierstrass Theorem~\cite{Royden-RealAnalysis}. 
For this reason, without loss of generality, we 
assume $\lim_{k \to \infty} \pb_{\rm se}(\bs_k)$ 
exists and denote it by $\pb^\dagger$. 

We consider the following two cases separately. 
Note that $\pb_{\rm se}(\bs)$ cannot have
both positive and zero entries. 

{\bf Case 1: $\pb_{\rm se}(\bs') > {\bf 0}$ --}
Recall from subsection~\ref{subsec:Model} that 
when $\boldsymbol{\lambda} = \0$, 
\beqan
\gb(\bs, \pb)
= ({\bf 1} - \pb) \circ B \pb
    - ( \boldsymbol{\alpha} \circ \bs
    + \boldsymbol{\delta}) \circ \pb .
\eeqan
Since $\gb(\bs', \pb_{\rm se}(\bs'))
= {\bf 0}$, the implicit function 
theorem \cite[Theorem 3]{Halkin74-SC} 
tells us that, for any set $\mathbb{P}'$
containing $\pb_{\rm se}(\bs')$, we can 
find a set $\mathbb{S}'$ 
containing $\bs'$ and a function 
$\psi:\mathbb{S}' \to \mathbb{P}'$
such that (a) $\psi(\bs') = \pb_{\rm se}(\bs')$, 
(b) $\gb(\bs, \psi(\bs)) = {\bf 0}$ 
for all $\bs \in \mathbb{S}'$, and 
(c) $\psi$ is continuous at $\bs'$. Since
there is at most one positive stable equilibrium
satisfying $\gb(\bs, \pb) = {\bf 0}$, this implies
$\psi(\bs) = \pb_{\rm se}(\bs) > {\bf 0}$ for all 
$\bs$ in a sufficiently small neighborhood around
$\bs'$ and, hence, 
$\pb^\dagger = \pb_{\rm se}(\bs')$, which 
is a contradiction.

{\bf Case 2: $\pb_{\rm se}(\bs') = {\bf 0}$ --}
In this case, since there is a unique solution 
to $\gb(\bs', \pb) = {\bf 0}$, 
$\pb^\dagger$ must be equal to ${\bf 0}$ from 
the continuity of $\gb$, 
which contradicts the earlier assumption 
that $\pb^\dagger \neq \pb_{\rm se}(\bs') = {\bf 0}$.

\section{A Description of the Setup for Numerical
    Studies}
\label{appen:NumericalSetup}

The setup for running our algorithms is as follows. 

\noindent $\bullet$ \underline{M-matrix + OPTI :}
In line 3 of Algorithm 1 for solving 
$(\rm{P_{R1}})$ with $\Omega$ replaced by
$\tilde{\Omega}(\zb^{(t)})$, 
we use a general interior point optimizer  
from package \cite{currie12opti} 
with relative convergence tolerance set to $10^{-5}$ 
and Hessian matrices approximated by a quasi-Newton algorithm. 
The solver initial point 
$\tilde{\bx}_{\rm{R}}^{(0)} 
\!=\! (\tilde{\bs}_{\rm{R}}^{(0)}, 
\tilde{\pb}_{\rm{R}}^{(0)}, \tilde{\zb}_{\rm{R}}^{(0)})$
is chosen to be $\tilde{\bs}_{\rm{R}}^{(0)} \!=\! \0$, 
$\tilde{\pb}_{\rm{R}}^{(0)} \!=\! \pb^*(\0)$ using the
iteration in \eqref{eqFixedPointEquilibrium}, and 
$\tilde{\zb}_{\rm{R}}^{(0)} = \boldsymbol{\lambda}
+ \boldsymbol{\delta} + B \tilde{\pb}_{\rm{R}}^{(0)}$
according to \eqref{eqLinear_zsp}. When computing
$\tilde{\pb}_{\rm{R}}^{(0)}$, the iteration
in \eqref{eqFixedPointEquilibrium} is run until either 
$\lnorm \pb_{k+1} - \pb_{k} \rnorm /
\lnorm \pb_k \rnorm \leq 10^{-7}$ or $k=500$. In addition, 
we approximate the set of active constraints
of $\tilde{\zb}_{\rm{R}}$ (line 4 of Algorithm 1)
using $\mathcal{I}_{ac} = \{i \in 
\mathcal{A} \ | \ \left[\tilde{\zb}_{\rm{R}} \right]_i
- [\zb]_i \leq 10^{-3} \}$, and select $\bar{h}
= 10$.  It is important to note that 
the OPTI package \cite{currie12opti} does not exploit/support multithreading; 
this is also one of the disadvantages compared to MOSEK.

\vspace{1ex}
\noindent $\bullet$ \underline{Exp-cone + MOSEK :} 
We set the relative gap tolerance of the the interior 
point optimizer in MOSEK to be $10^{-5}$ and do not 
use MOSEK's presolve procedure as it is time-consuming 
for large networks. Moreover, by default, the 
interior-point optimizer in MOSEK is parallelized and 
automatically exploits a maximum number of threads.

\vspace{1ex}
\noindent $\bullet$ \underline{RGM :} 
For Algorithm~\ref{algReduceGrad}, we select $(\bs^{(0)}, 
\pb^{(0)}) = (\0, \pb^*(\0))$ as a feasible initial 
point\footnote{If $(\tilde{\bs}, \tilde{\pb}) \!=\! (\tilde{\bs}(\bx^*_{\rm{R}}), 
	\tilde{\pb}(\bx^*_{\rm{R}}))$ given in Theorem~\ref{thmConvexRelaxSolution} is available, it 
	can be used as an initial point. Here, we choose $(\0, 
	\pb^*(\0))$ to be the initial point for numerical 
	comparisons.}  
and stop the algorithm whenever
$
\frac{\|\bs^{(t+1)} - \bs^{(t)}\| }
{ \| \bs^{(t)} \| }
\le 10^{-6}
\mbox{ or }
\frac{|F(\bs^{(t+1)}) - F(\bs^{(t)})| }
{ F(\bs^{(t)}) }
\le 10^{-8}.
$
We compute $\ub$ in line~4 using the fixed point  
iteration in \eqref{eqFixedPointGradient} and 
$\pb^*(\bs^{(t+1)})$ in line~7 using the iteration in \eqref{eqFixedPointEquilibrium}, with stopping
conditions
$
\frac{\|\ub_{k+1} - \ub_{k} \|}{\|\ub_{k}\|} 
\le 10^{-7}
\mbox{ and } 
\frac{\|\pb_{k+1} - \pb_{k} \|}{\|\pb_{k}\|} 
\le 10^{-7}.
$
We employ the backtracking line search algorithm  with $\gamma_0 = 0.5$, a shrinking factor
of 0.85, and the Armijo condition parameter set to $10^{-4}$.

\vspace{1ex}
\noindent $\bullet$ \underline{K-Exp SCP :}
We note that for conic optimization problems, MOSEK 
currently offers only an interior-point type optimizer 
that cannot take advantage of a previous optimal solution 
\cite{mosek}, requiring a cold-start at each outer 
iteration. The relative gap tolerance of the interior 
point optimizer in MOSEK is set to be $10^{-5}$, and 
the presolve 
procedure was ignored. The relative tolerance errors for 
the outer updates are such that $\frac{\|\bs^{(t+1)} 
- \bs^{(t)}\| }
{ \| \bs^{(t)} \| }
\le 10^{-6}
\mbox{ or }
\frac{|F(\bs^{(t+1)}) - F(\bs^{(t)})| }
{ F(\bs^{(t)}) }
\le 10^{-6}.$

\vspace{1ex}
\noindent $\bullet$ \underline{M-matrix SCP :}
We use Algorithm~\ref{algSCP} in combination with the
OPTI package. We approximate $\Omega^{(t)}$ by 
$\Omega^{(t)}_{\rm mb} = \big\{\bs \in \Rn_+| 
\bs \ge \max\{\underline{\bs}, \boldsymbol{\delta}+ \boldsymbol\lambda +  B\pb^{(t)}\} \big\}$, where 
\[
\underline{\bs} = \arg\min_{\bs \in \Rn_+} \big\{ \1\T\sbold ~|~\underline{\sigma}\big(\diag(\boldsymbol{\alpha}\circ\bs) - \diag(1-\pb^{(t)})B \big) = 0 \big\}
\]
which can be converted into a matrix balancing problem. 
Here, we set the relative gap tolerance of the interior 
point optimizer in OPTI to be $10^{-5}$ and the outer 
tolerance to be the same as in \mbox{\tt K-Exp SCP}. 
We use a warm-start at each outer iteration, except for 
the first one. 
    
\vspace{1ex}
\noindent
\blue{$\bullet$ \underline{MATLAB {\tt fmincon}}:
We attempted to use the built-in {\tt fmincon} function in 
MATLAB to solve $({\rm P})$ directly in both variables $(\bs, \bp)$, but found it inefficient compared to the following form:  
$$\min_{\pb\in [0,1]^N}~	\{ w(\bs(\pb)) + \cb\T\pb~|~\bs(\pb) \ge \0 \},$$
where $\bs(\pb) \!=\! (\pb^{-1}\!-\!\1)\circ (\boldsymbol{\lambda} \!+\! B\pb) \!-\! \boldsymbol{\delta}$,  obtained from \eqref{eqSteadyState}. We used {\tt sqp} and {\tt interior-point} algorithms to solve this problem with constraint and 
optimality tolerances set to 
$10^{-5}$.
}
\end{appendices}

\end{document}